\documentclass[12pt]{article}
\usepackage[lmargin=1in,rmargin=1in,tmargin=1in,bmargin=1in]{geometry}

\usepackage{times}
\usepackage{soul}
\usepackage{url}
\usepackage[hidelinks]{hyperref}
\usepackage[utf8]{inputenc}
\usepackage[small]{caption}
\usepackage{graphicx}
\usepackage{amsmath}
\usepackage{amsthm}
\usepackage{booktabs}
\urlstyle{same}


\usepackage[utf8]{inputenc}
\usepackage{pifont}
\newcommand{\cmark}{\ding{51}}%
\newcommand{\xmark}{\ding{55}}%
\usepackage{booktabs}
\usepackage{nicefrac}

\usepackage[usenames,svgnames,table]{xcolor}
\usepackage{hyperref}[backref=page] 
\hypersetup{
colorlinks   = true,
linkcolor    = Red, 
urlcolor     = Blue, 
citecolor    = Blue 
}

\usepackage[capitalise,noabbrev]{cleveref}
\Crefname{remark}{Remark}{Remarks}
\Crefname{rmk}{Remark}{Remarks}
\Crefname{dfn}{Definition}{Definitions}
\Crefname{thm}{Theorem}{Theorems}
\Crefname{cor}{Corollary}{Corollaries}
\Crefname{lem}{Lemma}{Lemmas}
\Crefname{examplex}{Example}{Examples}
\Crefname{prop}{Proposition}{Propositions}

\usepackage{setspace}
\onehalfspacing
\usepackage{amsthm}
\usepackage{amsfonts}
\usepackage{amssymb}
\usepackage{authblk}
\usepackage{enumitem}
\usepackage{etoolbox}
\usepackage{mathtools}
\usepackage{multirow}
\usepackage{tabularx}
\usepackage{thmtools}
\usepackage{thm-restate}
\usepackage[textsize=footnotesize,color=red!25,bordercolor=red]{todonotes}
\usepackage{wrapfig}
\usepackage{bigdelim}

\usepackage{tikz}
\usetikzlibrary{fit,calc, shapes.misc, positioning}
\usepgflibrary{arrows}

\colorlet{mygray}{gray!40}



\usepackage[linesnumbered,lined,boxed,ruled,vlined]{algorithm2e} 

\SetKwInput{Parameters}{Parameters}
\SetKwComment{Comment}{$\triangleright$\ }{}
\SetAlFnt{\small}
\SetAlCapFnt{\small}
\SetAlCapNameFnt{\small}
\SetAlCapHSkip{0pt}
\IncMargin{-\parindent}

\makeatletter
\renewcommand{\paragraph}{%
 \@startsection{paragraph}{4}%
 {\z@}{1.1ex \@plus 1ex \@minus .2ex}{-1em}%
 {\normalfont\normalsize\bfseries}%
}
\let\oldnl\nl
\newcommand{\nonl}{\renewcommand{\nl}{\let\nl\oldnl}}
\makeatother


\declaretheorem[name=Lemma]{lem}

\declaretheorem[name=Remark]{remark}
\newtheorem{example}{Example}


\newcommand{\ml}{{\mathcal L}}

\renewenvironment{example}{\pushQED{\qed}\examplex}{\popQED\endexamplex}
\renewcommand{\>}{{\succ}}
\newcommand{\HRTC}{\textup{\textsc{Hypergraph Rainbow $3$-Colorability}}}
\newcommand{\id}{\rhd}
\newcommand{\MMS}{\textrm{\textup{MMS}}}

\newcommand{\NP}{\textrm{\textup{NP}}}
\newcommand{\NPc}{\textrm{\textup{NP-c}}}
\newcommand{\NPC}{\textrm{\textup{NP-complete}}}

\newcommand{\NPH}{\textrm{\textup{NP-hard}}}
\newcommand{\EF}[1]{\ifstrempty{#1}{\textrm{\textup{EF}}}{\textrm{\textup{EF{$#1$}}}}}
\newcommand{\EFX}{\textrm{\textup{EFX}}}
\newcommand{\EFXc}{\textrm{\textup{EFX-c}}}
\newcommand{\EFXg}{\textrm{\textup{EFX-g}}}

\newcommand{\RM}{\textrm{\textup{RM}}}

\newcommand{\PO}{\textup{PO}}

\newcommand{\Polytime}{\textup{P}}

\newcommand{\ThreeSAT}{\textup{\textsc{3-SAT}}}
\newcommand{\TTTSAT}{\textup{\textsc{(2/2/3)-SAT}}}


\colorlet{mycolor}{blue!25}
\colorlet{opencolor}{black}

\usepackage{natbib}
\setcitestyle{square,aysep={},yysep={;}}
\let\citeauthor\citet
\let\cite\citep

\pdfinfo{
/Title (Fairly Dividing Mixtures of Goods and Chores under Lexicographic Preferences)
/Author (Hadi Hosseini, Sujoy Sikdar, Rohit Vaish, and Lirong Xia)
/TemplateVersion (2022.1)
}
\setcounter{secnumdepth}{2} 

\title{Fairly Dividing Mixtures of Goods and Chores\\ under Lexicographic Preferences}

\author[1]{Hadi Hosseini}
\author[2]{Sujoy Sikdar}
\author[3]{Rohit Vaish}
\author[4]{Lirong Xia}
\affil[1]{Pennsylvania State University\\
	{\small\texttt{hadi@psu.edu}}}
\affil[2]{Binghamton University\\
	{\small\texttt{ssikdar@binghamton.edu}}}
\affil[3]{Indian Institute of Technology Delhi\\
	{\small\texttt{rvaish@iitd.ac.in}}}
\affil[4]{Rensselaer Polytechnic Institute\\
	{\small\texttt{xial@cs.rpi.edu}}}

\sloppy

\begin{document}

\maketitle

\begin{abstract}
We study fair allocation of indivisible goods and chores among agents with \emph{lexicographic} preferences---a subclass of additive valuations. In sharp contrast to the goods-only setting, we show that an allocation satisfying \emph{envy-freeness up to any item} (\EFX{}) could fail to exist for a mixture of \emph{objective} goods and chores. 
To our knowledge, this negative result provides the \emph{first} counterexample for \EFX{} over (any subdomain of) additive valuations. To complement this non-existence result, we identify a class of instances with (possibly subjective) mixed items where an \EFX{} and Pareto optimal allocation always exists and can be efficiently computed. When the fairness requirement is relaxed to \emph{maximin share} (\MMS{}), we show positive existence and computation for \emph{any} mixed instance. More broadly, our work examines the existence and computation of fair and efficient allocations both for mixed items as well as chores-only instances, and highlights the additional difficulty of these problems vis-{\`a}-vis their goods-only counterparts.
\end{abstract}

\section{Introduction}

Fair division of indivisible items encompasses a wide array of real-world applications ranging from inheritance division~\cite{BT96fair}, allocation of public housing units~\cite{BCH+20price} and course allocation~\cite{B11combinatorial}, to 
distributing medical equipment and human resources~\cite{pathak2021fair,babus2020optimal,aziz2021efficient}. These applications often require dealing with resources that can simultaneously be seen as \emph{goods} by some agents---generating positive utility---and \emph{chores} by others---generating negative utility. For example,  medical supplies such as vaccines or ventilators~\cite{pathak2021fair} may result in negative utilities for some regions due to storage or maintenance costs, while being generally seen as positively valued resources by others.

A standard solution concept in the study of fairness is \emph{envy-freeness}~\cite{george1958puzzle,foley1967resource}, which requires that no agent prefers another agent's allocation to its own. 
With indivisible (or discrete) resources, envy-freeness cannot always be guaranteed, motivating the study of its 
relaxations. 
A well-studied, and arguably most desirable, relaxation is \emph{envy-freeness up to any item} (\EFX{}),  which states that any pairwise envy can be eliminated by removing \emph{any} item considered as a good from the envied agent's bundle and \emph{any} item seen as a chore from the envious agent's bundle~\cite{CKM+19unreasonable,ACI+19fair}.

For goods-only problems with additive valuations, the existence and computation of an \EFX{} allocation---except for a few special cases~\cite{PR20almost,M21extension,CGM20efx}---remains open. Moreover, \EFX{} in known to be incompatible with well-studied notions of economic efficiency such as \emph{Pareto optimality}~(\PO{}). 
For chores-only problems or those involving a mixture of goods and chores, little is known about the existence and computation of \EFX{}. Complicating matters further, many of the axiomatic and algorithmic techniques from the goods-only setting do not carry over to mixed items~\cite{bogomolnaia2017competitive,ACI+19fair}.

One plausible approach for tackling such challenging problems is \emph{domain restriction}. This approach has been widely adopted in the computational social choice literature \cite{ELP16preference,FLS18complexity,HL19multiple} to investigate structural and computational boundaries of collective decision-making. 
In this vein, we focus on \emph{lexicographic preferences} as a subdomain of additive preferences to study existential and computational questions regarding \EFX{} allocations. 
Lexicographic preferences provide a succinct language for representing complex preferences~\cite{saban2014note,LMX18voting}, and have been widely-studied in psychology~\cite{GG96reasoning}, machine learning~\cite{SM06complexity}, and social choice theory~\cite{T70problem}.

Restricting preferences to the lexicographic domain has already proven fruitful for goods-only instances. Indeed, \citeauthor{HSV+21fair} have studied the goods problem and showed that under lexicographic preferences, an \EFX{} and \PO{} allocation always exists, can be computed efficiently, and can be further characterized alongside strategyproofness and other desirable properties.
Despite these positive results, the chores-only and the mixed-item problems have largely remained unexplored due to several additional challenges that we describe next.

\paragraph{Mixed Items:}
The allocation of mixed items differs drastically from its goods-only counterpart. 
First, for indivisible goods, several well-studied variations of \emph{picking sequences}~\cite{IJCAI113095,Aziz15:Possible,HL19multiple,beynier2019efficiency} satisfy \EFX{} under lexicographic preferences \cite{HSV+21fair}.
However, for mixed items, these variants may violate \EFX{} even for two agents with lexicographic preferences, as we illustrate in \cref{example:mixed_items}.
Second, for goods-only instances, a picking sequence implies Pareto optimality under lexicographic preferences \cite{HSV+21fair}. In contrast, when dealing with mixed items, sequencibility does not imply Pareto optimality, which leads to additional challenges in designing algorithms for mixed items.

\begin{example}\label{example:mixed_items}\em
Suppose there are two agents $1,2$ and four items $o_1,\dots,o_4$. Each agent's importance ordering~$\id$ over the individual items is as shown below:
\begin{align*}
1: ~&~ \underline{o_1^{-}}  \ \id \ \underline{o_2^+} \ \id \ o^+_3 \ \id \ o^+_{4}\\ \nonumber
2: ~&~ {o^+_2} \ \id \ o^{-}_{1} \ \id \ \underline{o^+_3} \ \id \ \underline{o_{4}^{-}}  
\end{align*}
The superscripts $+$ or $-$ denote whether the agent considers the item to be a good or a chore, respectively. Thus, the item $o_4$ is a good for agent $1$ but a chore for agent $2$, while $o_2$ and $o_3$ are ``common goods'' and $o_1$ is a ``common chore''. Thus, the instance contains \emph{subjective} mixed items because $o_4$ is a good for agent 1 but a chore for agent 2.

An agent's preference over bundles of items is given by the lexicographic extension of its importance ordering $\id$ as follows:
Agent $1$ prefers any bundle that does not contain the chore $o_1$ (including the empty bundle) to any bundle that does, subject to which it prefers any bundle containing the good $o_2$ to any bundle that does not, and so on. Similarly, agent $2$ prefers any bundle containing the good $o_2$ to any other bundle that does not, subject to which any bundle without the chore $o_1$ is preferred over any bundle with it, and so on.

Consider a picking sequence $1221$ wherein agent $1$ picks its favorite item first, followed by back-to-back turns 
for agent $2$ to pick its favorite remaining item, before agent $1$ 
picks the leftover item. The allocation induced by this sequence is underlined: First, agent $1$ picks $o_2$ (its favorite good), followed by agent $2$ picking $o_3$ (its favorite remaining good) and then $o_4$ (the chore it dislikes less between $o_1$ and $o_4$), and finally agent $1$ is left to pick its most-disliked chore $o_1$.

It is easy to verify that this allocation is neither \EFX{} nor Pareto optimal. Indeed, agent $2$ continues to envy agent $1$ even after the perceived chore $o_4$ is removed from its own bundle. Moreover, the above allocation is Pareto dominated by an allocation that gives all items to agent $2$.
\end{example}

\begin{table*}[t]
\centering
\scriptsize
\begin{tabular}{|cl|cc|ll|ll|l|}
 \hline
 \multicolumn{2}{|c|}{\multirow{2}{*}{\textbf{Guarantee(s)}}} & \multicolumn{2}{c|}{\textbf{Goods}} & \multicolumn{2}{c|}{\textbf{Chores}} & \multicolumn{2}{c|}{\textbf{Mixed Items}} & \multicolumn{1}{c|}{\multirow{2}{*}{\textbf{Special Cases}}}\\
 \cline{3-8}
 & & existence & computation & existence & computation & existence & computation &\\
  \hline
  & \EF{} & \xmark{} & \Polytime{}$^\dagger$ & \cellcolor{mycolor}\xmark{} & \cellcolor{mycolor}\NPc{} (Thm.~\ref{thm:EF_Chores_NPC}) & \xmark{} & \cellcolor{mycolor}\NPc{} (Thm.~\ref{thm:EF_Chores_NPC}) & \\
  & \EFX{} & \cmark{} & \Polytime{}$^\dagger$ & \cellcolor{mycolor}\cmark{} & \cellcolor{mycolor}\Polytime{} (Prop.~\ref{prop:EFX+PO_Chores}) & \cellcolor{mycolor}\xmark~(Thm.~\ref{thm:EFX_counterexample}) & {\color{opencolor} Open} & \cellcolor{mycolor}\cmark{}  Thm.~\ref{thm:EFX_Mixed_TopItem}; Cor. \ref{cor:EFX_Mixed_special}\\
  & \EF{1} & \cmark{} & \Polytime{}$^\mathsection$ & \cmark{} & \Polytime{}$^\mathsection$ & \cmark & \Polytime{}$^\mathsection$ &\\
  & \MMS{} & \cmark{} & \Polytime{}$^\dagger$ & \cellcolor{mycolor}\cmark & \cellcolor{mycolor}\Polytime{} (Prop.~\ref{prop:MMS+PO_Chores}) & \cellcolor{mycolor}\cmark & \cellcolor{mycolor}\Polytime{} (Thm.~\ref{thm:MMS_PO_Polytime_Mixed_Items}) & \\
  \hline
  %
  %
  \ldelim\{{4}{18pt}[\PO{} +] & \EF{} & \xmark{} & \Polytime{}$^\dagger$ & \cellcolor{mycolor}\xmark{} & \cellcolor{mycolor}\NPc{} (Cor.~\ref{cor:EF+PO_Chores_NPC}) & \xmark{} & \cellcolor{mycolor}\NPc{} (Cor.~\ref{cor:EF+PO_Chores_NPC}) & \\
  & \EFX{} & \cmark{} & \Polytime{}$^\dagger$ & \cellcolor{mycolor}\cmark{} & \cellcolor{mycolor}\Polytime{} (Prop.~\ref{prop:EFX+PO_Chores}) & \cellcolor{mycolor}\xmark~(Thm.~\ref{thm:EFX_counterexample}) & {\color{opencolor} Open} & \cellcolor{mycolor} \\
  & \EF{1} & \cmark{} & \Polytime{}$^\mathsection$ & \cmark{} & \Polytime{}$^\mathsection$ & {\color{opencolor} Open} & {\color{opencolor} Open} & \cellcolor{mycolor}\\
  & \MMS{} & \cmark{} & \Polytime{}$^\dagger$ & \cellcolor{mycolor}\cmark & \cellcolor{mycolor}\Polytime{} (Prop.~\ref{prop:MMS+PO_Chores}) & {\color{opencolor} Open} & {\color{opencolor} Open} & \multirow{-3}{*}{  \cellcolor{mycolor}\cmark{} Thm.~\ref{thm:EFX_Mixed_TopItem}; Cor. \ref{cor:EFX_Mixed_special}}\\
  \hline
  %
  %
  \ldelim\{{4}{20pt}[\RM{} +] & \EF{} & \xmark{} & \Polytime{}$^\dagger$ & \cellcolor{mycolor}\xmark{} & \cellcolor{mycolor}\NPc{} (Thm.~\ref{thm:EF_RM_NP-complete_Chores}) & \xmark{} & \cellcolor{mycolor}\NPc{} (Thm.~\ref{thm:EF_RM_NP-complete_Chores}) & \\
  & \EFX{} & \xmark{} & \NPc{}$^\dagger$ & \cellcolor{mycolor}\xmark & \cellcolor{mycolor}\NPc{} (Thm.~\ref{thm:EFX_RM_NP-complete_Chores}) & \xmark{} & \NPc{}$^\dagger$ & \\
  & \EF{1} & \xmark{} & \NPc{}$^\dagger$ & \cellcolor{mycolor}\xmark{} & \cellcolor{mycolor}\NPc{} (Thm.~\ref{thm:EF1_RM_NP-complete_Chores}) & \xmark{} & \NPc{}$^\dagger$ & \\
  & \MMS{} & \xmark{} & \Polytime{}$^\dagger$ & \cellcolor{mycolor}\xmark{} (Eg.~\ref{eg:MMS_RM_NonExistence}) & \cellcolor{mycolor}\Polytime{} (Thm.~\ref{thm:MMS_RM_Polytime_Chores}) & \xmark & {\color{opencolor} Open} & \\
  \hline
\end{tabular}
\caption{Summary of results for lexicographic preferences. For existence results, a \cmark{} indicates guaranteed existence while a \xmark{} indicates that existence might fail (even for  \textit{objective} instances for mixed items). For computational results, \Polytime{} and \NPc{} refer to polynomial time and \NPC{}, respectively. Results marked by $\dagger$ follow from \citeauthor{HSV+21fair}, and those with $\mathsection$ follow from \citeauthor{ACI+19fair}. Our contributions are highlighted by shaded boxes.}
\label{tab:Results}
\end{table*}

\paragraph{\bf Contributions.}
We undertake a systematic examination of the existential and computational boundaries of fair division under lexicographic preferences. The key conceptual takeaway from our work is that the mixed items problem can be significantly more challenging---both structurally and computationally---than its goods-only counterpart. Below we list some important points of distinction between these models that emerge from our study (also see \Cref{tab:Results}).

\begin{itemize}[leftmargin=*]
    \item \textbf{Envy-freeness}: We show that determining the existence of an envy-free allocation is \NPC{} even for lexicographic chores-only instances (\Cref{thm:EF_Chores_NPC}). By contrast, the goods-only problem is known to admit a polynomial-time algorithm~\cite{HSV+21fair}. Since lexicographic preferences are a subclass of additive valuations, our result extends to the latter domain and strengthens the known hardness results for this problem.
    
    \item \textbf{\EFX{}}: Our main result is that an \EFX{} allocation can fail to exist even for instances with \emph{objective} mixed items (i.e., where each item is either a good for all agents or a chore for all agents)  under lexicographic preferences (\Cref{thm:EFX_counterexample}). This result provides the \emph{first} counterexample for \EFX{} over (any subdomain of) additive valuations (\Cref{cor:EFXNonExistenceAdditive}), and complements the ongoing research effort in understanding the existence of such solutions. By contrast, an \EFX{} allocation always exists for goods~\cite{HSV+21fair}, and we show a similar positive result for the chores-only problem~(\Cref{prop:EFX+PO_Chores} in the appendix).
    
    \item \textbf{\EFX{} and \PO{}}: Given the failure of existence of \EFX{} (and thus \EFX{}+\PO{}) allocations even for objective mixed items, we identify a natural domain restriction where \EFX{}+\PO{} allocations are guaranteed to exist even with \emph{subjective} mixed items and are efficiently computable (\Cref{thm:EFX_Mixed_TopItem}). Notably, our algorithm returns \PO{} solutions despite the failure of the equivalence between \PO{} and sequencibility for mixed items as we observed in \Cref{example:mixed_items}.
    
    \item \textbf{\MMS{}}: When \EFX{} is further weakened to \emph{maximin share} (\MMS{}), we show universal existence and efficient computation for \emph{any} mixed instance~(\Cref{thm:MMS_PO_Polytime_Mixed_Items}).
\end{itemize}

In addition, we consider other notions of fairness and efficiency (such as \EF{1} and rank-maximality) and paint a comprehensive picture of the existential and computational landscape of fair division under lexicographic preferences.

\section{Related Work}

For the goods-only setting with additive valuations, an \EF{1} allocation can be computed through the round-robin algorithm~\cite{CKM+19unreasonable} or the envy-cycle elimination algorithm~\cite{LMM+04approximately}. More importantly, \EF{1} is compatible with \PO{}~\cite{CKM+19unreasonable} and can be computed in pseudo-polynomial time~\cite{BKV18finding}. For the stronger property of \EFX{}, its existence under additive valuations has been an open question. Unfortunately, \EFX{} is known to also be incompatible with \PO{} when valuations are non-negative~\cite{PR20almost}. These negative results no longer hold when the preference domain is restricted to lexicographic preferences. In this domain, not only does an \EFX{} and \PO{} allocation always exist, but one can also be computed efficiently. Furthermore, there is a family of algorithms that can guarantee \EFX{} and \PO{} alongside strategyproofness and other desirable properties~\cite{HSV+21fair}. By contrast, under additive valuations, achieving strategyproofness together with \EF{1} is known to be impossible even for two agents~\cite{amanatidis2017truthful}.

Guaranteeing fairness and efficiency becomes more challenging than its goods-only counterpart when some of the items are chores. For mixed items, \citeauthor{ACI+19fair} showed that with additive valuations, an \EF{1} allocation can still be computed efficiently by the double round-robin algorithm. The (non-)existence of \EFX{} allocations with additive valuations, on the other hand, has been an open question, which we answer in this paper. However, \PO{} seems harder to guarantee together with \EF{1}, and it is not known whether such allocations exist for three or more agents. A notable exception is the chores-only problem with bivalued additive valuations, where an \EF{1} and \PO{} allocation can be computed in polynomial time~\cite{ebadian2022fairly,garg2022fair}.

With additive valuations, an \MMS{} allocation could fail to exist for both the goods-only setting~\cite{KPW18fair} and the chores-only setting~\cite{aziz2017algorithms}. This has given rise to several cardinal \cite{garg2020improved,ghodsi2018fair,aziz2017algorithms} and ordinal \cite{hosseini2021guaranteeing,babaioff2019fair} approximation techniques. For goods-only and chores-only problems with additive valuations, \MMS{} allocations are only known to always exist for restricted domains such as personalized bivalued valuations, and allocations that are \MMS{} and \PO{} can be computed in polynomial time under the restrictions of factored bivalued valuations and weakly lexicographic valuations (allowing for ties between items) \cite{ebadian2022fairly}. However, these results do not imply existence for setting with mixed items, and the constant approximations of \MMS{} may also not always exist~\cite{kulkarni2021indivisible}. These negative results motivate the study of existence and computation of \MMS{} (and its combination with efficiency notions) under restricted domains such as lexicographic preferences. 

The term ``mixed'' has also been used to refer to mixture of indivisible and divisible resources in the literature~\cite{BLL+21fair,BSV21approximate}, but in this paper we only consider mixture of indivisible items (goods and chores).

\section{Preliminaries} \label{sec:prel}
\paragraph{Model.}  For any $k\in\mathbb{N}$, we define $[k]\coloneqq\{1,\dots,k\}$. An instance of the allocation problem with {\em mixed items} is a tuple $\langle N,M,G,C,\id \rangle$ where $N\coloneqq[n]$ is a set of $n$ {\em agents} and $M$ is a set of $m$ {\em items} $\{o_1,\dots,o_m\}$. Here, $G\coloneqq(G_1,\dots,G_n)$ and $C\coloneqq(C_1,\dots,C_n)$ are collections of subsets of $M$, where, for each $i\in [n]$, $G_i\subseteq M$ is the set of \emph{goods} and $C_i=M\setminus G_i$ is the set of {\em chores} for agent $i$, respectively. Additionally, $\id \, \coloneqq (\id_1,\dots,\id_n)$ is a {\em preference profile} that specifies for each agent $i\in N$ an \emph{importance ordering} $\id_i\in\ml$ over the individual items in $M$ in the form of a linear order; here $\ml$ is the set of all (strict and complete) linear orders over $M$ (all goods and chores). For example, we write $o^+_1 \, \id_i \, o^-_2 \, \id_i \, o^+_3$ to indicate that agent $i$ considers items $o_1$ and $o_3$ as goods and the item $o_2$ a chore, and ranks $o_1$ above $o_2$ and $o_2$ above $o_3$.\footnote{Not to be interpreted as ``agent $i$ \emph{prefers} chore $o_2$ over good $o_3$''; see the paragraph on `Lexicographic Preferences' for that.}

We use $\id_i(k)$ to denote the $k$-th ranked item in the importance ordering of agent $i$, and $\id_i(k, S)$ to specify the $k$-th ranked item for agent $i$ among items in set $S$. Thus, in the above example, $\id_i(1)=o_1$ and $\id_1(1,\{o_2,o_3\})=o_2$. 

In an instance with \textbf{objective} mixed items, each item is either a good for all agents or a chore for all agents, i.e., for any pair of agents $i,j\in N$, $G_i=G_j$ and $C_i=C_j$. In a {\em goods-only} (respectively, {\em chores-only}) instance, every item is a good (respectively, a chore) for all agents, i.e., for every agent $i\in N$, $G_i=M$ (respectively, $C_i=M$).

\paragraph{Bundles.} A {\em bundle} is any subset $X\subseteq M$ of the items. Given any bundle $X\subseteq M$, we will write $X^{i+}\coloneqq X\cap G_i$ and $X^{i-}\coloneqq X\cap C_i$ to denote the sets of goods and chores in $X$, respectively, according to agent $i$. 

\paragraph{Allocations.} An {\em allocation} $A=(A_1,\dots,A_n)$ is an $n$-partition of $M$, where $A_i\subseteq M$ is the bundle assigned to agent $i$. We will write $\Pi(M)$ to denote the set of all $n$-partitions of $M$. We say that an allocation $A$ is {\em partial} if $\bigcup_{i\in N}A_i\subset M$, and {\em complete} if $\bigcup_{i\in N}A_i=M$. Unless stated explicitly otherwise, an `allocation' will refer to a complete allocation. 

\paragraph{Lexicographic Preferences.} We will assume that agents' preferences 
over bundles are given by the lexicographic extension of their importance orderings $\id\coloneqq(\id_1,\dots,\id_n)$, which are over the individual items, taking into account whether an item is considered a good or a chore. Informally, this means that an agent with importance ordering $o^+_1\id o^-_2\id o^+_3$ prefers any bundle that contains the good $o_1$ over any bundle that does not, subject to that, it prefers a bundle that \emph{does not} contain the chore $o_2$ over any other bundle that contains $o_2$, and so on. The importance ordering $o^+_1\id o^-_2\id o^+_3$ over individual items induces the ranking $\>_i$ over the bundles given by $\{o_1^+,o_3^+\} \, \> \, \{o_1^+\} \, \> \, \{o_1^+,o_2^-,o_3^+\} \, \> \, \{o_1^+,o_2^-\} \, \> \, \{o_3^+\} \, \> \, \emptyset \, \> \, \allowbreak\{o_2^-,o_3^+\} \, \> \, \{o_2^-\}$, where $\emptyset$ denotes the empty bundle. 

Formally, given any pair of bundles $X,Y\subseteq M$, we say that agent $i\in N$ prefers bundle $X$ to $Y$, denoted as $X\ \>_i \ Y$, if and only if either (i) there exists a good $g\in G_i\cap (X\setminus Y)$ such that $\{o^+\in Y\cap G_i:o^+\id_i g\}\subset X$, or (ii) there exists a chore $c\in C_i\cap (Y\setminus X)$ such that $\{o^-\in X\cap C_i:o^-\id_i c\}\subset Y$. For any agent $i\in N$, and any pair of bundles $X,Y\subseteq M$, we will write $X\succeq_i Y$ if either $X\>_i Y$ or $X=Y$.

\label{subsec:Prelimns_Fairness_Notions}

\paragraph{Envy-Freeness.} 
An allocation $A$ is (a) \emph{envy-free} (\EF{}) if for every pair of agents $i,h\in N$, $A_i\succeq_i A_h$, (b) \emph{envy-free up to one item} (\EF{1}) if for every pair of agents $i,h\in N$ such that $A_i^{i-}\cup A_h^{i+} \neq \emptyset$, there exists an item $o\in A_i^{i-}\cup A_h^{i+}$ such that either $A_i\succeq_i A_h\setminus\{o\}$ or $A_i\setminus\{o\}\succeq_i A_h$, and (c) \emph{envy-free up to any item} (\EFX{}) if for every pair of agents $i,h\in N$ such that $A_i^{i-}\cup A_h^{i+} \neq \emptyset$ and for {\em every} item $o\in A_h^{i+} \cup A_i^{i-}$, it must be that (i) if $o\in A_h^{i+}$, $A_i\succeq_i A_h\setminus\{o\}$ and (ii) if $o\in A_i^{i-}$, $A_i\setminus\{o\}\succeq_i A_h$. In \Cref{app:EFX_variation}, we define two relaxations of \EFX{}, namely \EFX{}-c and \EFX{}-g. where only chores (resp., only goods) can be removed. Interestingly, our counterexample for \EFX{} (\Cref{thm:EFX_counterexample}) holds even for \EFX{}-c.

\paragraph{Maximin Share.} An agent's maximin share is its most preferred bundle that it can guarantee itself as a divider in an $n$-person cut-and-choose procedure against adversarial opponents~\cite{B11combinatorial}. Formally, the maximin share of agent $i$ is given by $\MMS_i \coloneqq \max_{P \in \Pi(M)} \min_{i} \{P_1,\allowbreak\dots,P_n\}$, where $\min\{\cdot\}$ and $\max\{\cdot\}$ denote the least-preferred and most-preferred bundles with respect to $\>_i$. An allocation $A$ satisfies \emph{maximin share} (\MMS{}) if each agent receives a bundle that it weakly prefers to its maximin share, i.e., for every $i \in N$, $A_i \succeq_i \MMS_i$.

\paragraph{Pareto Optimality.}
Given a preference profile $\id$, an allocation $A$ is said to be \emph{Pareto optimal} (\PO{}) if there is no other allocation $B$ such that $B_i \succeq_i A_i$ for every agent $i \in N$ and $B_h \succ_h A_h$ for some agent $h \in N$. To avoid vacuous solutions such as leaving all chores unassigned, we will always require a Pareto optimal allocation to be complete.

A \emph{picking sequence} of length $k$ is an ordered tuple $\tau=\langle s_1,s_2,\dots,s_k\rangle$ where, for each $i \in [k]$, $s_i\in N$ denotes the agent who picks its favorite available item, that is, its top-ranked remaining good, if one exists, or otherwise its bottom-ranked remaining chore as per its importance ordering $\vartriangleright_i$. A \emph{sequencible} allocation is one that can be simulated via a picking sequence. It is known that for goods-only instances, an allocation is sequencible if and only if it is \PO{}~\cite{beynier2019efficiency}. We observe that the equivalence also holds for \emph{chores-only} problems.

\begin{restatable}[\textbf{\PO{} $\Leftrightarrow$ sequencible for chores}]{prop}{POSequencibleChores}
An allocation of chores is \PO{} if and only if it is sequencible.
\label{prop:po_seqencible_chores}
\end{restatable}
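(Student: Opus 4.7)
The plan is to establish the two directions separately: \emph{sequencible $\Rightarrow$ \PO{}} by strong induction on $m = |M|$, and \emph{\PO{} $\Rightarrow$ sequencible} by a contrapositive cycle-swap argument.

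For the first direction, fix a picking sequence $\tau = \langle s_1, \dots, s_m \rangle$ producing $A$, and write $i \coloneqq s_m$ with $c$ the unique item picked at step $m$. An initial observation is that $c$ is the top (most important) chore in $A_i$ under $\vartriangleright_i$: every earlier pick by $i$ happened with strictly more items remaining, and the ``bottom-ranked remaining'' rule then supplies successive picks of strictly increasing importance. Assume for contradiction that $B$ Pareto dominates $A$, and split on the location of $c$ in $B$. If $c \in B_i$, I would delete $c$ from both $A_i$ and $B_i$ to obtain allocations $A'$ and $B'$ on $M \setminus \{c\}$, invoke the short lemma that the lex ordering (weak or strict) is preserved when a chore lying in $X \cap Y$ is removed from both bundles, and apply the induction hypothesis to the truncated sequence $\langle s_1, \dots, s_{m-1}\rangle$. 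If $c \in B_j$ for some $j \neq i$, I would first argue that $B_i$ contains no chore more important than $c$ in $\vartriangleright_i$---otherwise $A_i$ (whose top is $c$) would lex-beat $B_i$, contradicting $B_i \succeq_i A_i$---and then consider the allocation $\tilde B$ obtained from $B$ by transferring $c$ from $j$ back to $i$. The bundle $\tilde B_i = B_i \cup \{c\}$ agrees with $A_i$ at every position at or above $c$ in $\vartriangleright_i$, while $\tilde B_j = B_j \setminus \{c\}$ strictly improves $j$; provided that $\tilde B_i \succeq_i A_i$ also holds, $\tilde B$ weakly dominates $A$ with $c \in \tilde B_i$ and we are back in the first sub-case.

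For the second direction, argue by contrapositive: assume $A$ is not producible by any picking sequence and construct a Pareto improvement. Run a greedy reverse construction: at step $t$, with remaining items $R_t$, search for an agent $i$ whose bottom-ranked chore $b_i$ of $R_t$ in $\vartriangleright_i$ lies in $A_i$, and let $i$ ``pick'' $b_i$. If the greedy runs to completion, the produced sequence witnesses sequencibility. Otherwise it stalls at some step $t^*$: for every agent $i$, the bottom-ranked item $b_i \in R_{t^*}$ belongs to $A_{\sigma(i)}$ with $\sigma(i) \neq i$. Since $\sigma \colon [n] \to [n]$ has outdegree one on a finite set, it contains a cycle $i_1 \to i_2 \to \cdots \to i_k \to i_1$, meaning $b_{i_\ell} \in A_{i_{\ell+1}}$ (indices modulo $k$). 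I would then perform the cycle swap: each $i_\ell$ gives up the chore $b_{i_{\ell-1}} \in A_{i_\ell}$ and receives $b_{i_\ell} \in A_{i_{\ell+1}}$, with all other agents' bundles unchanged. Since $b_{i_\ell}$ is the least important chore of $R_{t^*}$ in $\vartriangleright_{i_\ell}$ whereas $b_{i_{\ell-1}} \in R_{t^*} \setminus \{b_{i_\ell}\}$, we have $b_{i_{\ell-1}} \vartriangleright_{i_\ell} b_{i_\ell}$; the binary encodings of $A_{i_\ell}$ and the swapped bundle agree above $b_{i_{\ell-1}}$ in $\vartriangleright_{i_\ell}$ and first differ at $b_{i_{\ell-1}}$, where $A_{i_\ell}$ has a $1$ and the swapped bundle has a $0$. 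Every $i_\ell$ therefore strictly improves, producing the required Pareto domination.

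The main obstacle I anticipate is the delicate sub-case of the forward direction in which $\tilde B_i \not\succeq_i A_i$ (equivalently, $A_i \setminus \{c\} \succ_i B_i$) despite $B$ Pareto dominating $A$. The direct ``move $c$ back to $i$'' reduction breaks here, and the proof has to exhibit a Pareto-improving structure by other means: for each agent whose bundle gained an item in $B$, trace the ``compensating'' chore that was lost elsewhere, and use finiteness to close such a chain of exchanges into a cycle whose swap contradicts the induction hypothesis on $M \setminus \{c\}$. This mirrors the cycle construction from the reverse direction and is the technical crux of the argument.
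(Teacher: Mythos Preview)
Your \emph{\PO{} $\Rightarrow$ sequencible} direction is correct and essentially coincides with the paper's argument (which proves the more general mixed-items statement in \Cref{prop:po_seqencible_mixed} by the same greedy-prefix/cycle-swap technique).

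The gap is in your \emph{sequencible $\Rightarrow$ \PO{}} direction. The obstacle you flag is real, and your proposed resolution does not close it. Concretely: in the hard sub-case you set up the cycle $i_1\to i_2\to\cdots\to i_k\to i_1$ on the set $N^*=\{k:B_k\setminus A_k\neq\emptyset\}$, where the edge $i_\ell\to i_{\ell+1}$ records that $o_{i_\ell}^-$ (agent $i_\ell$'s most important chore in $A_{i_\ell}\setminus B_{i_\ell}$) lies in $B_{i_{\ell+1}}$. But agent $i=s_m$ belongs to $N^*$ (you assumed $e\in B_i\setminus A_i$), and since $c$ is the most important chore in $A_i$ and $c\in A_i\setminus B_i$, we have $o_i^-=c$. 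If $i$ lies on the cycle, the swap sends $c$ to $i$'s successor and hands $i$ the chore $o_{\text{prev}}^-$. After deleting $c$ to pass to $M\setminus\{c\}$, agent $i$'s bundle becomes $A_i\setminus\{c\}\cup\{o_{\text{prev}}^-\}$, which is \emph{strictly worse} for $i$ than $A'_i=A_i\setminus\{c\}$. So the swapped-and-restricted allocation does not Pareto dominate $A'$, and the induction hypothesis is not contradicted. (Truncating the cycle to avoid this just shifts the damage to an endpoint of the chain.) There is also no guarantee of a cycle avoiding $i$: every node of the functional graph has outdegree one, so the unique cycle reachable from $i$ may well contain $i$.

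The paper's proof sidesteps this entirely by abandoning induction and using the very same cycle differently. For each $i_\ell$ on the cycle, let $r_\ell$ be the round at which $i_\ell$ picked $o_{i_\ell}^-$ in the sequence producing $A$. Since $o_{i_\ell}^-\in K_{i_{\ell+1}}=B_{i_{\ell+1}}\setminus A_{i_{\ell+1}}$ and $o_{i_{\ell+1}}^-$ is $i_{\ell+1}$'s most important chore in $A_{i_{\ell+1}}\setminus B_{i_{\ell+1}}$, the chore $o_{i_\ell}^-$ is strictly less important to $i_{\ell+1}$ than $o_{i_{\ell+1}}^-$; hence $o_{i_\ell}^-$ could not have been available at round $r_{\ell+1}$ (else $i_{\ell+1}$ would have picked it instead), forcing $r_\ell<r_{\ell+1}$. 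Chaining around the cycle yields $r_1<r_2<\cdots<r_k<r_1$, a contradiction. This round-ordering argument is what your sketch is missing: once you have the cycle, it gives the contradiction directly, with no need for the induction on $m$ or any swap on $M\setminus\{c\}$.
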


However, when dealing with \emph{mixed} items, sequencibility is no longer a sufficient condition for guaranteeing \PO{}.

\begin{restatable}[\textbf{\PO{} and sequencibility for mixed items}]{prop}{POSequencibleMixed}
For mixed items, Pareto optimality implies sequencibility, but the converse is not true even for objective mixed items.
\label{prop:po_seqencible_mixed}
\end{restatable}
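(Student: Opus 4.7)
The plan is to prove the two directions separately: the forward direction ($\PO$ implies sequencibility) by induction on $|M|$ combined with a cycle-swap argument, and the converse by adapting \Cref{example:mixed_items} to an objective mixed-items instance.

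For the forward direction, I would induct on $|M|$. The key lemma to establish is: for any $\PO$ allocation $A$ on $M$, there exist an agent $i$ with $A_i \neq \emptyset$ and an item $o \in A_i$ such that $o$ is $i$'s favorite available item in $M$---namely, the top-ranked good in $G_i \cap M$ if this set is non-empty, and otherwise the bottom-ranked chore in $C_i \cap M$. Given such $(i, o)$, I would prepend ``$i$ picks'' to the sequence and recurse on $M \setminus \{o\}$ with the restricted allocation $(A_1, \ldots, A_i \setminus \{o\}, \ldots, A_n)$. The recursion is justified because lexicographic comparisons between two bundles are unaffected by adding or removing a common item on both sides; hence any Pareto improvement in the reduced instance would lift (by re-adding $o$ to agent $i$'s bundle) to a Pareto improvement over $A$, contradicting $\PO$.

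To prove the key lemma, suppose for contradiction that no such $(i, o)$ exists. Let $N' = \{i : A_i \neq \emptyset\}$ and, for each $i \in N'$, let $f_i$ denote $i$'s favorite item in $M$; by hypothesis $f_i \in A_{\sigma(i)}$ for some $\sigma(i) \in N' \setminus \{i\}$. Since $\sigma$ is a fixed-point-free self-map of a finite set, it contains a cycle $i_1 \to i_2 \to \cdots \to i_k \to i_1$. Define $B$ by swapping along the cycle: $B_{i_j} = (A_{i_j} \setminus \{f_{i_{j-1}}\}) \cup \{f_{i_j}\}$, with other bundles unchanged. A short case analysis shows $B_{i_j} \succ_{i_j} A_{i_j}$ in every case. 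If $f_{i_j}$ is a good for $i_j$, it is $i_j$'s top good in $M$, so whichever of $f_{i_j}$ and $f_{i_{j-1}}$ is more important under $\id_{i_j}$ decides the lex comparison in $B$'s favor (either $B$ gains a strictly better-ranked good, or $B$ sheds a chore that is strictly more important). If $f_{i_j}$ is a chore, then $G_{i_j} \cap M = \emptyset$, so $f_{i_{j-1}}$ is also a chore and, since $f_{i_j}$ is the bottom chore, $f_{i_{j-1}}$ is strictly more important under $\id_{i_j}$, again making $B$'s side lex-preferred. This produces a strict Pareto improvement, contradicting $\PO$.

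For the converse, I would use two agents and four objective items with $o_1$ a chore for both agents and $o_2, o_3, o_4$ goods for both, under the importance orderings $o_1^- \id_1 o_2^+ \id_1 o_3^+ \id_1 o_4^+$ and $o_2^+ \id_2 o_1^- \id_2 o_3^+ \id_2 o_4^+$. The picking sequence $(1, 2, 2, 1)$ yields the sequencible allocation $A_1 = \{o_1, o_2\}$, $A_2 = \{o_3, o_4\}$, which is strictly Pareto-dominated by $B_1 = \emptyset$, $B_2 = \{o_1, o_2, o_3, o_4\}$: agent $1$ sheds the most important item $o_1$ (a chore), and agent $2$ gains the most important item $o_2$ (a good). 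The main obstacle lies in the cycle-swap analysis of the forward direction, where the good/chore case distinction must be handled carefully to verify strict lex improvement in every configuration; by contrast, the cycle-existence step is a standard pigeonhole argument, and the counterexample is a clean modification of \Cref{example:mixed_items}.
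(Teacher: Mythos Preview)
Your proof is correct and uses essentially the same cycle-swap idea as the paper: the paper frames the forward direction as a maximal-agreeing-prefix argument (take the longest picking-sequence prefix that agrees with $A$, observe that on the remaining items no agent holds its favorite, build the ``favorite-item'' digraph, find a cycle, and swap), while you frame it as induction on $|M|$ (peel off one favorite item at a time, using the same cycle-swap contradiction to guarantee such an item exists); these are two packagings of the same argument. Your counterexample is also valid; the paper's is slightly smaller, using three items with $\id_1: o_1^+ \id o_2^+ \id o_3^-$, $\id_2: o_3^- \id o_1^+ \id o_2^+$, and the picking sequence $\langle 1,2,2\rangle$, which gives $A_1=\{o_1\}$, $A_2=\{o_2,o_3\}$ and is dominated by giving all items to agent~$1$.
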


To see why sequencibility does not imply \PO{}, consider the objective mixed items instance with three items $\{o^+_1,o^+_2,o^-_3\}$ and two agents where agent $1$'s importance ordering is $o^+_1\id\,o^+_2\id\,o^-_3$, and agent $2$'s ordering is $o^-_3\id\,o^+_1\id\,o^+_2$. The picking sequence $\langle 1,2,2 \rangle$ allocates $\{o^+_1\}$ to agent $1$ and $\{o^-_3,o^+_2\}$ to agent $2$. However, this allocation is Pareto dominated by the allocation that gives all items to agent $1$.

Given an instance with mixed items, there always exists a Pareto optimal allocation (since there are only finitely many allocations). Further, one such allocation can be computed in polynomial time. This can be shown by observing that a \emph{rank-maximal} allocation is Pareto optimal; see \Cref{sec:Preliminaries_Appendix}.

All missing proofs and detailed algorithms appear in the appendic.

\section{Results}
\label{sec:Results}

We start our investigation by considering the strongest fairness notion---\emph{envy-freeness}. As we will see, this notion will provide us the first contrast between goods and chores.

\subsection{Envy-Freeness}
\label{subsec:EF}

With indivisible items, a complete and envy-free allocation may not always exist. Thus, it is of interest to ask whether one can efficiently determine the existence of such solutions. This problem admits a polynomial-time algorithm in case of goods~\cite{HSV+21fair}, but turns out to be \NPC{} for chores, and by extension, for mixed items~(\Cref{thm:EF_Chores_NPC}). 

\begin{restatable}[\textbf{\EF{} for chores}]{thm}{EFChoresNPHard}
Determining whether a chores-only instance with lexicographic preferences admits an envy-free allocation is \NPC{}.
\label{thm:EF_Chores_NPC}
\end{restatable}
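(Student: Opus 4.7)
My plan is to reduce from a known \NPC{} combinatorial problem such as $\ThreeDMfull{}$, after first deriving a clean structural characterization of envy-freeness in the lexicographic chores-only setting. Membership in \NP{} is immediate: given a candidate allocation, for each ordered pair $(i,j)$ I can walk down $\id_i$ and verify in polynomial time that the first chore appearing in $A_i \cup A_j$ lies in $A_j$.

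The key structural lemma I would establish is the following. For a complete allocation $A$ (if some $A_j = \emptyset$ while $A_i \neq \emptyset$ then $i$ envies $j$, so we may assume all bundles are non-empty), let $t_i$ denote the top chore of $A_i$ under $\id_i$. Unpacking the lexicographic comparison for chores and using that $A_i$ and $A_j$ are disjoint, $A_i \succeq_i A_j$ reduces to the condition that $A_j$ contains a chore ranked strictly above $t_i$ under $\id_i$. Hence $A$ is envy-free iff, for every $i$ and every $j \neq i$, the set $H_i = \{c : c \id_i t_i\}$ intersects $A_j$. Since $H_i$ must hit $n-1$ distinct bundles, $|H_i| \geq n-1$, so the top $n-1$ chores of each $\id_i$ must lie outside $A_i$. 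This recasts the existence problem as a hypergraph-coloring-style question: assign each chore to an agent so that the top portion of every ordering $\id_i$ is ``spread'' across all agents other than $i$.

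With this handle I would encode $\ThreeDMfull{}$ as follows. Starting from $X, Y, Z$ each of size $q$ and triples $T \subseteq X \times Y \times Z$, I would introduce ``element chores'' (one per element of $X \cup Y \cup Z$), ``triple chores'' (one per triple in $T$), a small number of auxiliary agents, and a long tail of low-ranked dummy chores that absorb non-critical placement slack. Each agent's importance ordering would be engineered so that its top $n-1$ positions expose a carefully chosen mix of triple and element chores; overlaps among different agents' top portions would force consistent choices, so that any envy-free allocation pins down a selection of $q$ triples that exactly cover $X \cup Y \cup Z$. Conversely, a perfect matching yields an envy-free allocation by placing the matched triple and element chores into prescribed bundles and distributing dummies freely, with their envy-irrelevant positions in the tail of each ordering guaranteeing that placements there cannot violate any envy constraint.

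The main obstacle will be engineering the overlap pattern among the top portions so that the ``top-$(n-1)$ spreading'' constraint reproduces \emph{precisely} the ``each element is covered by exactly one chosen triple'' property of 3-DM, without admitting spurious envy-free allocations that do not correspond to valid matchings. I would approach this via dedicated ``consistency agents'' whose top portions bridge the element chore of a given element with the triple chores of every triple containing it; the spreading constraint on these agents then forces the selection of exactly one triple per element, which in turn yields a valid matching. Verifying that the dummy-chore padding is large enough to keep all below-top-$(n-1)$ placements envy-irrelevant will isolate the combinatorial core of the reduction and complete the hardness argument.
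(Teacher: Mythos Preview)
Your structural lemma is correct and in fact coincides with the characterization the paper uses: for a chores-only allocation with all bundles nonempty, $A_i \succeq_i A_j$ holds iff $A_j$ contains a chore that $\id_i$ ranks strictly above $t_i = \id_i(1,A_i)$, so envy-freeness is exactly the ``$H_i$ hits every other bundle'' condition you state. The \NP{} membership argument is also fine.

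Where you diverge from the paper is in the source problem and, more importantly, in completeness. The paper reduces from \textsc{SAT}, not 3DM: for each Boolean variable $Y_i$ it introduces four agents $X_i,\neg X_i,X_i^*,\neg X_i^*$ and five dedicated chores $x_i,1_i,0_i,1_i^*,0_i^*$, plus one chore $C_j$ per clause. The decisive gadget is that $X_i$ and $\neg X_i$ are given \emph{identical} orderings except at their bottom three positions; envy-freeness between them (via your own spreading observation) forces neither to receive any chore ranked above $x_i$, and a separate argument shows $x_i$ must land on exactly one of them---this is what encodes the truth value. The starred agents $X_i^*,\neg X_i^*$ then leverage this placement to force each clause chore $C_j$ to sit with a literal agent whose literal satisfies the clause. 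Both directions of the equivalence are worked out explicitly.

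Your 3DM outline, by contrast, stops precisely at the step you yourself flag as the main obstacle. You name the intended ingredients (element chores, triple chores, consistency agents, dummy tail) but supply no actual orderings, and the heart of the reduction---ensuring that the spreading constraint forces \emph{exactly one} triple per element while ruling out spurious envy-free allocations that do not correspond to matchings---is left as future work. Without that construction there is nothing to check. A 3DM route may well be feasible, but it requires a concrete device analogous to the paper's ``identical-above-the-bottom'' trick to convert the spreading lemma into a selection constraint; as it stands, what you have is a reasonable plan, not a proof.
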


To understand the sharp contrast in the complexity of the goods and chores problems, recall that for goods, an allocation is envy-free if and only if each agent gets its top-ranked item.
One can efficiently check whether there exists a partial allocation satisfying this property via a straightforward matching computation. Furthermore, if such a partial allocation exists, \emph{any} completion of it is also envy-free. 

By contrast, envy-freeness for chores entails that for every agent, the \emph{worst} or least-preferred chore (i.e., highest-ranked in the importance ordering) in its own bundle is strictly preferred over the worst chore in any other agent's bundle. Thus, given an envy-free partial allocation, its completion may no longer be envy-free since, upon receiving more items, a different chore could become the worst.

We note that the allocation constructed in the forward direction in the proof of \Cref{thm:EF_Chores_NPC} is sequencible and therefore also Pareto optimal (\Cref{prop:po_seqencible_chores}), implying that the hardness result also holds for \EF{}+\PO{} allocations (\Cref{sec:EF+PO_Chores}).

\subsection{Envy-Freeness up to any Item (\EFX{})}
\label{subsec:EFX}

Let us now turn our attention to a relaxation of envy-freeness called envy-freeness up to any item (\EFX{}). Prior work has shown that an \EFX{} and Pareto optimal allocation always exists for goods~\cite{HSV+21fair}. In \Cref{prop:EFX+PO_Chores} in the appendix, we show that a similar positive result can be achieved for the chores-only problem via the following simple procedure: Fix a priority ordering $\sigma$ over agents. Let the first agent in $\sigma$ pick its most preferred $m -n$ chores. Then, all agents (including the first agent) pick one chore each according to $\sigma$ from the remaining items.

Our main result in this section is that the above positive results for goods-only and chores-only models fail to extend to the mixed items setting: We show that an \EFX{} allocation may not exist even for \emph{objective} mixed items, i.e., when each item is either a common good or a common chore~(\Cref{thm:EFX_counterexample}).

\begin{restatable}[\textbf{Non-existence of \EFX{}}]{thm}{EFXcounterexample}
There exists an instance with objective mixed items and lexicographic preferences that does not admit any \EFX{} allocation.
\label{thm:EFX_counterexample}
\end{restatable}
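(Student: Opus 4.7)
The plan is to exhibit an explicit small instance with only common goods and common chores, and then rule out every allocation by a compact case analysis. I would pick a small number of agents (three or four), a handful of common goods, and at least one common chore that is ranked near the top of every agent's importance ordering; the reason for making a common chore top-ranked is that, by \Cref{prop:EFX+PO_Chores}, chores alone admit an \EFX{} allocation, so the counterexample must force an interaction between a high-importance chore and competition over a shared top good. Concretely, I would craft the importance orderings so that (i) all agents agree on the identity of a ``dreaded'' chore $c^\ast$ at the top, and (ii) directly below $c^\ast$ each agent ranks a distinct favorite good, with a rotation structure in the next few ranks that makes the goods strongly contested.

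To reason about \EFX{} efficiently, I would first record a lexicographic reduction: under $\>_i$, the condition $A_i \succeq_i A_h \setminus \{g\}$ for every $g \in A_h \cap G_i$ is equivalent to the single test against the $\id_i$-smallest good in $A_h$, and the condition $A_i \setminus \{c\} \succeq_i A_h$ for every $c \in A_i \cap C_i$ is equivalent to the single test against the $\id_i$-smallest chore in $A_i$. This collapses the pairwise \EFX{} check between any two agents to a pair of explicit lex comparisons that I can read off from their top $O(1)$ importance positions. I would then branch on the combinatorial skeleton of an alleged \EFX{} allocation $A$: who receives $c^\ast$, which agents receive their personal top good, and how the remaining items are distributed. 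Symmetry across the rotation should reduce the analysis to a small number of essentially distinct cases.

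In each case I would locate an unresolved envy: either an agent who holds $c^\ast$ (together with at most some lower-importance items) envies an agent holding a good that sits higher than any chore the envious agent owns---so the chore-removal clause cannot rescue \EFX{}, and the good-removal clause fails because removing the weakest good from the envied bundle still leaves a higher-ranked good in it than anything in the envious agent's own bundle---or else an agent who avoided $c^\ast$ nonetheless envies a rival on the basis of a top good, with no available removal closing the gap. Running this argument across all skeletons produces a contradiction, so no \EFX{} allocation exists; since every item is either a common good or a common chore, the same instance witnesses the objective-mixed-items claim (and, being lexicographic, is additive, yielding \Cref{cor:EFXNonExistenceAdditive}).

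The main obstacle is calibration: the instance must be large and structured enough that no allocation escapes the argument, yet small and symmetric enough to keep the case analysis tractable. In particular, placing $c^\ast$ strictly at the top risks making the holder of $c^\ast$ so unhappy that the problem degenerates, while placing it too low lets some allocation isolate it harmlessly; the delicate step is choosing how many common goods sit above $c^\ast$ in each agent's importance ordering, and how the ``next-favorite good'' rotation is set, so that every assignment of the top goods simultaneously induces both (a) an agent with $c^\ast$-induced chore pressure and (b) an envy toward another agent whose bundle contains more than one sufficiently important good.
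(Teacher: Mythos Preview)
Your proposal is a research plan rather than a proof: you never exhibit a concrete instance and you explicitly leave the ``calibration'' step unresolved. More importantly, the specific template you do commit to---a single common chore $c^\ast$ at the top of every importance ordering, with goods (in rotation) directly below---cannot yield a counterexample. In any such instance, allocating \emph{all} items to one agent is already envy-free (hence \EFX{}): every other agent holds $\emptyset$, and since $c^\ast$ is their top-ranked item and a chore, they strictly prefer $\emptyset$ to any bundle containing $c^\ast$; the agent who holds everything, after removing her unique chore $c^\ast$, holds all goods and prefers that to $\emptyset$. So the ``dreaded chore on top, rotating goods below'' design is a dead end regardless of the rotation. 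Your closing remark about ``how many common goods sit above $c^\ast$'' does not rescue this: by \Cref{thm:EFX_Mixed_TopItem}, the moment any agent's top item is a good, an \EFX{} allocation is guaranteed.

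The paper's construction inverts your template. It uses a single common \emph{good} and six common \emph{chores}, with the good in the \emph{middle} of each importance ordering (three chores above, three below), and two pairs of identical agents that \emph{swap} which triple of chores sits above versus below the good. The case analysis branches on which chores accompany the good-holder: none from the ``above'' triple, none from the ``below'' triple, or at least one from each. The decisive case is the last, where after removing the ``below'' chore the good-holder still owns an ``above'' chore, so \EFX{} forces each of the other three agents to own a chore the good-holder ranks even higher---impossible since only two such chores remain. Your lexicographic reduction (testing only against the least-important good or chore) is correct and useful, but the leverage in the counterexample comes from multiple chores \emph{straddling} a solitary good, not from multiple goods competing below a solitary chore.
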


Since lexicographic preferences are a subclass of additive valuations, our counterexample also shows that an \EFX{} allocation fails to exist under non-monotone and additive valuations~(\Cref{cor:EFXNonExistenceAdditive}).\footnote{A valuation function $v_i:2^M\rightarrow\mathbb{R}$ is \emph{non-monotone} if for some subsets $T \subset S \subseteq M$, we have $v_i(T) > v_i(S)$ and for some (possibly different) subsets $T' \subset S' \subseteq M$, we have $v_i(T') < v_i(S')$.} Our result complements that of \citeauthor{BBB+20envy} who showed that an \EFX{} allocation could fail to exist for two agents with non-monotone, {\em non-additive}, and identical utility functions.

\begin{restatable}[]
{cor}{EFXNonExistenceAdditive}
An \EFX{} allocation can fail to exist for instances with non-monotone and additive valuations.
\label{cor:EFXNonExistenceAdditive}
\end{restatable}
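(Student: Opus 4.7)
The plan is to reduce the non-existence result for objective mixed items with lexicographic preferences (\Cref{thm:EFX_counterexample}) to the additive, non-monotone setting by a standard exponential-weighting embedding. Concretely, I would take the same instance $\langle N, M, G, C, \id \rangle$ witnessing \Cref{thm:EFX_counterexample}, and for each agent $i \in N$ define an additive valuation $v_i$ over the $m = |M|$ items by
\[
v_i(o) \;=\; \begin{cases} +\,2^{\,m - k} & \text{if } o = \id_i(k) \text{ and } o \in G_i,\\[2pt] -\,2^{\,m - k} & \text{if } o = \id_i(k) \text{ and } o \in C_i,\end{cases}
\]
and extend additively by $v_i(S) = \sum_{o \in S} v_i(o)$.

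The first step is to verify that the induced preference order over bundles coincides with the lexicographic extension $\succ_i$ from the preliminaries. The key observation is that the weight at rank $k$ strictly dominates the sum of absolute weights at all lower ranks, since $2^{m-k} > \sum_{\ell > k} 2^{m-\ell}$. Hence for any two bundles $X, Y \subseteq M$, the sign of $v_i(X) - v_i(Y)$ is determined by the \emph{highest-ranked item in $\id_i$ on which $X$ and $Y$ disagree}, weighted by whether it is a good or a chore for $i$ — which is exactly the rule defining $\succ_i$. Thus $X \succ_i Y \iff v_i(X) > v_i(Y)$.

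Because the preference orderings are preserved, the standard additive-valuation definition of \EFX{} (for every pair $i, h$ and every $g \in A_h$ with $v_i(g) > 0$, $v_i(A_i) \geq v_i(A_h \setminus \{g\})$; and for every $c \in A_i$ with $v_i(c) < 0$, $v_i(A_i \setminus \{c\}) \geq v_i(A_h)$) coincides exactly with the lexicographic-preference definition of \EFX{} used in \Cref{thm:EFX_counterexample}. Consequently, the allocation space contains no \EFX{} allocation under $(v_i)_{i \in N}$ either. Finally, non-monotonicity is immediate: the objective mixed instance contains at least one common good $g$ and at least one common chore $c$, so for every agent $i$ we have $v_i(\{g\}) > v_i(\emptyset)$ while $v_i(\{c\}) < v_i(\emptyset)$, witnessing both directions of the non-monotonicity definition.

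The only step requiring any care is the equivalence between the lexicographic bundle order and the additive order; this is a routine geometric-series argument, and I do not expect any genuine obstacle since the exponential gaps are specifically calibrated to make the highest differing rank dominate. Everything else is immediate from \Cref{thm:EFX_counterexample}.
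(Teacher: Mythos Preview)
Your proposal is correct and takes essentially the same approach as the paper: the paper simply notes that lexicographic preferences are a subclass of additive valuations and hence the counterexample of \Cref{thm:EFX_counterexample} directly witnesses the corollary, whereas you make this explicit via the standard exponential-weight embedding and verify the bundle-order equivalence and non-monotonicity.
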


The counterexample in the proof of \Cref{thm:EFX_counterexample} (given below) uses only \emph{four} agents and \emph{seven} items. Interestingly, for the said number of agents and items, an \EFX{} allocation is guaranteed to exist for \emph{goods-only} instances even under \emph{monotone} valuations~\cite{M21extension}, which is significantly more general than additive (or lexicographic) preferences.

It is also known that when agents belong to one of two given ``types'', an \EFX{} allocation is guaranteed to exist for \emph{goods-only} instances under \emph{monotone} valuations~\cite{M21extension}. Our result in \Cref{thm:EFX_counterexample}, which also has two types of agents, demonstrates a barrier to extending this result in the non-monotone setting, even under lexicographic preferences.

\begin{proof} (of \Cref{thm:EFX_counterexample})
Consider an objective mixed items instance with four agents. Agents $1$ and $2$ have the same importance ordering, and so do agents $3$ and $4$, as shown below:

\newcommand{\gooditem}[1]{o_{#1}^+}
\newcommand{\choreitem}[1]{o_{#1}^-}

\begin{center}
\begin{tabular}{l l@{$\rhd$\ } l@{$\rhd$\ } l @{$\rhd$\ } l@{$\rhd$\ } l@{$\rhd$\ } l@{$\rhd$\ } l}
$1, 2 \, :$ & $\choreitem{2}$ & $\choreitem{3}$ & $\choreitem{4}$  & $\gooditem{1}$ & $\choreitem{5}$ & $\choreitem{6}$ & $\choreitem{7}$ \\
$3, 4 \, :$ & $\choreitem{5}$ & $\choreitem{6}$ & $\choreitem{7}$  & $\gooditem{1}$ &  $\choreitem{2}$ & $\choreitem{3}$ & $\choreitem{4}$\\
\end{tabular}
\end{center}

Since the items are objective, we will find it convenient to use the phrases `the good $o_1$' and `the chore $o_2$' instead of just calling them `items'.

Suppose, for contradiction, that an \EFX{} allocation exists. Without loss of generality, suppose agent $1$ gets the good $o_1^{+}$, and
 
let $A_i$ denote the bundle allocated to agent $i$. We will show a contradiction via case analysis, depending on the chores allocated to agent $1$.\\

\textbf{Case 1:} If $A_1 \cap \{o_2^{-},o_3^{-},o_4^{-}\} = \emptyset$. That is, agent $1$'s  allocated chores are a (possibly empty) subset of $\{o_5^{-},o_6^{-},o_7^{-}\}$, which are all ranked below $\gooditem{1}$ according to agent $1$'s importance ordering.\\

This means that regardless of what agent $2$ gets, it prefers the bundle $A_1$ to its own bundle $A_2$. Therefore, $A_2$ must be empty, as otherwise agent $2$ will prefer $A_1$ even when some chore is removed from $A_2$. Thus, the chores $o_2^{-},o_3^{-}$, and $o_4^{-}$ must be allocated to agents $3$ and $4$, which means that one of these agents must get at least two of these chores. Suppose, without loss of generality, that agent $3$ gets at least two chores. Then, agent $3$ would prefer the empty bundle $A_2$ after any chore is removed from $A_3$, a contradiction to \EFX{}.\\

\textbf{Case 2:} If $ A_1 \cap \{o_5^{-},o_6^{-},o_7^{-}\} = \emptyset$. That is, agent $1$'s allocated chores are a subset of $\{o_2^{-},o_3^{-},o_4^{-}\}$, which are all ranked above $\gooditem{1}$ according to agent $1$'s importance ordering.\\

This means that regardless of how the remaining chores are assigned, both agents $3$ and $4$ will strictly prefer $A_1$ over their respective bundles (because their most important item in $A_1$ is $\gooditem{1}$). Now, if agent $3$ or $4$ is assigned any item, which must be a chore, then even after removing this chore, it would still envy $A_1$. Therefore, agents $3$ and $4$ cannot be allocated any item. This means that agent $2$ gets at least $\{\choreitem{5},\choreitem{6},\choreitem{7}\}$, which implies that after any item (which must be a chore) is removed from agent $2$'s bundle, agent $2$ envies agent $3$ (who is not allocated any item). This contradicts \EFX{}.\\

\textbf{Case 3:} If $A_1 \cap \{o_2^{-},o_3^{-},o_4^{-}\} \ne \emptyset$ and $A_1 \cap \{o_5^{-},o_6^{-},o_7^{-}\} \ne \emptyset$. That is, agent $1$ gets at least one chore above good $\gooditem{1}$ and at least one chore below $\gooditem{1}$ according to its importance ordering.\\
    
Choose any $x\in A_1 \cap \{o_2^{-},o_3^{-},o_4^{-}\}$ and $y\in A_1 \cap \{o_5^{-},o_6^{-},o_7^{-}\}$. Then, because of \EFX{}, agent $1$ should not prefer any other agent's bundle after $y$ is removed from $A_1$. This means that for any  $i\in \{2,3,4\}$, $A_i$ must contain a chore that is ranked higher than $x$ according to agent $1$'s importance order. However, there are at most two chores perceived to be ranked higher than $x$ by agent $1$, which contradicts \EFX{}.
\end{proof}

\subsection{\EFX{} and Pareto optimality}
\label{subsec:EFX+PO}

We have seen that an \EFX{} allocation may not exist for mixed items. This negative result prompts us to identify a subclass of lexicographic instances with subjective mixed items for which an \EFX{} and Pareto optimal allocation is guaranteed to exist. Specifically, we will now require that there be an agent whose \emph{top-ranked} item in its importance ordering is a good~(\Cref{thm:EFX_Mixed_TopItem}).

\begin{algorithm}[t]
\DontPrintSemicolon
 \linespread{1}
\KwIn{A lexicographic mixed instance $\langle N,M,G,C,\id \rangle$}
\KwOut{An \EFX{}+\PO{} allocation $A$}
Select an arbitrary agent $i\in N$ such that $\id_i(1) \in G_i$ \;\label{line:EFX+PO_mixed_first_agent}
Let $C' \coloneqq \{o\in M \, : \, \forall j\in N\setminus \{i\}, o \in C_j \}$ \tcp*{{The set of all common chores for the remaining agents.}}
$A_i \leftarrow$ $\id_i(1)\ \cup\ C'$ \; 
$N \leftarrow N \setminus \{i\}$\;
$M \leftarrow M \setminus A_i$\;\label{line:EFX+PO_mixed_end_first_agent}
\Comment{{The remaining instance has no common chore.}}
\While{there exists an unallocated item}{ \label{line:no_common_chores}
\If{$|N| = 1$}{
Assign all items to the remaining agent \;
}
\Else{
Find the smallest $k \in \{1,2,\dots,|M|\}$ such that the set $S^k \coloneqq \{ i \in N \,: \, \id_i(k) \in G_i$ \} is non-empty \tcp*{{
set of agents whose $k^\text{th}$-ranked item is a good.}}\label{line:EFX+PO_mixed_smallest_k}

Select any agent $j \in S^{k}$ \;\label{line:EFX+PO_mixed_select_agent}
$C' \coloneqq \{o\in M\, : \, \forall i\in N \setminus \{j\}, o \in C_i\}$ \;\label{line:EFX+PO_mixed_find_chores}
$A_j \leftarrow \{\id_{j}(k)\} \cup C'$ \;\label{line:EFX+PO_mixed_allocate_chores}
$N \leftarrow N \setminus \{j\}$\;\label{line:EFX+PO_mixed_eliminate_agent}
$M \leftarrow M \setminus A_j$\;
}
}
\KwRet{$A$}
\caption {Finding an \EFX{}+\PO{} allocation when there is an agent whose top-ranked item is a good.
}
\label{alg:EFX+PO_mixed}
\end{algorithm}

\begin{restatable}[\textbf{\EFX{}+\PO{} when some agent has a  top-ranked good}]{thm}{EFXMixedTopItem}
Given a lexicographic mixed instance where some agent's top-ranked item is a good, an \EFX{}+\PO{} allocation always exists and can be computed in polynomial time.
\label{thm:EFX_Mixed_TopItem}
\end{restatable}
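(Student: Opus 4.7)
The plan is to analyze \Cref{alg:EFX+PO_mixed} in three steps: termination and polynomial time, EFX, and PO. Termination is immediate: each iteration of the while loop eliminates at least one agent from $N$, so the algorithm performs at most $n$ iterations, each requiring $O(nm)$ work.

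The structural foundation for both EFX and PO is the invariant, proved by induction on the iteration count, that at every point during the algorithm, every remaining item is a good for at least one remaining agent. The base case uses that $A_{j_1} = \{\id_{j_1}(1)\} \cup C'_{j_1}$ absorbs exactly those items that are chores for every agent in $N \setminus \{j_1\}$; the inductive step is analogous for subsequent else-branch selections. A direct consequence is that for every non-first agent $j_r$ (including the final agent), $A_{j_r} \subseteq G_{j_r}$: any item in $C'_{j_r}$ is by definition a chore for every other currently-remaining agent, and by the invariant must be a good for some agent still in play, hence for $j_r$ itself.

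For EFX I would case-analyze the pair of agents. When the first agent $j_1$ is the envier, $\id_{j_1}(1) \in A_{j_1}$ is the top of $j_1$'s importance ordering and a good, so $A_{j_1} \succ_{j_1} A_h$ outright, and removing any chore from $A_{j_1}$ preserves this top good. When $j_1$ is the envied party, from $h$'s viewpoint $A_{j_1}$ contains at most one good (namely $\id_{j_1}(1)$, if it lies in $G_h$) together with items of $C'_{j_1}$, all of which are chores for $h$, so EFX is recovered either directly (if $\id_{j_1}(1) \in C_h$) or after removing that single good, using $A_h \subseteq G_h$. For two non-first agents $j_a, j_b$ with $j_a$ selected first, items of $j_a$'s order ranked above $g_{j_a}$ within $M_{j_a}$ are all chores for $j_a$ (by minimality of $k_{j_a}$) and hence absent from $A_{j_a}$; therefore the first differing item in $j_a$'s lex comparison of $A_{j_a}$ and $A_{j_b}$ is either a chore for $j_a$ lying in $A_{j_b}$ or $g_{j_a}$ itself in $A_{j_a}$, yielding $A_{j_a} \succ_{j_a} A_{j_b}$. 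The $j_b$-direction parallels the $j_1$-case: in $j_b$'s view, $A_{j_a}$ contains at most one good (namely $g_{j_a}$, if $g_{j_a} \in G_{j_b}$) plus chores for $j_b$, so removing this single good if necessary gives $A_{j_b} \succeq_{j_b} A_{j_a} \setminus \{g_{j_a}\}$.

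For PO I would argue by induction on the selection order that any Pareto-dominating allocation $B$ must equal $A$. For the first agent, $\id_{j_1}(1) \in B_{j_1}$ because otherwise the top differing item in $j_1$'s order would be a good absent from $B_{j_1}$, forcing $A_{j_1} \succ_{j_1} B_{j_1}$. Next, for any chore $c \in C'_{j_r} \cap C_{j_r}$ assigned to $j_r$ by the algorithm, if $c$ were placed in some $B_h$ (with $h \neq j_r$), then $h$ would suffer an additional top-ranked chore unless compensated by a good transferred to $B_h$ from some $A_{j_{r'}}$; tracing this compensation chain, using $A_{j_{r'}} \subseteq G_{j_{r'}}$ for non-first agents (so no chore-removal compensation is available) and the invariant, one reaches a contradiction by showing that any closed trade cycle must include some agent receiving a good that was available at their selection iteration but not chosen by the algorithm, contradicting the algorithm's greedy selection of the top good in $M_{j_{s}}$. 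The main obstacle is formalizing this trade-chain/cycle argument with care for multi-item reallocations; I would handle this by tracking the first-differing-item in each agent's lex comparison separately and propagating the forcing at each step until it closes the induction.
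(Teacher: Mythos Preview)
Your termination argument, the ``no common chores'' invariant, and the \EFX{} analysis are correct and mirror the paper's approach closely (your invariant is exactly \Cref{lem:EFX_PO_goods_only}, and your case split for \EFX{} matches Part~1 of the paper's proof).

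The \PO{} argument, however, has a genuine gap. Two concrete issues:

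\emph{First}, the case you single out---``for any chore $c\in C'_{j_r}\cap C_{j_r}$''---is essentially vacuous for $r\ge 2$: you yourself established $A_{j_r}\subseteq G_{j_r}$ for non-first agents, so only $j_1$ can hold such a chore. Your sketch never addresses why the goods $g_{j_r}$ (for $r\ge 2$) or the items in $C'_{j_r}$ (which are \emph{goods} for $j_r$) must remain with $j_r$ in $B$. That is the heart of the \PO{} argument.

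\emph{Second}, your proposed endpoint for the trade cycle is not a contradiction. You claim the cycle closes at ``some agent $j_s$ receiving a good that was available at their selection iteration but not chosen by the algorithm, contradicting the algorithm's greedy selection of the top good in $M_{j_s}$.'' But the algorithm only hands $j_s$ its single top good $g_{j_s}$ plus the common chores $C'_{j_s}$; there may well be \emph{other} goods for $j_s$ in $M_{j_s}$, all ranked below $g_{j_s}$, that $j_s$ does not receive. Acquiring one of those in $B$ contradicts nothing. Conversely, any good for $j_s$ ranked \emph{above} $g_{j_s}$ cannot lie in $M_{j_s}$ at all (the first $k_{j_s}-1$ ranks in $M_{j_s}$ are chores for $j_s$); such a good necessarily sits in some earlier $A_{j_h}$, and in fact must equal $g_{j_h}$ since $C'_{j_h}\subseteq C_{j_s}$. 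So the chain does not close inside $M_{j_s}$.

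The paper's fix is a clean double induction rather than a cycle: a \emph{forward} induction over the selection order shows $g_{j_r}\in B_{j_r}$ for every $r$ (if $j_r$ lost $g_{j_r}$, the only possible compensating good is some earlier $g_{j_h}$, already pinned by the inductive hypothesis), and then a \emph{backward} induction shows every remaining item of $A_{j_r}$ stays in $B_{j_r}$ (later agents are fixed by hypothesis; earlier agents' non-top items are chores for $j_r$). Your ``first-differing-item propagation'' idea can be made to work, but it must terminate at agent~$j_1$ losing $g_{j_1}=\id_{j_1}(1)$---not at an ``available but unchosen'' good---and it needs the separate backward pass to pin the below-$g_{j_r}$ items.
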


\begin{proof} (sketch) Let us start by discussing why the allocation returned by our algorithm is \EFX{}, followed by a similar discussion for \PO{}.

\paragraph{Guaranteeing \EFX{}.}Intuitively, the assumption about some agent's top-ranked item being a good allows us to deal with the common chores without violating \EFX{} as follows (see Algorithm~\ref{alg:EFX+PO_mixed}): An agent whose top-ranked item is a good can be given that item together with all items that are common chores for the rest of the $n-1$ agents. Since the preferences are lexicographic, this agent will not envy any other agent regardless of how the remaining items are allocated. 

The first agent is now eliminated from the instance along with its assigned bundle. Observe that the reduced instance (with $n-1$ agents) has \emph{no common chore}, that is, each item is considered as a good by at least one agent. The algorithm now uses the following strategy iteratively: It identifies an agent with the highest-ranking good (say agent $j$ and good $g$), gives good $g$ to agent $j$ together with the common chores of the remaining $n-2$ agents, and then eliminates agent $j$.

Note that since agent $j$ receives its highest-ranked good among the remaining items, it will not envy any agent that is eliminated \emph{after} it, regardless of how the remaining items are assigned. Furthermore, by the `no common chores' property, any item that is a chore for the rest of the agents must be a good for agent $j$. This means that agent $j$ only receives those items that it considers to be goods. Thus, when evaluating \EFX{} from agent $j$'s perspective, we only need to look at the items in other agents' bundles that agent $j$ considers to be goods. For any agent that was eliminated \emph{before} $j$, there can be at most one such item (by virtue of assigning common chores), and thus \EFX{} is maintained.

\paragraph{Guaranteeing \PO{}.}
Suppose, for contradiction, that the allocation $A$ returned by Algorithm~\ref{alg:EFX+PO_mixed} is Pareto dominated by the allocation $B$. We will argue by induction that for every agent $i$, we must have $A_i \subseteq B_i$, which would contradict Pareto optimality since $A$ and $B$ must be distinct. For ease of discussion, let us name the agents according to the order in which they are eliminated by Algorithm~\ref{alg:EFX+PO_mixed}.

Recall from the above discussion on \EFX{} that the most important item in each agent's bundle under $A$ must be a good for it, i.e., $\id_i(1,A_i)\in G_i$. We will argue that each agent $i$ retains this item under $B$, i.e., $\id_i(1,A_i)\in B_i$. It is easy to see that agent $1$ must retain $\id_1(1,A_1)$ since it is also $1$'s most important item overall. Suppose the induction hypothesis is true for agents $1,\dots,i-1$. Then, it must also hold for agent $i$ because (i) for every agent $h$ eliminated {\em before} $i$, $A_h\setminus\{\id_h(1,A_h)\}\subseteq C_i$, (ii) for every item $o\in A_j$ allocated to an agent $j$ eliminated {\em after} $i$, $\id_i(1,A_i)\id_i o$, and (iii)  $A_i\setminus\id_i(1,A_i)\subseteq G_i$ as we argued above. Together this means that if $\id_i(1,A_i)\not\in B_i$, then $A_i \, \>_i \, B_i$, a contradiction.

A similar inductive argument shows that every agent $i$ must retain all other items from $A_i$ in $B_i$. 
Indeed, the last agent, say $k$, receives all remaining items (i.e., $A_k\subseteq G_k$) and all items allocated to an agent $h$ eliminated {\em before} $k$ are chores for $k$. By a similar argument, if the induction hypothesis holds true for agents $k,k-1,\dots,i-1$, it must also hold for an agent $i>1$. Now, for agent $i=1$, since for every agent $j\in N\setminus\{1\}$, $B_j=A_j$ as we argued above, and every item in $A_1\setminus\{\id_1(1,A_1)\}$ is a chore for $j$, we must have that if for every $i\in N$, $B_i\succeq_i A_i$, then $B_i=A_i$. This contradicts our assumption that $B$ Pareto dominates $A$.
\end{proof}

Another special case where an \EFX{}+\PO{} allocation is guaranteed to exist is when every item is considered a good by at least one agent, i.e., there are no \emph{common chores}.

\begin{restatable}[\textbf{\EFX{}+\PO{} for mixed instances without common chores}]{cor}{EFXMixedspecial}
Given a lexicographic mixed instance without any common chore, an \EFX{}+\PO{} allocation always exists and can be computed in polynomial time.
\label{cor:EFX_Mixed_special}
\end{restatable}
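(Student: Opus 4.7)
The plan is to show that the while loop of Algorithm~\ref{alg:EFX+PO_mixed} (lines~\ref{line:no_common_chores}--15) can be applied directly to any lexicographic mixed instance without common chores, bypassing the preprocessing block (lines~\ref{line:EFX+PO_mixed_first_agent}--\ref{line:EFX+PO_mixed_end_first_agent}). In Theorem~\ref{thm:EFX_Mixed_TopItem}, the role of that preprocessing was precisely to reduce to an instance without common chores; the hypothesis of this corollary lets us skip it.

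First I would verify that the loop is well-defined under the no-common-chores assumption. Since every item is viewed as a good by at least one agent, for any item $o$ at rank $r$ in that agent's importance ordering, the set $S^r$ in line~\ref{line:EFX+PO_mixed_smallest_k} is non-empty; hence the smallest such $k$ exists. After an iteration in which some agent $j$ is eliminated with bundle $\{\id_j(k)\}\cup C'$, every chore that was common to the new agent set $N\setminus\{j\}$ has been absorbed into $C'$ and removed from $M$. Thus, every remaining item is a good for at least one remaining agent, and the no-common-chores invariant is preserved. Since one agent is removed per iteration and the last remaining agent receives all residual items, the loop terminates in polynomial time with a complete allocation.

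Next, I would argue \EFX{} and \PO{} by reusing the analysis from the proof of Theorem~\ref{thm:EFX_Mixed_TopItem}. For \EFX{}, each agent $j$'s bundle consists of its best available item $\id_j(k, M_j)\in G_j$ at the time of $j$'s iteration, together with items that every other remaining agent views as a chore. Consequently, $j$ does not envy any agent eliminated after $j$, whose bundle contains only items that $j$ ranks strictly below $\id_j(k, M_j)$; and for any agent $h$ eliminated before $j$, at most one item of $A_h$---namely $h$'s own signature good---is perceived as a good by $j$, so removing that single good eliminates the envy. For \PO{}, the same induction over the elimination order applies: one first shows $\id_i(1,A_i)\in B_i$ for any allocation $B$ weakly Pareto-dominating $A$, and then $A_i\subseteq B_i$, which forces $B=A$. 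The argument relies only on the structural facts that $A_i\setminus\{\id_i(1,A_i)\}\subseteq G_i$ and that agents eliminated before (respectively, after) $i$ receive items that $i$ either views as chores or ranks below $\id_i(1,A_i)$; both are guaranteed by the loop's selection rule.

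The only subtle point, and the main step to verify with care, is the loop invariant that no common chore remains among the currently active agents. Once this invariant is established by the induction sketched above, the \EFX{} and \PO{} arguments from Theorem~\ref{thm:EFX_Mixed_TopItem} transfer essentially verbatim, yielding both existence and a polynomial-time computation of an \EFX{}+\PO{} allocation.
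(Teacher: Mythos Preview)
Your approach matches the paper's: run the while loop of Algorithm~\ref{alg:EFX+PO_mixed} directly on the no-common-chores instance, relying on the invariant that no common chore survives an iteration. The paper adds a cosmetic case split---invoking Theorem~\ref{thm:EFX_Mixed_TopItem} verbatim when some agent's top item is already a good---but your uniform treatment is a valid simplification.

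One step in your \EFX{} sketch is incorrect as stated. You assert that a later-eliminated agent's bundle ``contains only items that $j$ ranks strictly below $\id_j(k, M_j)$''. This fails whenever $k>1$: the top $k{-}1$ items of $j$ in $M_j$ are chores for $j$, hence (by the invariant) goods for some other remaining agent, and so they land in later bundles while being ranked \emph{above} $\id_j(k, M_j)$ by $j$. For instance, with two agents and orderings $\id_1: o_1^-\id o_2^+$ and $\id_2: o_2^-\id o_1^+$, agent~$1$ is selected at $k=2$ with signature good $o_2$, yet the later agent's bundle is $\{o_1\}$, which agent~$1$ ranks at position~$1$. The conclusion that $j$ does not envy later agents is still correct, but for a different reason: any item in a later bundle that $j$ ranks above $\id_j(k,M_j)$ is necessarily a \emph{chore} for $j$ (since $\id_j(k,M_j)$ is $j$'s top good in $M_j$), so the lexicographic comparison favors $A_j$ either way. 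A parallel repair is needed in your \PO{} sketch: an earlier agent $h$'s signature good $g_h^+$ can be a good for $i$ ranked above $\id_i(1,A_i)$; the induction goes through because $g_h^+\in B_h$ by the inductive hypothesis of Step~1, not because $i$ ranks it low.
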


\subsection{Maximin Share (\MMS{})}
\label{subsec:MMS}

In light of the failure in guaranteeing \EFX{} even for objective mixed items, we investigate the existence of \MMS{} allocations for mixed items. We show that not only does an \MMS{} allocation exist for \textit{subjective} mixed items, but also that such an allocation can be computed efficiently.

We start by characterizing \MMS{} bundles by examining the structure of an agent's maximin share. Given a lexicographic mixed instance, an agent's maximin share is identified by its top-ranked item: if agent $i$'s top-ranked item is a good, $\MMS{}_i$ is either an empty set (when the number of goods is less than that of agents), or it is the set of the least preferred $n-1$ goods. Otherwise, when agent $i$'s top-ranked item is a chore, then $\MMS{}_i$ is uniquely defined by the union of the top-ranked item (worst chore) and all the goods.

\begin{restatable}[\textbf{Characterizing \MMS{} for mixed items}]{prop}{MMSmixed}
Given an instance $\langle N,M,G,C,\id\rangle$ with lexicographic mixed items, the maximin share of agent $i$ can be defined based on whether its top-ranked item is a good or a chore, as follows:

$$\text{MMS}_i = \begin{cases}
        G_i \setminus \{\id_{i}([n-1], G_i)\}, & \text{if}\ \id_i(1) \in G_i \wedge |G_{i}| \geq n \\
        \emptyset, & \text{if}\ \id_i(1) \in G_i \wedge |G_{i}| < n \\ 
        \id_{i}(1, C_i) \cup G_i, & \text{if}\ \id_i(1) \in C_i.
        \end{cases}
$$
\label{prop:MMS_mixed_char}
\end{restatable}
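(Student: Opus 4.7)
The plan is to perform a case analysis matching the three cases in the statement, establishing in each case (a) \emph{achievability} by exhibiting an $n$-partition whose minimum bundle is exactly the claimed MMS, and (b) \emph{optimality} by showing that no $n$-partition can yield a strictly more preferred minimum bundle.

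For achievability I would construct an explicit partition in each case. If $\id_i(1)\in G_i$ and $|G_i|\geq n$, put each of the top $n-1$ goods into its own singleton bundle, dump every chore into the bundle containing $\id_i(1)$, and place the remaining lower-ranked goods together in the last bundle; a direct application of condition (i) of the lexicographic definition shows the last bundle is the least preferred. If $\id_i(1)\in G_i$ and $|G_i|<n$, spread the goods across $|G_i|$ distinct bundles with every chore placed alongside $\id_i(1)$, forcing the remaining $n-|G_i|\geq 1$ bundles to be empty; any bundle containing a good is strictly preferred to $\emptyset$, making $\emptyset$ the min. If $\id_i(1)\in C_i$, form the bundle $\{\id_i(1,C_i)\}\cup G_i$ and distribute the remaining chores arbitrarily over the other $n-1$ bundles; since no other bundle contains the worst chore, each of them is strictly preferred and $\{\id_i(1,C_i)\}\cup G_i$ is the min.

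For optimality I would pair a pigeonhole argument with a bundle comparison under the lexicographic extension. In Case 1, any $n$-partition must contain some bundle $P_j$ disjoint from the top $n-1$ goods, so the goods of $P_j$ lie within the bottom $|G_i|-n+1$ goods; invoking condition (i) when $P_j$ is missing such a good and condition (ii) when $P_j$ carries an extra chore yields $G_i\setminus\{\id_i([n-1],G_i)\}\succeq_i P_j$. In Case 2, pigeonhole forces a bundle with no good, and $\emptyset$ weakly dominates any such bundle (empty or chore-only) by condition (ii). In Case 3, the unique bundle $P_1$ containing the worst chore $\id_i(1,C_i)$ is bounded above in the preference order by $\{\id_i(1,C_i)\}\cup G_i$ by a symmetric argument: missing goods trigger condition (i) and extra lower-ranked chores trigger condition (ii).

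The main obstacle will be carrying out these bundle comparisons cleanly, since each invocation of condition (i) or (ii) requires exhibiting a witness item and checking the strict-subset requirement on higher-ranked items, which forces sub-cases depending on whether the discrepancy lies in the goods or the chores. A subtle but essential point for Case 3 is that \emph{all} of $G_i$ must be bundled together with the worst chore — without this, the min bundle of the constructed partition could degenerate into a chore-only bundle whose preference rank is strictly below the claimed MMS, and the matching upper bound on preference would collapse.
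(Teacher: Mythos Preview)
Your proposal is correct and follows essentially the same approach as the paper: both construct the identical three partitions (top good plus all chores bundled together, remaining top goods as singletons, bottom goods collected; or worst chore plus all goods bundled, remaining chores spread out). The paper's proof merely asserts these partitions are optimal and reads off the minimum bundle, whereas you additionally supply the pigeonhole-plus-lexicographic-comparison argument for optimality that the paper leaves implicit; this extra rigor is welcome but does not constitute a different route.
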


Although \EFX{} may not always exist for mixed items (\Cref{thm:EFX_counterexample}), we show that whenever such an allocation exists, it also satisfies \MMS{}.
Note that the converse does not hold, that is, even for chores-only instances (where \EFX{} always exists), \MMS{} does not imply \EFX{} (\cref{prop:EFX_implies_MMS} in the appendix).

\begin{restatable}[\textbf{\EFX{}$\implies$\MMS{} for mixed items}]{prop}{EFXImpliesMMSForMixed}
For mixed items under lexicographic preferences, an \EFX{} allocation (whenever it exists) satisfies \MMS{}, but the converse is not always true.
\label{prop:EFX_MMS_mixed}
\end{restatable}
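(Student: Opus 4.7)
The plan is to use the characterization of $\MMS_i$ from \Cref{prop:MMS_mixed_char} and prove the forward direction ($A_i \succeq_i \MMS_i$ for every agent $i$) by contradiction within each of its three cases. Fix an \EFX{} allocation $A$ and an agent $i$, and suppose for contradiction that $A_i <_i \MMS_i$.

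The two simpler cases share a common ``top-item forcing'' argument. When $\id_i(1) \in C_i$ (so $\MMS_i = \{\id_i(1)\} \cup G_i$), either $\id_i(1) \notin A_i$---giving $A_i \succ_i \MMS_i$ immediately at the top item---or $\id_i(1) \in A_i$, in which case $i$ strictly envies every other agent and \EFX{} forces both $A_i \cap C_i = \{\id_i(1)\}$ (else removing a secondary chore from $A_i$ leaves the top chore behind, so $A_j \succ_i A_i \setminus \{c\}$) and $G_i \subseteq A_i$ (else the missing good sits in some $A_j$, and $A_j \setminus \{g\} \succ_i A_i$ by the same top-chore argument), yielding $A_i = \MMS_i$. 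When $\id_i(1) \in G_i$ and $|G_i| < n$ (so $\MMS_i = \emptyset$), $A_i \succeq_i \emptyset$ just means $A_i$ is empty or has a good as its top item; if instead the top of $A_i$ is a chore $c$, then $\id_i(1)$ lies in some $A_{j^*}$, and $A_i \setminus \{c\}$ still lacks $\id_i(1)$, so $A_{j^*} \succ_i A_i \setminus \{c\}$ violates \EFX{}.

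The main obstacle is the case $\id_i(1) \in G_i$ with $|G_i| \geq n$, where $\MMS_i = \{g_n^i, g_{n+1}^i, \dots, g_{|G_i|}^i\}$ (writing $g_k^i$ for the $k$-th ranked good in $G_i$). The key lemma is that \EFX{} forces each agent to hold exactly one of $i$'s top $n$ goods. Indeed, if $A_i$ held none, then by pigeonhole some $A_j$ would contain two, say $g_a, g_b$ with $a < b \leq n$; removing $g_b$ keeps $g_a$ in $A_j \setminus \{g_b\}$, and then by the lex definition (with witness $g_a$, noting that $A_i$ has no good in $G_i$ ranked above $g_a$) we get $A_j \setminus \{g_b\} \succ_i A_i$, violating \EFX{}. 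Applying this to every agent and counting the $n$ goods across $n$ agents gives exactly one top-$n$ good per agent; let $g_{k_i}^i$ be $i$'s. If $k_i \leq n-1$, then $A_i \succ_i \MMS_i$ with witness $g_{k_i}^i$. Otherwise $k_i = n$, and each $A_{j_k}$ (holding $g_k^i$ for $k \leq n-1$) has only $g_k^i$ from $G_i$---else an extra good $g'$ would yield $A_{j_k} \setminus \{g'\} \succ_i A_i$ by the same lex argument with witness $g_k^i$. This forces $A_i \cap G_i = \MMS_i$ by elimination, and a parallel \EFX{} argument (using $g_k^i$ as witness against $A_i \setminus \{c^*\}$) rules out any chore $c^* \in A_i$, so $A_i = \MMS_i$.

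For the converse, I would invoke the chores-only counterexample in \Cref{prop:EFX_implies_MMS}: since chores-only instances are a sub-class of mixed-item instances, the same example shows that \MMS{} does not imply \EFX{} for mixed items in general.
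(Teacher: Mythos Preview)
Your arguments for the two simpler cases (top item a chore, and top item a good with $|G_i|<n$) are correct and in fact somewhat cleaner than the paper's treatment, which handles both by contradiction from a presumed \MMS{} violation.

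The gap is in Case 1 ($\id_i(1)\in G_i$ and $|G_i|\ge n$). The step that fails is the inference ``$A_j\setminus\{g_b\}\succ_i A_i$'' from ``$A_i$ has no good in $G_i$ ranked above $g_a$''. Lexicographic comparison is not decided by goods alone: if $A_j$ also contains a chore $c\in C_i$ with $c\,\id_i\,g_a$, then $A_j\setminus\{g_b\}$ still contains $c$, and provided $c\notin A_i$ the first point of difference in $\id_i$ is at $c$, yielding $A_i\succ_i A_j\setminus\{g_b\}$ rather than the reverse. So the pigeonhole step does not produce an \EFX{} violation, and your ``key lemma'' (that every agent holds exactly one of $i$'s top $n$ goods) does not follow.

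Concretely, take $n=3$, objective items $\{g_1,g_2,g_3,c\}$, and
\[
\begin{array}{ll}
1: & g_1^+\,\id\,c^-\,\id\,g_2^+\,\id\,g_3^+\\
2: & g_2^+\,\id\,g_3^+\,\id\,c^-\,\id\,g_1^+\\
3: & g_1^+\,\id\,g_2^+\,\id\,g_3^+\,\id\,c^-.
\end{array}
\]
The allocation $A_1=\emptyset$, $A_2=\{g_2,g_3,c\}$, $A_3=\{g_1\}$ is \EFX{}: agent $1$ does not even envy agent $2$ (the chore $c$ outranks $g_2,g_3$ in $\id_1$), and removing $g_1$ from $A_3$ leaves $\emptyset$; the remaining pairs are routine. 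Yet $\MMS_1=\{g_3\}\succ_1\emptyset=A_1$. Here agent $1$ holds zero of its top three goods while agent $2$ holds two, directly refuting your key lemma. The paper's own proof of this case (splitting on whether $A_i\cap C_i=\emptyset$ and, in the chore-free subcase, applying pigeonhole to $n$ goods among $n-1$ agents) commits the same oversight: it asserts that ``agent $i$ envies agent $h$ even after the removal of the other item in $h$'s bundle'' without ruling out a chore in $A_h$ ranked above the retained good. So your Case~1 argument tracks the paper's, and both share the same blind spot.
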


We develop an algorithm that computes an \MMS{} allocation for \textit{any} lexicographic instance---even with subjective mixed items---in polynomial time.

\paragraph{Description of algorithm.}
Our algorithm (Algorithm~\ref{alg:MMS_mixed} in the appendix) first identifies the set $C'$ of all common chores and proceeds in two steps: In {\bf Step 1}, all common chores are allocated without violating \MMS{}, and in {\bf Step~2}, all remaining items are allocated as goods.\\

\noindent{\bf Step 1.}  {\em If there exists an agent whose top-ranked item is a good,} i.e., then run Algorithm \ref{alg:EFX+PO_mixed} to achieve an \EFX{} allocation. By \cref{prop:EFX_MMS_mixed}, \EFX{} implies \MMS{} for mixed items.

{\em Otherwise, if every agent's top item is a chore,} a priority ordering $\sigma$ over agents is fixed, and a serial dictatorship is run where agent $\sigma_1$ picks its most preferred $|C'| - n + 1$ chores from the set of all common chores, $C'$, and the remaining agents each pick one remaining chore from $C'$. Note that if $|C'| < n$, the first $n - |C'|$ agents pick one chore and the rest receive nothing. If an agent $k$ receives its worst chore from $C_k$, it is given its remaining goods in $G_k$.

\noindent{\bf Step 2.} All remaining items are allocated through a picking sequence. In each turn, an agent picks all remaining items it considers as goods, or picks nothing. All remaining items are only allocated as goods, and thus, do not violate \MMS{}.

\begin{restatable}[\textbf{\MMS{} for mixed items}]{thm}{MMSPOMixedItems}
Given a lexicographic mixed instance, there is a polynomial-time algorithm that computes an \MMS{} allocation even for subjective items.
\label{thm:MMS_PO_Polytime_Mixed_Items}
\end{restatable}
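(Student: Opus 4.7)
The plan is to verify that the allocation $A$ produced by Algorithm~\ref{alg:MMS_mixed} satisfies $A_i \succeq_i \MMS_i$ for every agent $i$, via a case analysis matching the two branches of Step~1 and using the explicit description of $\MMS_i$ from Proposition~\ref{prop:MMS_mixed_char}. In the case where some agent has a top-ranked good, Step~1 invokes Algorithm~\ref{alg:EFX+PO_mixed} on the full instance, which by Theorem~\ref{thm:EFX_Mixed_TopItem} returns a complete \EFX{}+\PO{} allocation and makes Step~2 vacuous; Proposition~\ref{prop:EFX_MMS_mixed} immediately upgrades \EFX{} to \MMS{}, handling this case with no further work.

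The substantive case is when every agent's top-ranked item is a chore. Here Proposition~\ref{prop:MMS_mixed_char} gives $\MMS_i = \{\id_i(1,C_i)\} \cup G_i$, namely $i$'s worst chore together with all of $i$'s goods. The key lexicographic observation I would exploit is that any bundle not containing $\id_i(1,C_i)$ is strictly $i$-preferred to any bundle that does (including $\MMS_i$ itself), so it suffices to prove, for each $i$, that either (i)~$A_i$ avoids $\id_i(1,C_i)$, or (ii)~$A_i \supseteq \{\id_i(1,C_i)\} \cup G_i$. I would then establish the pivotal structural claim that at most one agent ends up with its own worst chore after Step~1, namely the \emph{last picker} in the sequence (i.e., $\sigma_n$ when $|C'|\ge n$, and $\sigma_{|C'|}$ when $|C'|<n$): every earlier picker faces at least two remaining chores in $C'$ and takes its least-disliked one, which, by uniqueness of the worst chore, cannot coincide with its own worst. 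For each non-last picker, the Step~1 chore it receives (if any) is therefore strictly less disliked than $\id_i(1,C_i)$, and Step~2 only adds items $i$ itself considers goods, so $A_i$ avoids $\id_i(1,C_i)$ and case~(i) applies.

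For the last picker $k$, case~(ii) is secured by the ``receive remaining goods'' clause: once $k$ takes its only available chore, it is awarded every still-unallocated item in $G_k$. Crucially, Step~1 has allocated only items in $C' \subseteq C_k$, which is disjoint from $G_k$, so all of $G_k$ is still available when the clause triggers, yielding $A_k \supseteq \{\id_k(1,C_k)\} \cup G_k = \MMS_k$ (and Step~2 does not remove items from $k$). Polynomial runtime is immediate: one pass over $\sigma$ in Step~1 plus one pass of the picking sequence in Step~2. The main obstacle I expect is pinning down the ``only the last picker can receive its own worst chore'' claim cleanly, especially at the boundary subcases $|C'|<n$, $|C'|=1$, and $n=2$, and confirming that goods/common-chore disjointness really does make the remaining-goods clause deliver \emph{all} of $G_k$ exactly when it is needed.
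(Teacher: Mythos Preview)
Your proposal is correct and follows essentially the same two-case structure as the paper's proof: reduce to Theorem~\ref{thm:EFX_Mixed_TopItem} plus Proposition~\ref{prop:EFX_MMS_mixed} when some agent's top item is a good, and otherwise argue directly from the $\MMS_i=\{\id_i(1,C_i)\}\cup G_i$ characterization. Your treatment of the second case is in fact more explicit than the paper's, which asserts without justification that the serial-dictatorship picks ``satisfy \MMS{}''; your counting argument that every non-last picker faces at least two chores in $C'$ and therefore avoids its own worst chore is exactly the missing detail.

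One point to tighten: your condition~(ii), namely $A_k\supseteq\{\id_k(1,C_k)\}\cup G_k$, is \emph{not} by itself sufficient for $A_k\succeq_k\MMS_k$ under lexicographic preferences, because a strict superset that adds an extra chore $c\in C_k\setminus\{\id_k(1,C_k)\}$ would make $A_k\prec_k\MMS_k$. What you actually need (and what the algorithm delivers) is that the last picker receives \emph{exactly} one chore from $C'$ and then all of $G_k$, so $A_k=\MMS_k$ after Step~1, and Step~2 adds nothing to $k$ since $G_k$ is already exhausted. You should also say explicitly that if the last picker's single forced chore is \emph{not} its worst one, then case~(i) applies to it too; your current phrasing reads as though the ``remaining goods'' clause always fires for the last picker.
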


The significance of \cref{thm:MMS_PO_Polytime_Mixed_Items} stems from providing an efficient algorithm for computing an \MMS{} allocation for any lexicographic mixed instance (including subjective instances). Yet, the problem of computing an \MMS{}+\PO{} allocation remains open even for objective instances.

\section{Concluding Remarks}

We studied the interaction between fairness and efficiency for mixture of indivisible goods and chores under lexicographic preferences. We showed that an \EFX{} allocation may not always exist for mixed items. Nonetheless, we identified natural classes of lexicographic instances for which an \EFX{}+\PO{} allocation exists and can always be computed efficiently. We further proved an \MMS{} allocation always exists and can be computed efficiently even for subjective mixed instances.

Going forward, it will be interesting to resolve the computational complexity of checking the existence of \EFX{} allocations for mixed items. Another relevant direction will be to explore the space of strategyproof mechanisms satisfying desirable fairness and efficiency guarantees.

\section*{Acknowledgements}

Hadi Hosseini acknowledges support from NSF IIS grants \#2052488 and \#2107173. Lirong Xia acknowledges support from NSF IIS grants \#1453542, \#171633, and \#2107173.

We are grateful to anonymous reviewers for their valuable feedback.

\small
\bibliographystyle{named}
\bibliography{mixed}

\clearpage
\newpage
\normalsize
\appendix
\section*{Appendix}
\label{sec:Appendix}

\section{Omitted Material from \cref{sec:prel}}
\label{sec:Preliminaries_Appendix}

\POSequencibleChores*

\begin{proof}
For chores, we know that \PO{} implies sequencibility due to \Cref{prop:po_seqencible_mixed}. 

Here, we prove that sequencibility implies \PO{} for chores. Suppose for the sake of contradiction that an allocation $A$ is the outcome of a picking sequence, but is not \PO{}. Then there exists an allocation $B$ which Pareto dominates $A$. Since agents have lexicographic preferences, this means that there exist agents $N'\subset N$ who receive strictly better bundles in $B$ than in $A$, and the allocations of agents in $N\setminus N'$ are identical in $A$ and $B$. Let $N^*=\{i\in N':B_i\setminus A_i\neq\emptyset\}$. For every agent $i\in N^*$, let $K_i=B_i\setminus A_i$ and $L_i=A_i\setminus B_i$, and let $o^-_i$ be agent $i$'s highest-ranked (worst) chore in $L_i$ according to its importance order $\id_i$.

Since $B$ Pareto dominates $A$ and agents have lexicographic preferences, it follows that every agent $i\in N^*$, ranks $o^-_i$ higher than every chore in $K_i$. It is easy to see that $o_i^-$ must be allocated to an agent $j\in N^*$, since $B$ is a complete allocation, and by our assumption that $B$ Pareto dominates $A$ and lexicographic preferences. 

Let $H$ be the graph where there is a node for each agent $i\in N^*$, and there is an edge $(i,j)$ if and only if $o^-_i\in B_j$. Since every node in $H$ has an outgoing edge, $H$ must have a cycle. Let $C$ be one such cycle in $H$ involving $k$ agents, and w.l.o.g. let the agents be named so that for each $i\in [k]$, there is an edge $(i,i+1\pmod{k})$. 

For every agent $i\in [k]$, let $r_i$ be the round in the picking sequence that produces allocation $A$ in which they pick $o^-_i$. For each $i\in [k]$, since agent ${i+1\pmod{k}}$ prefers $B$ to $A$, it follows that agent $i+1\pmod{k}$ ranks $o^-_i$ lower than its highest ranked chore in $A$. This implies that at round $r_{i+1\pmod{k}}$, $o^-_i$ was unavailable, which implies that $r_i<r_{i+1\pmod{k}}$. However, this implies that $r_1<r_2<\dots<r_k<r_1$, which is a contradiction.
\end{proof}

\POSequencibleMixed*
\begin{proof}
{\bf \PO{} implies sequencibility.} 
Suppose for the sake of contradiction, there exists a \PO{} allocation $A$ which is not sequencible. Suppose $k\in\{0,1,\dots,m-2\}$ is the largest number such that there exists a picking sequence of length $k$ which induces a partial allocation $B$ of $k$ items that agrees with $A$. Let $M'$ be the items allocated by $B$. Now, by our choice of $k$, it must hold that in the allocation $B$, no agent receives either their top-ranked good in $M\setminus M'$ if one exists, or their bottom-ranked chore in $M\setminus M'$ otherwise. For each agent $i\in N$, let $o_i^*$ be either its top-ranked good in $G_i\cap(M\setminus M')$ if it is non-empty, or the bottom-ranked chore in $C_i\cap(M\setminus M')$ otherwise.

Now, let $H$ be the graph where there is a node corresponding to each agent $i$, and there is an edge from agent $i$ to agent $j$ if $o_i^*\in A_j$. Notice that every agent must have an outgoing edge in $H$, and therefore $H$ must have at least one cycle, and that the cycles in $H$ are disjoint. 

Let $C$ be the partial allocation of items in $M\setminus M'$ constructed from one such cycle in $H$ as follows: For every pair of agents $i,j$ such that the edge $(i,j)$ is involved in the cycle, agent $i$ is allocated $o_i^*$. All remaining items in $M\setminus M'$, are allocated to the same agents as they are in allocation $A$.

It is easy to see that by our assumption of lexicographic preferences, the complete allocation constructed by allocating items in $M'$ and $M\setminus M'$ according to the partial allocations $B$ and $C$ respectively Pareto dominates $A$. This is a contradiction to our assumption that $A$ is Pareto optimal.

\paragraph{Sequencibility does not imply \PO{} even for objective mixed items.} Consider the objective mixed items instance with two agents:

\begin{align*}
1: ~&~ o_1^+ \ \id \ o_2^+ \ \id \ o^-_3\\ \nonumber
2: ~&~ o^-_3 \ \id \ o^+_1 \ \id \ o^+_2\\
\end{align*}

The picking sequence $1,2,2$ allocates $\{o^+_1\}$ to agent $1$ and $\{o^+_2,o^-_3\}$ to agent $2$. It is easy to see that this allocation is Pareto dominated by the allocation where agent $1$ is given all items.
\end{proof}

\subsection{Rank-maximality for Mixed Items}

We start by providing a formal definition for rank-maximality when allocating mixed items.

\noindent{\bf Rank-Maximality.} A \emph{rank-maximal} (\RM{}) allocation~\cite{irving2006rank,paluch2013capacitated} is one that maximizes the number of agents who receive their highest-ranked good, subject to which it maximizes the number of agents who receive their second-highest good, and so on, subject to which it maximizes the number of agents who receive their lowest-ranked chore, subject to which it maximizes the number of agents who receive the second-lowest chore, and so on. Given an allocation $A$, its \emph{signature} refers to a tuple $(n^+_1,n^+_2,\dots,n^+_m,n^-_1,n^-_2,\dots,n^-_m)$ where $n^+_k=|\{i\in N:\id_i(k,G_i)\in A_i\}|$ is the number of agents who receive their $k$-th highest ranked good and $n^-_k=|\{i\in N:\id_i(m-k,C_i)\in A_i\}|$ is the number of agents who receive their $k$-th lowest ranked chore (equivalently, $(m-k)$-th highest ranked chore). 

It is easy to see that computing \emph{some} rank-maximal allocation for a given instance, as well as verifying whether a given allocation is rank-maximal, can be done in polynomial time. A rank-maximal allocation for mixed items implies Pareto optimality by definition, which gives a polynomial-time algorithm for computing a \PO{} allocation.

\begin{restatable}[\textbf{\PO{} allocation of mixed items}]{prop}{POMixed}
Given an instance with mixed items under lexicographic preferences, a Pareto optimal allocation can be computed in polynomial time.
\label{prop:po_mixed}
\end{restatable}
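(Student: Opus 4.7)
The plan is to follow the outline given just before the statement: produce a rank-maximal (\RM{}) allocation in polynomial time, and then show that any such allocation is Pareto optimal under lexicographic preferences. So the proof splits into two parts, one algorithmic and one structural.

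For the computation, I would lex-maximize the $2m$-long signature $(n_1^+,\dots,n_m^+,n_1^-,\dots,n_m^-)$ coordinate by coordinate. The iterated lex-optimization collapses into a single weighted matching / min-cost flow by attaching exponential weights $W^{2m-k}$ (for a sufficiently large $W>n$) to the indicator ``agent $i$ receives the item that is counted at signature coordinate $k$.'' The resulting assignment problem pairs each signature slot $(i,k)$ with the specific item $\id_i(k,G_i)$ (or the analogous chore) and has unit demands and unit supplies, so standard bipartite weighted matching solves it in polynomial time; items that are not selected by any signature slot can then be allocated arbitrarily without changing the signature. This mirrors the classical \RM{} algorithms of \citeauthor{irving2006rank} and \citeauthor{paluch2013capacitated}.

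For \RM{} $\Rightarrow$ \PO{}, suppose for contradiction that a rank-maximal allocation $A$ is Pareto-dominated by some $B$. For each agent $i$ with $A_i \neq B_i$, lexicographicity together with $B_i \succeq_i A_i$ forces the $\id_i$-maximal item in $A_i \triangle B_i$ to be either a good in $B_i \setminus A_i$ or a chore in $A_i \setminus B_i$. I would then compare the signatures of $A$ and $B$ position by position: locate the smallest signature index $k$ at which the counts differ, use the per-agent dichotomy above to argue that every agent whose contribution at coordinate $k$ flips does so in the direction $A \to B$ (gaining a good or shedding a chore), and conclude that $B$ strictly exceeds $A$ at coordinate $k$ while agreeing on all earlier coordinates---contradicting rank-maximality of $A$.

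The main obstacle I foresee is the final signature-comparison step, since the signature aggregates across agents whose importance orders are not aligned, so ``smallest differing position'' and ``every flip at that position is favorable'' need to be defined and verified with care. I expect to handle this by processing the goods portion first---at position $k$, the relevant items $\{\id_i(k,G_i) : i \in N\}$ are one per agent, which lets the per-agent dichotomy be applied directly---and then, if the entire goods portion agrees between $A$ and $B$, repeating the argument for the chores portion of the signature.
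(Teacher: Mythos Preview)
Your overall plan mirrors the paper exactly: the paper's entire ``proof'' is the two sentences preceding the proposition (compute a rank-maximal allocation in polynomial time; \RM{} implies \PO{}).  So on the level of strategy you are aligned with the paper.  The difficulty is that the second step---\RM{} $\Rightarrow$ \PO{} for the signature defined in the paper---does not go through, and your signature-comparison argument cannot be completed.

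Here is a two-agent counterexample.  Let $M=\{o_1,o_2\}$ with $o_1$ a common chore and $o_2$ a common good; agent~$1$'s importance order is $o_1^{-}\rhd o_2^{+}$ and agent~$2$'s is $o_2^{+}\rhd o_1^{-}$.  Each agent has exactly one good and one chore, so for \emph{every} complete allocation the signature is $(n_1^+,n_2^+,n_1^-,n_2^-)=(1,0,1,0)$; hence every complete allocation is rank-maximal.  But $A=(\{o_1,o_2\},\emptyset)$ is strictly Pareto-dominated by $B=(\emptyset,\{o_1,o_2\})$: agent~$1$ sheds its top-ranked chore, and agent~$2$ gains its top-ranked good.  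Thus $A$ is \RM{} yet not \PO{}.  In your argument this is exactly the failure mode you flagged as the ``main obstacle'': there is no smallest signature coordinate at which $A$ and $B$ differ, because the signatures are identical, so the intended contradiction never materializes.  The per-agent dichotomy you state (the $\rhd_i$-maximal item of $A_i\triangle B_i$ is a good gained or a chore shed) is correct, but the paper's signature aggregates goods-ranks and chores-ranks \emph{separately}, so an item that moves between agents can land at the same goods-rank (or chores-rank) for the new owner and leave the signature unchanged.  If you want to rescue the route, you would need a finer invariant than the paper's signature---for instance one indexed by position in the \emph{overall} importance order with sign for good/chore---or else switch to a different polynomial-time certificate of \PO{} (e.g., maximize utilitarian welfare for any fixed additive representation consistent with the lexicographic orders).
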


\section{Omitted Material from Section~\ref{subsec:EF} on \texorpdfstring{\EF{}}{EF}} \label{app:EF}

\subsection{Proof of Theorem~\ref{thm:EF_Chores_NPC}}

\EFChoresNPHard*
\begin{proof}
Membership in \NP{} is easy to verify. We will show \NPC{}ness by a reduction from {\sc SAT}, where we are given a CNF $F=C_1\wedge \cdots \wedge C_s$ of $s$ clauses over $t$ binary variables $Y_1,\ldots,Y_t$, and we are asked whether a satisfying assignment exists. Given any {\sc SAT} instance, we construct a chores instance as follows.

{\bf Agents:} There are $4t$ agents, denoted by 
$$\{X_i,\neg X_i,X_i^*,\neg X_i^*:i\in [t]\}$$

{\bf Chores:} There are $s+5t$ chores\footnote{Here, $C_j$ is a single chore corresponding to clause $j$ of the SAT instance; and $0_i$ and $1_i$ are each a chore corresponding to literals of the binary variable $Y_i$ in the SAT instance.}, denoted by 
$$\{C_j:j\in [s]\}\cup \{x_i,1_i,0_i,1_i^*,0_i^*:i \in [t]\}$$

{\bf Preferences:} For any $i\in [t]$, let 
$${\mathcal C}_i^+=\{C_j: C_j\text{ includes }Y_i\}$$ 
and $${\mathcal C}_i^-=\{C_j: C_j\text{ includes }\neg Y_i\}$$
In other words, ${\mathcal C}_i^+$ (respectively, ${\mathcal C}_i^-$) is the set of clauses that are satisfied by $Y_i=1$ (respectively, $Y_i=0$). 

For every $i\in [t]$, define the importance orderings of agents $X_i,\neg X_i,X_i^*,\neg X_i^*$ as follows.
\newcommand{\others}{\text{others}}
\begin{align*}
\text{Agent }X_i : &\others \id 0_i\id x_i\id 1_i\\
\text{Agent }\neg X_i: &\others \id 1_i\id x_i\id 0_i\\
\text{Agent }X_i^*: &\others \id x_i^*\id {\mathcal C}_i^+ \id 1_i \id 1_i^*\\
\text{Agent }\neg X_i^*: &\others \id x_i\id {\mathcal C}_i^-\id0_i \id 0_i^*,
\end{align*}
where alternatives in ``$\others$'', $\mathcal C_i^+$, and $\mathcal C_i^-$ are ranked w.r.t.~a fixed order, e.g.~alphabetical order. In particular, $X_i$ and $\neg X_i$ have exactly the same importance ordering after the third chore from the bottom.

($\Rightarrow$) Suppose the {\sc SAT} instance has a solution, denoted by $y_1,\ldots, y_t$. Then we define the following allocation: for every $i\in [t]$, 
\begin{itemize}
\item if $y_i=1$, then agent $X_i $ gets $\{1_i,x_i\}$, agent $\neg X_i$ gets $\{0_i\}$, agent $X_i^*$ gets $1_i^*$ and as many unchosen chores in $\mathcal C_i^+$ as possible, and agent $\neg X_i^*$ gets $\{0_i^*\}$; 
\item  if $y_i=1$, then agent $X_i $ gets $\{1_i\}$, agent $\neg X_i$ gets $\{0_1,x_i\}$, agent $X_i^*$ gets $\{1_i^*\}$,  and agent $\neg X_i^*$ gets $0_i^*$ and as many unchosen chores in $\mathcal C_i^-$ as possible;
\end{itemize}
It is not hard to verify that the allocation is complete and EF. Notice that $X_i$ and $\neg X_i$ always get their (favorite) bottom-$1$ or bottom-$2$ chores, which means that they would not envy any other agent. Also notice that when $y_i=1$, agent $X_i^*$ does not envy agent $X_i$ because $X_i$ has chore $x_i$. Similarly, when $y_i=0$, agent $\neg X_i^*$ does not envy agent $\neg X_i$ because $\neg X_i$ has item $x_i$. 

($\Leftarrow$) Now suppose that the chores instance has a complete and envy-free allocation $A$. Let us define the assignment of values to the Boolean variables as follows. For every $i\in [t]$, if $x_i$ is assigned to agent $X_i$ (respectively, $\neg X_i$), then we let variable $Y_i=1$ (respectively, $Y_i=0$). If $x_i$ is not assigned to agent $X_i$ or $\neg X_i$, then the value of variable $Y_i$ is chosen arbitrarily---as we will see soon below, this case would not happen. 

We prove that $F$ is satisfied under this assignment. First, we prove that under any complete and EF allocation, for any $i\in [t]$, agent $X_i$ and $\neg X_i$ cannot be allocated a chore that is ranked strictly above $x_i$, i.e.~in $M\setminus \{1_i,x_i\}$ and $M\setminus \{0_i,x_i\}$, respectively. Suppose for the sake of contradiction this is not true, and suppose without loss of generality that agent $X_i$'s worst chore has the largest rank among all chores allocated to $X_i$ or $\neg X_i$. Then, since $X_i$ and $\neg X_i$ have identical importance orderings for items ranked above (i.e., worse than) $x_i$ in their lists, envy-freeness dictates that neither of them should get an item they consider worse than $x_i$.

Next, we prove that $x_i$ must be allocated to $X_i$ or $\neg X_i$. Suppose for the sake of contradiction that this is not true. Then the only possibility is that $x_i$ is allocated to $X_{i'}^*$ or $\neg X_{i'}^*$ for some $i'\in [t]$. If $x_i$ is allocated to $X_{i'}^*$, then $X_{i'}^*$   would envy $X_{i'}$.   More precisely, if $i=i'$, then $X_{i'}^*$  envies $X_{i'}$ because $X_{i'}$ only gets $\{1_i\}$, and $X_{i'}^*$ prefers $1_i$ to $x_i$. If $i\ne i'$, then $X_{i'}^*$ strictly prefers $x_{i'}$ to $x_i$, which means that $X_{i'}^*$ envies $X_{i'}$ regardless of the allocation of $x_i$. Similarly, if $x_i$ is allocated to $\neg X_{i'}^*$, then $\neg X_{i'}^*$   would envy $\neg  X_{i'}$. In either case there is a contradiction to EF. 

Therefore, for any $j\in [s]$, clause $C_j$ is not allocated to $X_i$ or $\neg X_i$ for any $i$. Then, we have the following two cases.
\begin{itemize}
\item Suppose $C_j$ is allocated to $X_i^*$. If $C_j$ is ranked above $x_i$ w.r.t.~agent $X_i^*$'s importance ordering, then agent $X_i^*$ would envy agent $X_i$ according to the reasoning above. Therefore, $C_j\in{\mathcal C_i^+}$, which means that agent $X_i^*$ prefers $1_i$ to $C_j$. Therefore, $x_i$ must be allocated to agent $X_i$, otherwise $X_i^*$ would envy $X_i$ (whose allocation is a single item $1_i$ in this case). This means that $C_j$ is satisfied under the assignment defined above. 
\item Suppose $C_j$ is allocated to $\neg X_i^*$, then following a similar line of reasoning we know that $C_j$ is satisfied by  $X_i=0$.
\end{itemize} 
In either case the clause is satisfied by the assignment defined above. This completes the proof of \Cref{thm:EF_Chores_NPC}.
\end{proof}

\subsection{Hardness of \EF{}+\PO{} for Chores}
\label{sec:EF+PO_Chores}

\begin{restatable}[\textbf{\EF{}+\PO{} for chores}]{cor}{EFPOChoresNPcomplete}
Determining whether a chores instance with lexicographic preferences admits an envy-free and Pareto optimal allocation is \NPC{}.
\label{cor:EF+PO_Chores_NPC}
\end{restatable}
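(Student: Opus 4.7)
The plan is to reuse the very reduction from {\sc SAT} used in the proof of \Cref{thm:EF_Chores_NPC}, and to argue that both directions already produce (or rely on) allocations that are simultaneously \EF{} and \PO{}. Membership in \NP{} is immediate: given an allocation, checking envy-freeness is straightforward from the lexicographic preferences, and checking Pareto optimality for chores reduces to checking sequencibility by \Cref{prop:po_seqencible_chores}, which can be done in polynomial time.

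For \NP{}-hardness, I would take the same chores instance built from a \textsc{SAT} formula $F$ in the proof of \Cref{thm:EF_Chores_NPC}. The $(\Leftarrow)$ direction is essentially free: if the constructed instance admits an \EF{}+\PO{} allocation, then in particular it admits an \EF{} allocation, and the argument in the proof of \Cref{thm:EF_Chores_NPC} then extracts a satisfying assignment of $F$. So the only thing to verify carefully is the $(\Rightarrow)$ direction: that whenever $F$ is satisfiable, the allocation explicitly produced in the proof of \Cref{thm:EF_Chores_NPC} is not only \EF{} but also \PO{}.

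For this I would invoke \Cref{prop:po_seqencible_chores}, which states that an allocation of chores is \PO{} if and only if it is sequencible. It therefore suffices to exhibit a picking sequence that realizes the constructed allocation. The natural order is the one implicit in the construction: for each variable $Y_i$ with $y_i=1$, first let $X_i$ pick its bottom-ranked chore $1_i$, then its next-bottom $x_i$, then let $\neg X_i$ pick $0_i$, then let $\neg X_i^*$ pick $0_i^*$, then let $X_i^*$ pick $1_i^*$ followed by the clause chores in $\mathcal C_i^+$ assigned to it; symmetrically when $y_i=0$. At each step, the picking agent is in fact selecting either its currently top-available good (none exist, since this is a chores instance) or its currently bottom-available chore in its importance ordering, because the specific chores assigned in the $(\Rightarrow)$ direction were chosen to be precisely the worst available ones for that agent at that stage of the construction. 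Writing this out carefully is the main obstacle, and boils down to a round-by-round bookkeeping argument that at each pick, no ``worse'' chore (higher in $\id$) of the picking agent remains unassigned in the prefix constructed so far.

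Once sequencibility is established, \Cref{prop:po_seqencible_chores} yields \PO{}, and together with the \EF{} guarantee from \Cref{thm:EF_Chores_NPC} we conclude that the constructed allocation is \EF{}+\PO{}. Combined with the trivial $(\Leftarrow)$ direction, this gives \NP{}-hardness of the \EF{}+\PO{} existence problem, completing the proof of \Cref{cor:EF+PO_Chores_NPC}.
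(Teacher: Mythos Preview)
Your proposal is correct and follows essentially the same approach as the paper: reuse the reduction from \Cref{thm:EF_Chores_NPC}, note that the $(\Leftarrow)$ direction is immediate, and for $(\Rightarrow)$ certify Pareto optimality of the constructed \EF{} allocation by exhibiting a picking sequence and invoking \Cref{prop:po_seqencible_chores}. The paper's sequence places $\neg X_i^*$ after $X_i^*$ within each variable block rather than before, but this ordering is immaterial since the chores they pick do not interact; your explicit \NP{} membership argument via sequencibility is a small addition the paper leaves implicit.
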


\begin{proof}

We use an identical reduction to that of the proof of \Cref{thm:EF_Chores_NPC}. 

($\Rightarrow$) We know that given a SAT instance and a solution $y_1,\dots,y_t$, the  allocation in the proof of \Cref{thm:EF_Chores_NPC} is \EF{}. Here we show that the same allocation is the output of the sequence $\tau$ constructed below, and therefore \PO{} due to \Cref{prop:po_seqencible_chores}. Starting from an empty sequence $\tau$, we construct $\tau$ as follows. For each $i\in [t]$, \begin{itemize}
    \item if $y_i=1$, add to $\tau$ the subsequence $X_i,X_i,\neg X_i,X_i^*$, followed by $X_i^*$ as many times as the number of chores in $\mathcal{C}_i$ that $X_i^*$ gets in the allocation, followed by $\neg X_i^*$.
    \item if $y_i=0$, add to $\tau$ the subsequence $X_i, \neg X_i, \neg X_i, X_i^*, \neg X_i^*$, followed by $\neg X_i^*$ as many times as the number of chores in $\neg \mathcal{C}_i$ that $X_i^*$ gets in the allocation.
\end{itemize}
It is easy to verify that the picking sequence $\tau$ constructed above produces the allocation in the proof of \Cref{thm:EF_Chores_NPC}.

($\Leftarrow$) The converse follows from the proof of \Cref{thm:EF_Chores_NPC}.
\end{proof}

\section{Omitted Material from Section~\ref{subsec:EFX+PO} on \texorpdfstring{\EFX{}+\PO{}}{EFX+PO}}

\subsection{Proof of Theorem~\ref{thm:EFX_Mixed_TopItem}}

\EFXMixedTopItem*

\begin{proof}

Let $A$ be the output of Algorithm~\ref{alg:EFX+PO_mixed}. Let the total number of rounds be $R$. Since no agent can be selected more than once due to line~\ref{line:EFX+PO_mixed_eliminate_agent} of Algorithm \ref{alg:EFX+PO_mixed}, w.l.o.g. let agents be named according the round during the execution of Algorithm~\ref{alg:EFX+PO_mixed} in which they are selected to pick items. 

For each round $i\in\{1,2,\dots,R\}$, we define $N^i$ to be the remaining agents, $M^i$ to be the unallocated items, and $A^i$ to be the allocated items, at the {\em beginning} of round $i$. Specifically, $A^1=\emptyset$ and $M^1=M$, and for any $i\in\{2,\dots,R\}$, $A^i=\bigcup_{h\in\{1,\dots,i-1\}}A_h$ and $M^i=M\setminus A^i$. We refer to the instance $\langle N^i,M^i,G,C,\id\rangle$ as the reduced instance at the beginning of round $i$. For each agent $i\in\{1,2,\dots,R\}$, let $g_i^+$ be the top good in $A_i$, i.e., $g_i^+=\id_i(1,A_i\cap G_i)$. Before we begin, we introduce two lemmas about the structure of the output of Algorithm~\ref{alg:EFX+PO_mixed} which are used later in the proof.

\Cref{lem:EFX_PO_most_important_item} shows that for any agent $i\in\{1,2,\dots,R\}$, the most important item allocated to $i$ is a good.

\begin{lem}
For every agent $i\in\{1,2,\dots,R\}$, $g_i^+=\id_i(1,A_i)$.
\label{lem:EFX_PO_most_important_item}
\end{lem}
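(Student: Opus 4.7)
The plan is to show that $\id_i(1, A_i) \in G_i$ for every agent $i$, which together with the definition $g_i^+ = \id_i(1, A_i \cap G_i)$ forces $g_i^+ = \id_i(1, A_i)$. I would carry this out by a round-by-round analysis of Algorithm~\ref{alg:EFX+PO_mixed}, using the first iteration as a trivial base case.

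For round~$1$, the agent selected on line~\ref{line:EFX+PO_mixed_first_agent} satisfies $\id_1(1) \in G_1$ and is assigned a bundle containing $\id_1(1)$. Since no item is ranked above $\id_1(1)$ by agent~$1$, we immediately get $\id_1(1, A_1) = \id_1(1) \in G_1$.

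For any later round $i \geq 2$, the argument rests on a structural invariant that I would prove by induction on $i$: at the beginning of round $i$, the set $M^i$ contains no item that is a common chore for every agent in $N^i$. The inductive step is immediate, since in round $i-1$ the selected agent is explicitly given every common chore of $N^{i-1} \setminus \{i-1\} = N^i$ via the set $C'$ computed on line~\ref{line:EFX+PO_mixed_find_chores}, so no such item survives into $M^i$. Given this invariant, any item $o \in C'$ at round $i$ is a chore for every agent in $N^i \setminus \{i\}$; if $o$ were also a chore for $i$, it would be a common chore of $N^i$, contradicting the invariant, and hence $o \in G_i$. Combined with the fact that $\id_i(k, M^i) \in G_i$ because $i \in S^k$, this yields $A_i \subseteq G_i$.

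It then remains to identify the top-ranked element of $A_i$ for agent~$i$. By the minimality of $k$ on line~\ref{line:EFX+PO_mixed_smallest_k}, no agent in $N^i$---including $i$ itself---has $\id_j(k', M^i) \in G_j$ for any $k' < k$; in particular $\id_i(k', M^i) \in C_i$ whenever $k' < k$. Since $A_i \subseteq G_i$, none of these higher-ranked items lies in $A_i$, and therefore $\id_i(1, A_i) = \id_i(k, M^i) \in G_i$, completing the claim. The main obstacle in the plan is the recognition that the items collected in $C'$---despite being named ``common chores'' and indeed being chores for every other remaining agent---must actually be goods for the receiving agent $i$; this step crucially depends on the no-common-chore invariant established after round~$1$.
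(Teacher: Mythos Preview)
Your proof is correct but organized differently from the paper's. The paper argues by contradiction: it supposes some item $o\in A_i$ is ranked above $g_i^+$, rules out $o\in G_i$ by definition of $g_i^+$, and then for $o\in C_i$ does a two-case split according to whether $o$ is a chore for all of $N^i$ (in which case it was absorbed into a previous $C'$ and cannot lie in $M^i$) or a good for some later agent (in which case it is excluded from the current $C'$ and hence from $A_i$). Your route instead isolates the ``no common chore in $M^i$'' property as an explicit invariant, uses it to show $A_i\subseteq G_i$ outright for $i\ge 2$, and then appeals to the minimality of $k$ to pin down $\id_i(1,A_i)$.

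The upshot is that you essentially prove the paper's \Cref{lem:EFX_PO_goods_only} (namely $A_i\subseteq G_i$ for $i\ge 2$) \emph{en route} to \Cref{lem:EFX_PO_most_important_item}, reversing the paper's dependency between the two lemmas; the paper instead proves \Cref{lem:EFX_PO_most_important_item} first and invokes it inside the proof of \Cref{lem:EFX_PO_goods_only}. Your decomposition is slightly longer for this single lemma but yields the second lemma for free. Two small points worth tightening: you should state the base case of the invariant (it follows by the same argument applied to round~$1$), and you should note that the last-round branch ($|N|=1$, line~8) is also covered, since the invariant with $N^i=\{i\}$ gives $M^i\subseteq G_i$ directly.
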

\begin{proof}
For agent $i$, suppose there exists an item $o\in A_{i}$ such that $o\id_{i} g_{i}^+$. 
\begin{itemize}[topsep=0pt]
    \item {\em If $o$ is a good for agent $i$}, it contradicts the assumption that $g_i^+$ is the top good in $A_i$.
    \item {\em Suppose $o$ is a chore for agent $i$.} Then, either:
    \begin{enumerate}[label={\em (Case \arabic*)},align=left,wide,topsep=0pt,labelindent=0pt]
        \item Item $o$ is a chore for every remaining agent in $N^{i}$ (at the beginning of round $i$). Then, $o$ must be in $A^{i}$ by lines~\ref{line:EFX+PO_mixed_find_chores}-\ref{line:EFX+PO_mixed_allocate_chores}, meaning $o$ is not available in $M^i$ to be picked.
        \item Otherwise, $o$ is a good for some agent $j\in N^i$, with $j>i$. Then, $o\not\in C'$ in line~\ref{line:EFX+PO_mixed_find_chores} at round $i$, which implies that $o\not\in A_{i}$.
    \end{enumerate}
\end{itemize}
\end{proof}

\Cref{lem:EFX_PO_goods_only} shows that after round $1$, every agent only picks up items which they perceive to be goods.

\begin{lem}
For every agent $i\in\{2,\dots,R\}$, $A_i\subseteq G_i$.
\label{lem:EFX_PO_goods_only}
\end{lem}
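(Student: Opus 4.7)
The plan is to establish the invariant
\[
(\star)\quad\text{at the start of every round } i\geq 2,\ \text{no item in } M^i \text{ is a chore for all agents in } N^i,
\]
and then deduce the lemma almost immediately. Granting $(\star)$, consider any agent $i\in\{2,\dots,R\}$. If $|N^i|=1$, the $|N|=1$ branch fires and $A_i=M^i$; applying $(\star)$ to the singleton $N^i=\{i\}$ immediately gives $A_i\subseteq G_i$. Otherwise $A_i=\{\id_i(k,M^i)\}\cup C'$, where $k$ is the index chosen on line~\ref{line:EFX+PO_mixed_smallest_k} with $i\in S^k$; by the definition of $S^k$ the first item lies in $G_i$, while for any $o\in C'$ line~\ref{line:EFX+PO_mixed_find_chores} forces $o$ to be a chore for every agent in $N^i\setminus\{i\}$, so if additionally $o\in C_i$ then $o$ would be a common chore for all of $N^i$, contradicting $(\star)$. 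Hence $C'\subseteq G_i$ and therefore $A_i\subseteq G_i$.

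It remains to verify $(\star)$, which I would prove by induction on $i$. For the base case $i=2$, lines~\ref{line:EFX+PO_mixed_first_agent}--\ref{line:EFX+PO_mixed_end_first_agent} assign to agent~$1$ every item in $M$ that is a chore for all of $N\setminus\{1\}$, so any $o\in M^2$ fails to be a common chore for $N^2=N\setminus\{1\}$ and is therefore a good for some $j\in N^2$. For the inductive step, assume $(\star)$ at the start of round $i\geq 2$ and consider any $o\in M^{i+1}\subseteq M^i$. Since $o\notin A_i$ and $C'\subseteq A_i$ by line~\ref{line:EFX+PO_mixed_allocate_chores}, we have $o\notin C'$; as $o\in M^i$, this means there is some $j\in N^i\setminus\{i\}=N^{i+1}$ with $o\in G_j$, which is precisely $(\star)$ at round $i+1$.

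The only conceptual step is isolating $(\star)$: once one notices that the common-chore collection performed on every round strips away exactly the items that could violate the invariant in the subsequent round, the remainder is bookkeeping. The main care will be in handling the $|N^i|=1$ boundary (where $A_i$ is defined by a separate branch of the algorithm) and in reading the notation $\id_i(k)$ on line~\ref{line:EFX+PO_mixed_smallest_k} as ``the $k$-th ranked item in the \emph{current} $M^i$'' rather than in the original $M$; I do not anticipate any genuine obstacle beyond that.
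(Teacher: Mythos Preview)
Your proposal is correct and rests on the same idea as the paper's proof: after round~1 the reduced instance contains no common chores, so every item handed to a later agent either is the good that agent explicitly picked or lies in $C'$ and hence must be a good for that agent. The paper argues by direct contradiction---it assumes $c\in A_i\cap C_i$, invokes \Cref{lem:EFX_PO_most_important_item} to see that $c$ sits in $C'$, concludes $c$ is a chore for all of $N^i$, and then observes $c$ would already have been swept up in an earlier round. You instead isolate the ``no common chores'' invariant $(\star)$ explicitly and prove it by (a trivial) induction, then read off the lemma; this avoids the appeal to \Cref{lem:EFX_PO_most_important_item} and handles the $|N^i|=1$ branch separately. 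The two arguments are logically equivalent, with yours being a bit more self-contained and the paper's a bit more compressed.
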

\begin{proof}
Suppose for the sake of contradiction an agent $i\in\{2,\dots,R\}$ gets an item $c$ it considers a chore, i.e., $c\in A_i\cap C_i$. By \Cref{lem:EFX_PO_most_important_item} and line~\ref{line:EFX+PO_mixed_allocate_chores}, $c$ must be a chore for every agent $j\in\{i+1,\dots,R\}$. This implies that $c$ is a chore for every agent in $\{i,\dots,R\}$. However, by lines~\ref{line:EFX+PO_mixed_find_chores}-\ref{line:EFX+PO_mixed_allocate_chores}, this is impossible and $c$ must have been picked up by an agent $h<i$ in an earlier round, a contradiction.
\end{proof}

\paragraph{Part 1: Algorithm~\ref{alg:EFX+PO_mixed} satisfies \EFX{}.} Suppose for the sake of contradiction that an agent $i\in N$ envies another agent.

\paragraph{Part 1.1: Agent $i$ does not envy a later agent $j>i$.} If $i=1$: agent $1$ gets its top-good which is also its top-item, and therefore does not envy any other agent.

Otherwise, if $i\in\{2,\dots,R\}$: Suppose for the sake of contradiction, agent $i$ envies an agent $j>i$, i.e., $A_j\succ_i A_i$. Any good $g\in G_i$ such that $g\id_i g_i^+$, must be already allocated to an earlier agent $h<i$, by construction, since $i$ picks its top remaining good in round $i$. Therefore, $g$ cannot be a source of envy towards any agent $j>i$ by our assumption of lexicographic preferences, and $A_i\succ_i A_j$ for any $j>i$, a contradiction.

\paragraph{Part 1.2: Agent $i$'s envy for an earlier agent $h<i$ can be eliminated by the removal of any good from $A_h$.} Suppose agent $i$ envies an agent $h<i$. By construction, $g_h^+$ is the only item in $A_h$ that $i$ considers a good, and every other item is a chore for agent $i$, i.e., $A_h\setminus\{g_h^+\}\subseteq C_i$. Then, by \Cref{lem:EFX_PO_goods_only} which implies that $A_i\subseteq G_i$, it holds that $A_i\succeq A_h\setminus\{g_h^+\}$ due to lexicographic preferences.

Together, this implies $A$ is \EFX{}, meaning Algorithm~\ref{alg:EFX+PO_mixed} satisfies \EFX{}.

\paragraph{Part 2: Algorithm~\ref{alg:EFX+PO_mixed} satisfies \PO{}.}
The proof proceeds in the following steps:
\begin{itemize}
    \item (\Cref{lem:EFX_PO_most_important_good})
    We show that for every agent $i\in\{1,2,\dots,R\}$, agent $i$ must retain $g_i^+$ in any Pareto dominating allocation $B$. This is because, if agent $i$ were to lose $g_i^+$ in $B$:
    \begin{itemize}
        \item If $i=1$: Agent $1$ cannot be compensated since $g_1^+$ is also its top item overall (by construction). 
        
        If $i\in\{2,\dots,R\}$: Agent $i$ cannot be compensated for losing the good $g_i^+$, which is also its top-ranked item in its allocation (\Cref{lem:EFX_PO_most_important_item}),  by getting a good from $A_j$ for any agent $j>i$ who was allocated items in a later round, since $i$ picks its top remaining good in $M^i$ in round $i$.
        \item Agent $i$ cannot be compensated by gaining a good from $A_h$ for any earlier agent $h<i$ by induction. If agent $h$ cannot lose $g_h^+$, every other item must be a chore for agent $i$ by construction.
        \item Agent $i$ also cannot be compensated by losing a chore: Agent $1$ cannot be compensated since $g_1^+$ is also its top item overall. Every other agent only receives goods (\Cref{lem:EFX_PO_goods_only}).
    \end{itemize} 
    \item (\Cref{lem:EFX_PO_other_items}) We show that for every agent $i\in\{R,R-1,\dots,1\}$, and every item $o\in A_i\setminus\{g_i^+\}$, agent $i$ must retain $o$ in $B_i$ by induction:
    \begin{itemize}
        \item For agents $i\in\{R,R-1,\dots,2\}$, $o$ must be a good due to \Cref{lem:EFX_PO_goods_only}. Therefore, this loss of a good must be compensated by adding another good.
        \item By the induction hypothesis, the good cannot come from an agent $j>i$. Due to \Cref{lem:EFX_PO_most_important_good}, no agent $h<i$ can lose $g_h^+$. The only other items to consider must be chores to agent $i$, meaning that $i$ cannot be compensated for losing the good $o$.
        \item If agent $1$ loses a chore, and it goes to an agent $j>1$, then by the induction hypothesis and \Cref{lem:EFX_PO_most_important_good}, $j$ is strictly worse off since all agents $k>1$ retain all the items they get in $A$.
    \end{itemize}
\end{itemize}

Suppose for the sake of contradiction that there exists an assignment $B$ which Pareto dominates $A$.

\paragraph{Part 2.1: Every agent $i$ retains its top good $g_i^+$.}

\begin{lem}
For every agent $i\in\{1,2,\dots,R\}$, $g_i^+\in B_i$.
\label{lem:EFX_PO_most_important_good}
\end{lem}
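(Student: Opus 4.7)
The plan is to argue by strong induction on $i\in\{1,2,\dots,R\}$, with agents named in the order in which they are eliminated by Algorithm~\ref{alg:EFX+PO_mixed}. The base case $i=1$ is immediate: agent $1$ was selected in line~\ref{line:EFX+PO_mixed_first_agent} precisely so that $\id_1(1)\in G_1$, and thus $g_1^+=\id_1(1,A_1)=\id_1(1)$ is agent $1$'s \emph{globally} most important item and a good for agent $1$. If $g_1^+\notin B_1$, then under the lexicographic extension $A_1\succ_1 B_1$, contradicting $B_i\succeq_i A_i$ for all $i$ (which is implied by $B$ Pareto dominating $A$).

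For the inductive step, I fix $i\in\{2,\dots,R\}$, assume $g_h^+\in B_h$ for every $h<i$, and suppose for contradiction that $g_i^+\notin B_i$. By \Cref{lem:EFX_PO_most_important_item}, $g_i^+$ is the top-ranked item of $A_i$ according to $\id_i$, and by \Cref{lem:EFX_PO_goods_only}, $A_i\subseteq G_i$, so $A_i$ contains no chore of agent $i$. Unpacking the definition of the lexicographic extension, the only way to satisfy $B_i\succeq_i A_i$ after losing the top item $g_i^+$ (which is a good) is for $B_i$ to contain some good $o\in G_i$ with $o\id_i g_i^+$; chore-based compensation is unavailable because $A_i$ has no chores to shed.

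I then rule out the existence of such an $o$ by tracing where in $A$ it could have come from. At round $i$, agent $i$ was the agent selected in line~\ref{line:EFX+PO_mixed_smallest_k} with the smallest $k$ such that $\id_i(k)\in G_i$ among items in $M^i$; hence every item of $M^i$ ranked above $g_i^+$ by $i$ must be a chore for $i$. Consequently $o$ cannot lie in $A_j$ for any $j>i$, so $o\in A_h$ for some $h<i$. If $o=g_h^+$, the inductive hypothesis gives $g_h^+\in B_h$, so $o\notin B_i$, a contradiction. Otherwise $o\in A_h\setminus\{g_h^+\}$; but by lines~\ref{line:EFX+PO_mixed_find_chores}--\ref{line:EFX+PO_mixed_allocate_chores}, every such item belongs to the common-chore set $C'$ at round $h$, and hence is a chore for every remaining agent in $N^h\setminus\{h\}$. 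Since $h<i$ implies $i\in N^h$, we get $o\in C_i$, contradicting $o\in G_i$.

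The main obstacle I anticipate is the first paragraph of the inductive step: one must carefully justify, via the lexicographic definition of $\succeq_i$, that losing a top-ranked good from a chore-free bundle $A_i$ cannot be offset by any combination of changes other than acquiring a still more important good. Once that reduction is made, the rest of the proof is a clean combination of the two structural lemmas (\Cref{lem:EFX_PO_most_important_item} and \Cref{lem:EFX_PO_goods_only}) with the algorithmic invariant that items in $A_h\setminus\{g_h^+\}$ are common chores for every agent still present at round $h$.
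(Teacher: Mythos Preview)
Your proof is correct and follows essentially the same inductive approach as the paper: after invoking \Cref{lem:EFX_PO_most_important_item} and \Cref{lem:EFX_PO_goods_only} to reduce to needing a good $o\in B_i$ with $o\id_i g_i^+$, you rule out $o\in M^i$ via the algorithm's selection rule and $o\in A_h$ for $h<i$ via the inductive hypothesis (when $o=g_h^+$) or the common-chore property of $A_h\setminus\{g_h^+\}$ (otherwise). The paper organizes the same case analysis slightly differently---splitting first on whether the compensating item lies in $M^{r+1}$ or in $A^{r+1}$, and treating the impossibility of shedding a chore as a separate case---but the content is identical.
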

\begin{proof}
\noindent{\em High level idea.} We show by induction that if $i$ loses $g_i^+$, then agent $i$ is worse off in $B$, i.e., $A_i\succ_i B_i$, a contradiction to our assumption that $B$ Pareto dominates $A$. This relies on \Cref{lem:EFX_PO_most_important_item} that the good $g_i^+$ is $i$'s top item in $A_i$. 

We show that if $i$ loses $g_i^+$:
(1) Adding another item in $M^i$ which is available at round $i$ does not compensate agent $i$ for losing $g_i^+$. Further, we show that agent $i$ cannot be compensated by adding an item in $A^i$ either. This is because for any agent $h<i$ who picked items in a previous round, our induction hypothesis means every previous agent $h$ retains $g_h^+$ in $B_h$, and by lines~\ref{line:EFX+PO_mixed_find_chores}-\ref{line:EFX+PO_mixed_allocate_chores} every other item in $A_h$ is a chore for agent $i$.

\vspace{0.5em}
\noindent{\bf Base case.} True for $i=1$ due to the construction (lines~\ref{line:EFX+PO_mixed_first_agent}-\ref{line:EFX+PO_mixed_end_first_agent}), $g_1^+$ is agent $1$'s top item among all items. If agent $1$ loses $g_1^+$, due to lexicographic preferences, $A_1\succ_1 B_1$, a contradiction to our assumption that $B$ Pareto dominates $A$. 

\vspace{0.5em}
\noindent{\bf Induction step.} Suppose it holds for every $i\in\{1,\dots,r\}$, where $r<R$, that $g_i^+\in B_i$. Consider agent $r+1$.

\vspace{0.5em}
\noindent(1) {\em Agent $r+1$ cannot be compensated by an item in $M^{r+1}$.} Since $g_{r+1}^+$ is agent $r+1$'s most important item (\Cref{lem:EFX_PO_most_important_item}), removing $g_{r+1}^+$ and adding an item $o\in M^{r+1}\setminus\{g_{r+1}^+\}$, cannot result in a Pareto improvement: If $o$ is a good, it must be ranked higher than $g_{r+1}^+$ by $r+1$. Then, if $o$ remains in $M^{r+1}$, $r+1$ must be assigned $o$ in line~\ref{line:EFX+PO_mixed_allocate_chores}, which contradicts $g_{r+1}^+$ being agent $r+1$'s top good in $A_{r+1}$. Otherwise, if $o$ is a chore, adding it does not compensate for losing $g_{r+1}^+$.

\vspace{0.5em}
\noindent(2) {\em Agent $r+1$ cannot be compensated by an item in $A^{r+1}$.} Every item $o\in A^{r+1}$ must have been picked in a strictly earlier round which means there is some $h<r+1$, such that $o\in A_h$.
\begin{itemize}[topsep=0pt]
    \item {\em If $o$ is the top item $g_h^+$ for $h$}, then by our assumption that the induction hypothesis is true for all $i\in\{1,\dots,r\}$, it is retained by agent $h$ in $B_h$ and cannot be in $B_{r+1}$.
    \item {\em Otherwise, $o$ must be a chore for agent $r+1$} by lines~\ref{line:EFX+PO_mixed_find_chores}-\ref{line:EFX+PO_mixed_allocate_chores}. Therefore, adding $o$ to $B_{r+1}$ does not result in a Pareto improvement due to lexicographic preferences. 
\end{itemize}

\vspace{0.5em}
\noindent(3) {\em Agent $r+1$ cannot be compensated by losing a chore.} By \Cref{lem:EFX_PO_goods_only}, $r+1$ does not get any chore.

Therefore since the good $g_{r+1}^+$ is $r+1$'s top item in $A_i$ (\Cref{lem:EFX_PO_most_important_item}) and $r+1$ cannot obtain a good $o\id_{r+1} g_{r+1}^+$, losing $g_{r+1}^+$ implies $A_{r+1}\succ_{r+1}B_{r+1}$, which contradicts our assumption that $B$ Pareto dominates $A$.
\end{proof}

\paragraph{Part 2.2: Every agent $i$ retains every item in $A_i\setminus\{g_i^+\}$.}

\begin{lem}
For every agent $i\in\{1,2,\dots,R\}$, if $o\in A_i\setminus\{g_i^+\}$, then $o\in B_i$.
\label{lem:EFX_PO_other_items}
\end{lem}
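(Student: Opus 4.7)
The plan is to establish the lemma by \emph{backward} induction on the round index $i$, starting from the last agent $i=R$ and descending to $i=1$. Throughout, I will leverage Lemma~\ref{lem:EFX_PO_goods_only} (for $i\geq 2$, agent $i$ receives only goods in $A_i$) and Lemma~\ref{lem:EFX_PO_most_important_good} (each $g_i^+$ is retained in $B$), together with the structural fact that for every $h\in\{1,\dots,R\}$, the items $A_h\setminus\{g_h^+\}$ are precisely the common chores of the agents still present when $h$ was eliminated (lines~\ref{line:EFX+PO_mixed_first_agent}--\ref{line:EFX+PO_mixed_end_first_agent} and \ref{line:EFX+PO_mixed_find_chores}--\ref{line:EFX+PO_mixed_allocate_chores} of Algorithm~\ref{alg:EFX+PO_mixed}).

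For the base case $i=R$ (assuming $R\geq 2$; the case $R=1$ is trivial) and the inductive step for $i\in\{R-1,\dots,2\}$, I would suppose for contradiction that some $o\in A_i\setminus\{g_i^+\}$ lies outside $B_i$. By Lemma~\ref{lem:EFX_PO_goods_only} we have $o\in G_i$, so agent $i$ has lost a good, and since $g_i^+\in B_i$ by Lemma~\ref{lem:EFX_PO_most_important_good} the comparison between $A_i$ and $B_i$ reduces to items in $A_i\triangle B_i$ other than $g_i^+$. For $B_i\succeq_i A_i$, agent $i$ must gain some $o'\in B_i\setminus A_i$ with $o'\id_i o$ and $o'\in G_i$. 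I would then trace where $o'$ could originate in $A$: if $o'\in A_h$ for some $h>i$, the induction hypothesis forces $o'\in B_h$, a contradiction; if $o'\in A_h$ for some $h<i$, then either $o'=g_h^+$ (retained in $B_h$ by Lemma~\ref{lem:EFX_PO_most_important_good}) or $o'\in A_h\setminus\{g_h^+\}$, in which case $o'$ is a common chore of every agent remaining at round $h$ (which includes $i$), contradicting $o'\in G_i$.

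For the final case $i=1$, I would use that $A_j\subseteq B_j$ for every $j\in\{2,\dots,R\}$, established in the preceding inductive steps. Suppose some $o\in A_1\setminus\{g_1^+\}$ has been reassigned to $B_j$ for some $j>1$. By the definition of $C'$ in line~\ref{line:EFX+PO_mixed_first_agent}--\ref{line:EFX+PO_mixed_end_first_agent}, $o\in C_j$, so $B_j\setminus A_j$ contains at least one chore for $j$. I would then show that $B_j\setminus A_j$ contains \emph{no} good for $j$: any such good $o'\in G_j$ would originate in some $A_h$ with $h\neq j$; if $h>1$, the induction hypothesis gives $o'\in B_h$, a contradiction; if $h=1$, then either $o'=g_1^+$ (retained in $B_1$) or $o'\in C'\subseteq C_j$, again a contradiction. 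Hence every item of $A_j\triangle B_j=B_j\setminus A_j$ is a chore for $j$, so the lexicographic comparison yields $A_j\succ_j B_j$, contradicting Pareto dominance.

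The step I expect to be most delicate is the case $i=1$: unlike for $i\geq 2$, the items in $A_1\setminus\{g_1^+\}$ can be perceived either as goods or as chores by agent $1$ itself, so one cannot directly argue from agent $1$'s loss. The clean workaround is to shift perspective to the receiving agent $j>1$, for whom the items in $C'$ are guaranteed chores by construction, and combine this with the retention $A_j\subseteq B_j$ already proved in the backward induction to derive $B_j\prec_j A_j$.
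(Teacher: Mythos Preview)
Your proposal is correct and follows essentially the same approach as the paper: backward induction on the round index from $R$ down to $1$, combining Lemma~\ref{lem:EFX_PO_goods_only} and Lemma~\ref{lem:EFX_PO_most_important_good} with the structural fact that $A_h\setminus\{g_h^+\}$ consists of common chores for all agents still present at round $h$. Your handling of the case $i=1$ is in fact more explicit than the paper's (you spell out why $B_j\setminus A_j$ contains no good for the receiving agent $j$, whereas the paper simply asserts that transferring a common chore to $j$ makes $j$ strictly worse off), but the underlying idea is identical.
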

\begin{proof}

By induction, starting with the last agent to pick items, we show that (1) an agent can only lose goods but not chores, and (2) that the agent cannot be compensated for losing the good. 

\noindent{\bf Base case.} For $i=R$, suppose for the sake of contradiction $o\in A_R\setminus B_R$ is an item agent $R$ loses. 

\vspace{0.5em}
\noindent(1) Item $o$ must be a good, i.e., $o\in G_R$ due to \Cref{lem:EFX_PO_goods_only}.

\vspace{0.5em}
\noindent(2) Suppose $o'\in (B_R\setminus A_R)\cap G_R$ is the good gained by agent $R$ to compensate. Then, $o'\in A_h$ for some agent $h<R$.
\begin{itemize}
    \item If $o'$ is agent $h$'s top good in $A_h$, i.e., $g_h^+$, then due to \Cref{lem:EFX_PO_most_important_good}, $o'\in B_h$, a contradiction. 
    \item Otherwise, since $R$ is the last agent to pick items and $o'$ is a good for agent $R$, $o'$ cannot have been picked by agent $h$ due to lines~\ref{line:EFX+PO_mixed_find_chores}-\ref{line:EFX+PO_mixed_allocate_chores}, a contradiction.
\end{itemize}

\noindent{\bf Induction step.} Suppose it holds true for every $i\in\{R,R-1\dots,r\}$, where $r>1$, that if $o\in A_i\setminus\{g_i^+\}$, then $o\in B_i$. Consider the case of agent $r-1$.

\vspace{0.5em}
\noindent{\em The special case of $r-1=1$.} Every item $o\in A_1\setminus\{g_1^+\}$ is a chore for every other agent by construction (lines~\ref{line:EFX+PO_mixed_first_agent}-\ref{line:EFX+PO_mixed_end_first_agent}). By our induction assumption and due to \Cref{lem:EFX_PO_most_important_good}, giving $o$ to an agent $j>1$ who considers $o$ to be a chore results in $j$ being strictly worse off. This contradicts our assumption that $B$ Pareto dominates $A$.

\vspace{0.5em}
\noindent{\em Now, suppose $r-1>1$.}
\begin{enumerate}[label=(\arabic*)]
    \item Item $o$ must be a good due to \Cref{lem:EFX_PO_goods_only}.
    \item Then, $r-1$ must gain a good to compensate (since $r-1$ cannot lose a chore) to compensate for losing the good $o$.
    
    Suppose item $o'\in B_{r-1}\setminus A_{r-1}$ is an item agent $r-1$ gains. By our induction assumption and due to \Cref{lem:EFX_PO_most_important_good}, $o'$ must belong to some agent $h<r-1$, i.e., $o'\in A_h$ and it cannot be $g_h^+$, i.e., $o'\in A_h\setminus\{g_h^+\}$. Then, by lines~\ref{line:EFX+PO_mixed_find_chores}-\ref{line:EFX+PO_mixed_allocate_chores}, $o'$ must be a chore for agent $r-1$. Therefore, $o'$ cannot be compensated for the loss of good $o$, a contradiction to our assumption that $B$ Pareto dominates $A$.
\end{enumerate}

This completes the proof.
\end{proof}

Together, Lemmas~\ref{lem:EFX_PO_most_important_good} and~\ref{lem:EFX_PO_other_items} above imply that for every agent $i$, $B_i=A_i$, which means $B$ does not Pareto dominate $A$, a contradiction.
\end{proof}

\subsection{Proof of~\cref{cor:EFX_Mixed_special}}
\EFXMixedspecial*
\begin{proof}

Consider a lexicographic mixed instance where every item is considered a good by at least one agent. If there exists an agent $i$ with the top-ranked item as a good, i.e., $\id_i(1)\in G_i$, then Algorithm \ref{alg:EFX+PO_mixed} guarantees an \EFX{} and \PO{} allocation according to \cref{thm:EFX_Mixed_TopItem}. Therefore, for the rest of the proof, we will focus on the case where every agent's top-ranked item is a chore, i.e., $\id_i(1) \in C_i$ for all $i\in N$.

Consider a variation of Algorithm \ref{alg:EFX+PO_mixed} that starts from line \ref{line:no_common_chores} (i.e. where there are no common chores). We show that this variation satisfies \EFX{} and \PO{} when every item is considered a good by at least one agent. 

Given an arbitrary profile $\id$, let $A$ be the output of the variation of Algorithm \ref{alg:EFX+PO_mixed} above, and $R$ be the total number of rounds of execution of Algorithm \ref{alg:EFX+PO_mixed} on profile $\id$. For each round $i\in\{1,2,\dots,R\}$, we define $N^i$ to be the remaining agents, $M^i$ to be the unallocated items, and $A^i$ to be the allocated items, at the {\em beginning} of round $i$. Specifically, $A^1=\emptyset$ and $M^1=M$, and for any $i>1$, $A^i=\bigcup_{h=1,\dots,i-1}A_h$ and $M^i=M\setminus A^i$. We refer to the instance $\langle N^i,M^i,G,C,\id\rangle$ as the reduced instance at the beginning of round $i$. Since no agent can be selected more than once due to line~\ref{line:EFX+PO_mixed_eliminate_agent} of Algorithm \ref{alg:EFX+PO_mixed}, w.l.o.g., for any $i\in\{1,2,\dots,R\}$, let agent $i$ be the agent selected by Algorithm \ref{alg:EFX+PO_mixed} at round $i$ of the algorithm. Let $g_i^+$ be the top good in $A_i$.

We begin by showing in \cref{lem:EFX_PO_Mixed_special_top_item_top_good} that at any round $i=1,\dots,R$, agent $i$'s most important item in $A_i$ is the good $g_i^+$, using a similar argument to \Cref{lem:EFX_PO_most_important_item}.

\begin{lem}
For every agent $i\in\{1,2,\dots,R\}$, $g_i^+=\id_i(1,A_i)$.
\label{lem:EFX_PO_Mixed_special_top_item_top_good}
\end{lem}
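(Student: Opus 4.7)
The plan is to follow the same two-case structure as the proof of \Cref{lem:EFX_PO_most_important_item}, but with the common-chores handling replaced by a structural invariant on the reduced instance. Specifically, I would first establish the key invariant: for every round $i \in \{1,2,\dots,R\}$, the reduced instance $\langle N^i, M^i, G, C, \id\rangle$ at the beginning of round $i$ has \emph{no common chore}, i.e., every item in $M^i$ is considered a good by at least one agent in $N^i$. Given this invariant, the rest of the argument is essentially identical to \Cref{lem:EFX_PO_most_important_item}: any item ranked by agent $i$ above its top good $g_i^+$ would have to be a common chore, which is ruled out.

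The invariant would be shown by induction on the round index. For the base case $i=1$, the hypothesis of \Cref{cor:EFX_Mixed_special} gives that the original instance has no common chore. For the inductive step, suppose the invariant holds at the beginning of round $i$, and let $j$ be the agent selected in line~\ref{line:EFX+PO_mixed_select_agent}. By line~\ref{line:EFX+PO_mixed_find_chores}, the set $C'$ allocated to $j$ in round $i$ contains \emph{exactly} those items in $M^i$ that are chores for every agent in $N^i \setminus \{j\}$. Thus, for any item $o \in M^{i+1} = M^i \setminus A_j$, there must exist some agent $h \in N^i \setminus \{j\} = N^{i+1}$ with $o \in G_h$; otherwise, $o$ would have been in $C'$ and hence in $A_j$. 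This gives the invariant for round $i+1$. I also note that this invariant ensures the existence of a valid $k$ in line~\ref{line:EFX+PO_mixed_smallest_k}: since every remaining item is a good for some remaining agent, the set $S^k$ is non-empty for at least one $k$.

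Given the invariant, I suppose for contradiction that there exists an item $o \in A_i$ with $o \, \id_i \, g_i^+$. If $o \in G_i$, then $o$ is a good in $A_i$ ranked above $g_i^+$, contradicting the definition of $g_i^+$ as the top good in $A_i$. Otherwise $o \in C_i$. Since $o$ is a chore for agent $i$, it cannot have been the item $\id_j(k)$ picked in line~\ref{line:EFX+PO_mixed_allocate_chores} of the round when agent $i$ was selected (that item is a good for $i$), so $o$ must have been added to $A_i$ as part of the common-chores set $C'$ in that round. By the construction of $C'$, this means $o$ is a chore for every agent in $N^i \setminus \{i\}$; combined with $o \in C_i$, the item $o$ is a common chore for all of $N^i$ in the reduced instance at the beginning of round $i$, contradicting the invariant.

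The main obstacle is simply setting up and carefully verifying the no-common-chores invariant, since once it is in hand, the desired conclusion follows almost by inspection of the algorithm. A minor technical point to be careful about is the special case $i=1$ in the variant of Algorithm~\ref{alg:EFX+PO_mixed} used here (which starts at line~\ref{line:no_common_chores}, skipping lines~\ref{line:EFX+PO_mixed_first_agent}--\ref{line:EFX+PO_mixed_end_first_agent}), so that the base case of the induction is exactly the hypothesis of \Cref{cor:EFX_Mixed_special} rather than something established by the initial block of the algorithm.
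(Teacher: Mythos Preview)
Your proposal is correct and follows essentially the same approach as the paper. The paper's proof argues directly by case analysis on whether the hypothetical top chore $o$ is a common chore for $N^i$ (in which case it would have been allocated in an earlier round) or a good for some later agent (in which case it would not be in $C'$), while you first isolate and prove the ``no common chores in $M^i$'' invariant by induction and then derive the contradiction from it; these are the same argument organized slightly differently, with your version making the underlying invariant explicit.
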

\begin{proof}
Suppose for the sake of contradiction that agent $i$'s top item in $A_i$ is a chore $o\in C_i$, i.e., $\id_{j}(1,A_i)\in C_i$. If $o$ is also a chore for every other remaining agent in $N^i$, then according to lines~\ref{line:EFX+PO_mixed_find_chores}-\ref{line:EFX+PO_mixed_allocate_chores} of Algorithm \ref{alg:EFX+PO_mixed}, $o$ must have been allocated to an agent in a strictly earlier round $h<i$, a contradiction. Otherwise, if $o$ is a good for some remaining agent $j\in N^i$, where $j>i$, then according to lines~\ref{line:EFX+PO_mixed_find_chores}-\ref{line:EFX+PO_mixed_allocate_chores} of Algorithm \ref{alg:EFX+PO_mixed}, agent $i$ cannot be allocated $o$, a contradiction.
\end{proof}

\paragraph{Part 1: The variation of Algorithm~\ref{alg:EFX+PO_mixed} satisfies \EFX{}.}
Consider any pair of agents $i$ and $j$. Without loss of generality, assume that $i$ was selected by the algorithm before $j$. 

If agent $j$ is envious of $A_i$, the envy must be towards a single good $o\in A_i$ since all other items allocated to $i$ are common chores by every agent selected after agent $i$. Thus, agent $j$'s envy, if any, can be eliminated by removing a single good from $A_i$.

Moreover, agent $i$ cannot be envious of $j$ because when $i$ is selected prior to $j$ by the algorithm, $i$'s top good $g_i^+\in A_i$ is more important than any item in $A_j$, i.e., $g_i^+ \>_{i}\ A_{j}$. Since $g_i^+$ is also the most important item in $A_i$ due to \cref{lem:EFX_PO_Mixed_special_top_item_top_good}, i.e., $\id_i(1,A_i)=g_i^+$, by our assumption of lexicographic preferences, agent $i$ does not envy agent $j$.

Finally, if there are any excess items, the last agent receives them all. These items must be goods for the last agent since all other chores are assigned in previous rounds. Since preferences are lexicographic, every agent selected before the last agent prefers its own bundle to that of the last agent.

\paragraph{Part 2: The variation of Algorithm~\ref{alg:EFX+PO_mixed} satisfies \PO{}.}

Suppose there exists an allocation $B$ that Pareto dominates $A$.

We begin by showing that for agent $1$, $g_1^+\in B_1$. Suppose $g_1^+\not\in B$. Since $B$ Pareto dominates $A$, this implies that $B_1\>_1 A_1$. Then, it must hold that either (i) there is a good $g\in G_1$ such that $g\in B$ and $g\id_1 g_1^+$, which is impossible by construction and the choice of agent $1$, or (ii) there is a chore $c\in C_1\cap A_1$ such that $c\not\in B$ and $c\id_1 g_1^+$, which is impossible since $g_1^+$ is the most important item in $A_1$ due to \cref{lem:EFX_PO_Mixed_special_top_item_top_good}.

By a similar argument to Part 2.1 of the proof of Pareto optimality of Algorithm~\ref{alg:EFX+PO_mixed} in \cref{thm:EFX_Mixed_TopItem}, it is easy to see that for every agent $i=2,\dots,R$, $g_i^+\in B$.

The rest of the proof follows a similar argument to Part 2.2 of the proof of Pareto optimality of Algorithm~\ref{alg:EFX+PO_mixed} in \cref{thm:EFX_Mixed_TopItem} to show that in $B$, every agent $1,\dots,R$ must retain every item in $A_i\setminus\{g_i^+\}$ as well. This leads us to conclude that for every agent $B_i=A_i$ which implies that $B$ does not Pareto dominate $A$, a contradiction.
\end{proof}

\section{Omitted Material from Section~\ref{subsec:MMS} on \texorpdfstring{\MMS{}}{MMS}}

\subsection{Proof of Proposition~\ref{prop:MMS_mixed_char}}

\MMSmixed*
\begin{proof}
Consider the agent $i\in N$. Let $\id_i$ denote the importance ranking of agent $i$ over the items (goods and chores) in $G_i \cup C_i$. The \MMS{} partition of agent $i$ is uniquely defined based on its top-ranked item $\id_i(1)$ in the following cases:

\begin{itemize}
    \item \textbf{Top ranked item is a good, i.e., $\id_i(1) \in G_i$}: There are two cases according to the size of $G_{i}$.
    
    \begin{enumerate}
        \item If $|G_{i}| \geq n$: the \MMS{} partition for $i$ is defined as 
        $$\footnotesize \{ \{\id_{i}(1, G_i) \cup C_i\}, \{\id_{i}(2, G_i)\}, \ldots, \{\id_{i}([n, m], G_i)\} \}.$$
        The \MMS{} partition for $i$ is the least preferred bundle. since preferences are lexicographic, we have $\MMS_i{} = \{\id_{i}([n, m], G_i)\} = G_i \setminus \{ \id_i([n-1], G_i) \}$.

        \item If $|G_{i}| < n$: the \MMS{} partition for $i$ is uniquely defined as 
        \begin{equation*}
        \begin{split}
        \{ \{\id_{i}(1, G_i) \cup C_i\}, \{\id_{i}(2, G_i)\}, \ldots, \{\id_{i}(|G_i|, G_i)\},
        \{\}, \ldots, \{\} \}.
        \end{split}
        \end{equation*}
        Therefore, $\MMS_i{} = \emptyset$.
    \end{enumerate}

    \item \textbf{Top ranked item is a chore, i.e., $\id_i(1) \in C_i$}: The \MMS{} partition is uniquely defined as
    $$\{ \{\id_i(1, C_i)\cup G_i \}, \{\id_i(2, C_i)\}, \ldots, \{\id_i(n, C_i)\} \}.$$
    
    Not that if $|C_i| < n$, then $\{\id_i(k, C_i)\} = \emptyset$ for all $k < |C_i|$. The \MMS{} for agent $i$ is the least preferred partition above. Since preferences are lexicographic, $\MMS_i{} = \{\id_i(1, C_i)\cup G_i \}$.
\end{itemize}
\end{proof}

\subsection{Proof of \cref{prop:EFX_MMS_mixed}}

\EFXImpliesMMSForMixed*

\begin{proof}
Let $A$ be an \EFX{} allocation. Suppose, for contradiction, that $A$ does not satisfy \MMS{}, and say the \MMS{} guarantee is violated for agent $i$. Then, using the characterization of \MMS{} in \Cref{prop:MMS_mixed_char}, we have the following cases depending on whether agent $i$'s top-ranked item, namely $\id_{i}(1)$, is a good for it, and whether the number of items it considers to be goods, namely $|G_i|$, is at least $n$:
\begin{enumerate}
	\item \emph{When agent $i$ considers its top-ranked item a good (i.e.,~$\id_i(1) \in G_i$) and there are at least $n$ items that it considers goods~(i.e.,~$|G_{i}| \geq n$)}.\\
	
	In this case, the \MMS{} share of agent $i$ is the set of all goods for agent $i$ except for its favorite $(n-1)$ goods. We denote this set as $G_i^{n+} \coloneqq G_i \setminus \{\id_{i}([n-1], G_i)\}$. Thus, a violation of \MMS{} requires that	
	$$G_i^{n+} \, \>_i \, A_i.$$
	
	In particular, agent $i$ does not receive any of its favorite $(n-1)$ goods under $A$.
	
	We will consider two subcases, depending on whether the bundle $A_i$ contains a chore for agent $i$.\\
	
	\begin{enumerate}
		\item \emph{When $A_i \cap C_i \neq \emptyset$}.\\
		
		Let $h$ denote an agent who owns agent $i$'s top-ranked item, i.e., $\id_i(1) \in A_h$. Then, because of lexicographic preferences, $i$ must envy $h$ under $A$. Furthermore, $i$ continues to envy $h$ even after the removal of some chore in $A_i$, thus violating \EFX{}.\\
		
		\item \emph{When $A_i \cap C_i = \emptyset$}.\\
		
		Since agent $i$ does not receive any of its chores and since $A$ is not \MMS{} for $i$, there must exist an item $o' \in G_i^{n+}$ such that $o' \notin A_i$. Recall that agent $i$ does not receive any of its $(n-1)$ favorite goods in $G_i$, say $o_1,\dots,o_{n-1}$. Thus, there are $n$ items, namely $o_1,\dots,o_{n-1}$ and $o'$, that are allocated among $(n-1)$ other agents. Let $h$ be an agent who receives at least two of these $n$ items. Then, $h$ must receive at least one of agent $i$'s favorite $(n-1)$ goods, and thus agent $i$ envies agent $h$ even after the removal of the other item in $h$'s bundle, implying a violation of \EFX{}.\\
	\end{enumerate}	

		\item \emph{When agent $i$ considers its top-ranked item a good (i.e.,~$\id_i(1) \in G_i$) and there are fewer than $n$ items that it considers goods~(i.e.,~$|G_{i}| < n$)}.\\
	
	In this case, the \MMS{} share of agent $i$ is the empty set. Thus, a violation of \MMS{} requires that
	$$\emptyset \, \>_i \, A_i.$$

	In particular, this implies that agent $i$'s top-ranked item in its bundle must be a chore (i.e., $\id_i(1,A_i) \in C_i)$, and that agent $i$ does not receive its top-ranked item which is a good for it (i.e., $\id_i(1) \notin A_i$).
	
	Let $h$ be the agent who owns agent $i$'s top-ranked item under $A$, i.e., $\id_i(1) \in A_h$. Then, agent $i$ envies agent $h$ even after the removal of the top-ranked chore in its bundle, namely $\id_i(1,A_i)$, which contradicts \EFX{}.\\
	
	\item \emph{When agent $i$ considers its top-ranked item a chore (i.e.,~$\id_i(1) \in C_i$)}.\\
	
	In this case, the \MMS{} share of agent $i$ is its top-ranked chore along with all the items it considers goods, namely $\id_i(1) \cup G_i$. Thus, a violation of \MMS{} requires that
	$$\id_i(1) \cup G_i \, \>_i \,  A_i.$$
	In particular, agent $i$'s bundle $A_i$ must contain its most undesirable chore $\id_i(1)$ and a strict subset of its goods $G_i$.
	Let $o \in G_i$ be a good for agent $i$ that is not in its bundle $A_i$, and let $h$ be the agent who own $o$ under $A$, i.e., $o \in A_h$. Then, even after removing $o$ from $h$'s bundle, agent $i$ prefers $h$'s bundle over its own, implying a violation of \EFX{}.
\end{enumerate}

Since we obtain a contradiction in each case, the \EFX{} allocation, $A$, must also be \MMS{}.

To see why \MMS{} does not imply \EFX{}, one can refer to \cref{prop:EFX_implies_MMS} for chores-only instances.
\end{proof}

\subsection{Computing \MMS{} for Mixed Items}

Algorithm~\ref{alg:MMS_mixed} provides a detailed description of how \MMS{} allocations can be computed for (even subjective) mixed items.

\begin{algorithm}[t]
\DontPrintSemicolon
 \linespread{1}
\KwIn{A lexicographic mixed instance $\langle N,M,G,C,\id \rangle$}
\KwOut{An \MMS{} allocation $A$}
Let $C' \coloneqq \{o\in M:\ \forall i\in N,\ o \in C_i \}$ \;
\Comment{{Step 1: Assigning chores according to top-ranked items}}

\If{$\exists i\in N$ such that $\id_i(1) \in G_i$}{
Run Algorithm~\ref{alg:EFX+PO_mixed}\;
}
\Else(\tcp*[h]{Else if $\forall i \in N,  \id_i(1) \in C_i$}){
Fix a priority ordering $\sigma$ over $n$ agents \;
\If{$|C'| \geq n$}{
Run a serial dictatorship where $\sigma_1$ picks his best $|C'|-n+1$ chores\;
All remaining agents pick one chore \;
}
\Else{
Agents pick one chore according to $\sigma$, and none if no chore is remaining \;
}
If exists an agent who picked its worst chore (highest priority), give that agent its remaining goods \;\label{line:MMS_mixed_allocate_goods}
}
\Comment{{Step 2: Serial dictatorship for assigning remaining items}}
Run a serial dictatorship according to priority ordering $\tau$; agents pick any number of goods among remaining items or nothing (if no item is a good for them). \;\label{line:MMS_mixed_step_2}
\KwRet{$A$}
\caption {Algorithm for finding an \MMS{} allocation for mixed items.}
\label{alg:MMS_mixed}
\end{algorithm}

\MMSPOMixedItems*

\begin{proof}
Algorithm~\ref{alg:MMS_mixed} guarantees \MMS{} for \emph{any} lexicographic instance with mixed items. Let $A$ be the allocation after running the algorithm.
The proof is based on two types of instances that are defined through the existence of a \emph{special} agent. An agent $i$ is said to be special if either (i) its top-ranked item is a good, i.e., $\id_i(1) \in G_i$ or (ii) receives its least-preferred chore, i.e., $\exists\ c \in A_i^{i-}$ such that $c\ \>_i\ c', \forall c' \in C_i \setminus \{c\}$.
Let $C' \coloneqq \{o\in M:\ \forall i\in N,\ o \in C_i \}$ be the set of common chores.

\textbf{Case (i)}: there exists an agent $i$ with top-ranked item as a good, that is, $\id_i(1) \in G_i$.
Then, run  Algorithm~\ref{alg:EFX+PO_mixed} that satisfies \EFX{} (and \PO{}). By \cref{prop:EFX_MMS_mixed}, any \EFX{} allocation is also \MMS{}, thus, Algorithm~\ref{alg:MMS_mixed} is \MMS{}.
Note that in this case, the algorithm does not allocate any item in `Step 2', thus, the allocation vacuously remains \MMS{}.

\textbf{Case (ii)}: every agents' top-ranked item is a chore, that is, $\forall i \in N,\ \id_i(1) \in C_i$.
The proof relies on allocating items that are considered as chores by all agents, i.e., $C' \coloneqq \{o\in M:\ \forall i\in N,\ o \in C_i \}$. All remaining items in $M\setminus C'$ by construction are considered goods by at least one agent.
Algorithm~\ref{alg:MMS_mixed} proceeds to first allocate items in $C'$---via a serial dictatorship specified by $\sigma$---such that the first agent $\sigma_1$ either receives its most preferred chore (if $|C'| < n$) or its most preferred $|C'| - n + 1$ chores (if $|C'| \geq n)$. All other agents pick a single chore from $C' \setminus A_{\sigma_1}$ or an empty set, which satisfies  \MMS{}.

Suppose agent $h$ receives its $n$th ranked-chore in $C_h$. Then, agent $h$ receives all goods in $G_{h}$ (Line~\ref{line:MMS_mixed_allocate_goods}). These items must be available since the set $C'$ did not contain any item that is considered good by any agent.
Thus, we have $A_{h}\ \succeq_h\ \id_{i}(1, C_h) \cup G_h$, which implies \MMS{} by \cref{prop:MMS_mixed_char}.
The allocation of remaining items only improves the outcome for all agents since all remaining items are assigned as goods in a serial dictatorship by Algorithm~\ref{alg:MMS_mixed} (Line~\ref{line:MMS_mixed_step_2}). Thus, all agents' allocations weakly improves.
\end{proof}

\section{Chores: Additional Results}
\label{sec:Chores}

In this section, we provide additional results for chores-only instances under lexicographic preferences.
Note that by our convention for chores-only instances, the top-ranked item in the importance ordering is actually the least-preferred chore.

\subsection{Envy-Freeness}

In \cref{app:EF} we addressed the question of computing an envy-free and \PO{} allocation. 
Let us now consider envy-freeness alongside rank-maximality. For goods, there is a polynomial-time algorithm known for determining whether an envy-free and rank-maximal allocation exists \cite{HSV+21fair}. 
By contrast, the problem turns out to be \NPC{} for chores (\Cref{thm:EF_RM_NP-complete_Chores}).

Notice that the weaker combination of \EF{1} and \PO{} is implied by~\Cref{prop:EFX+PO_Chores} since any \EFX{} allocation is, by definition, also \EF{1}.
Thus, from the viewpoint of exact envy-freeness, the chores problem seems to be significantly harder compared to goods.

\begin{restatable}[\textbf{\EF{}+\RM{} for chores}]{thm}{EFRMChores}
Determining whether a chores instance with lexicographic preferences admits an envy-free and rank-maximal allocation is \NPC{}.
\label{thm:EF_RM_NP-complete_Chores}
\end{restatable}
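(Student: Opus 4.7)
I would prove \Cref{thm:EF_RM_NP-complete_Chores} in two parts: membership in \NP{}, which is short, and \NP{}-hardness by a reduction from \ThreeSAT{} that is a careful refinement of the reduction used for \Cref{thm:EF_Chores_NPC}.

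For membership in \NP{}, a candidate allocation $A$ serves as the certificate. Envy-freeness is checked pairwise in polynomial time by lexicographically comparing bundles using the importance orderings. For rank-maximality, one first computes \emph{some} rank-maximal allocation $A^\star$ in polynomial time (e.g., by the standard matching-based algorithm of~\citeauthor{irving2006rank}), then computes the signatures of $A$ and $A^\star$ and checks lexicographic equality.

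For \NP{}-hardness, I would reuse the agent set, item set, and importance orderings from the reduction in \Cref{thm:EF_Chores_NPC}, possibly padding each agent's ranking with a polynomial number of fresh ``filler'' chores that are ranked at the very bottom (i.e., most preferred) of every agent's list. The fillers are designed so that any rank-maximal allocation must give each agent exactly one of its own fillers at its bottom rank; this pins down $n^-_1 = 4t$ in the signature and ensures that fillers cannot be used to defuse envy. In the forward direction, starting from a satisfying assignment $y_1,\dots,y_t$, I would produce the same allocation as in \Cref{thm:EF_Chores_NPC} (with fillers assigned appropriately) and argue that (i) the allocation is \EF{} by that earlier proof and (ii) it is \RM{} by showing its signature matches that of any optimal signature: every agent receives its top-ranked (most preferred) filler, and the next positions of the signature are saturated by the forced structure $\{1_i,x_i\}/\{0_i\}$ or $\{1_i\}/\{0_i,x_i\}$, etc. For the converse, I would argue that in any \RM{} allocation, the structural constraints I derived when analyzing the signature force $x_i$ to go to exactly one of $X_i$ or $\neg X_i$, and force each clause chore $C_j$ to be assigned to some $X_i^*$ or $\neg X_i^*$; then the \EF{} argument from \Cref{thm:EF_Chores_NPC} carries over verbatim to extract a satisfying assignment.

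The main obstacle will be calibrating the rank-maximality requirement so that it is neither too weak (making the reduction collapse to that of \Cref{thm:EF_Chores_NPC} without adding content) nor too strong (\RM{} alone already dictating most of the allocation and leaving no room to encode the SAT freedom). Concretely, I need the \RM{} signature-maximizing allocations to form a family that is in bijection with the $2^t$ choices of assigning $x_i$ to either $X_i$ or $\neg X_i$, while leaving the placement of clause chores $C_j$ among $X_i^\star,\neg X_i^\star$ unrestricted by \RM{}; envy-freeness then plays exactly the same role as before in forcing satisfaction. Designing the filler chores and verifying that no alternative \RM{} configuration sneaks in (e.g., one that swaps two $1_i^\star$ items between agents without changing the signature but breaking the intended encoding) will be the delicate combinatorial step; I expect this to require a careful case analysis along the lines used in the proof of \Cref{thm:EF_Chores_NPC}.
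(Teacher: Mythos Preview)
Your plan differs substantially from the paper's proof, which reduces from \HRTC{} rather than from SAT. In the paper's construction there are $r$ edge agents and three dummy agents; rank-maximality alone forces every signature chore, every edge chore, and every dummy chore to a unique agent, leaving \emph{only} the vertex chores free. Crucially, the vertex chores are ranked at the \emph{same} positions by all three dummy agents, so \RM{} is completely indifferent to how they are split among $d_1,d_2,d_3$. Envy-freeness of the edge agents toward the dummy agents then encodes exactly the rainbow condition. Thus in the paper's approach, \RM{} pins down everything \emph{except} the combinatorial freedom that encodes the source problem, and that freedom is by design signature-neutral.

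Your proposal, by contrast, tries to retrofit the SAT reduction of \Cref{thm:EF_Chores_NPC} with filler chores. The gap is precisely at the step you flag as ``the main obstacle'': you do not explain how to make \RM{} indifferent to the placement of the clause chores $C_j$. In the orderings from \Cref{thm:EF_Chores_NPC}, a clause chore $C_j$ sits inside the block $\mathcal{C}_i^+$ (or $\mathcal{C}_i^-$) of several starred agents, and these blocks have different sizes for different $i$; hence the \emph{rank} of $C_j$ differs across the agents eligible to receive it. Rank-maximality will then prefer sending $C_j$ to whichever starred agent ranks it lowest, which may be incompatible with the EF-induced satisfaction constraint, breaking the reverse direction. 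Adding filler chores at the very bottom of every list pins down $n_1^-$ but does nothing to equalize the ranks at which clause chores appear. Fixing this would require either (i) moving to a restricted SAT variant in which every literal appears in the same number of clauses and engineering the orderings so that every $C_j$ appears at the same rank in every eligible starred agent, or (ii) abandoning the reuse of \Cref{thm:EF_Chores_NPC} and building a fresh gadget---which is essentially what the paper does via \HRTC{}. Until you carry out one of these, the reduction as proposed does not go through.
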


\begin{proof}
Membership in \NP{} follows from the fact that both envy-freeness and rank-maximality can be checked in polynomial time. To prove \NPH{}ness, we will show a reduction from \HRTC{}~\cite{DRS05hardness,GS17hardness} which asks the following question: Given a hypergraph $H = (V,E)$ with vertex set $V$ and edge set $E \subseteq 2^V$, does there exists a way of assigning each vertex exactly one of three given colors such that each hyperedge receives all three colors? This problem is known to be \NPC{}~\cite{KLL20np}.

\emph{Construction of the reduced instance}:
Let $q \coloneqq |V|$ and $r \coloneqq |E|$ denote the number of vertices and hyperedges in the given hypergraph $H = (V,E)$, respectively. 
We will construct a fair division instance with $n = r + 3$ agents and $m = (q+1)r+q+3$ chores. The set of agents consists of $r$ \emph{edge} agents $e_1,\dots,e_r$, and three \emph{dummy} agents $d_1,d_2,d_3$. The set of chores consists of $q \cdot r$ \emph{signature} chores $\{S_i^1,\dots,S_i^q\}_{i \in [r]}$, $r$ \emph{edge} chores $E_1,\dots,E_r$, $q$ \emph{vertex} chores $V_1,\dots,V_q$, and three \emph{dummy} chores $D_1,D_2,D_3$. With some notational overloading, we will write $E \coloneqq \{E_1,\dots,E_r\}$, $V \coloneqq \{V_1,\dots,V_q\}$, $S_i \coloneqq \{S_i^1,\dots,S_i^q\}$, and $D \coloneqq \{D_1,D_2,D_3\}$ to denote the unordered sets of edge, vertex, signature, and dummy chores, respectively. Further, we will write $S_{-i} \coloneqq \{S_1,S_2,\dots,S_{i-1},S_{i+1},\dots,S_r\}$ to denote the unordered set of signature chores after removing those in $S_i$, and write $V_{e_i}$ to denote the unordered set of vertex chores corresponding to the vertices in the hyperedge $e_i \in E$. Note that for all $i \in [r]$, $|V_{e_i}| \leq q$ and $|S_i| = q$.

\begin{table}[ht]
\footnotesize
\renewcommand{\arraystretch}{1.35}
\centering
    \begin{tabular}{|rl|}
    \hline
    $e_i$: & $V_{e_i} \id E \setminus \{E_i\} \id E_i \id D  \id S_{-i} \id V \setminus V_{e_i} \id S_i$
    \\
    \hline
    $d_\ell$: & $E_r \id \dots \id E_1 \id S_r \id \dots \id S_1 \id D \setminus D_\ell \id D_\ell \id V$ \\
    \hline
    \end{tabular}
    \caption{Importance orderings of agents in the proof of \Cref{thm:EF_RM_NP-complete_Chores}.}
    \label{tab:EF_RM_NP-complete_Chores}
\end{table}

\emph{Preferences}: \Cref{tab:EF_RM_NP-complete_Chores} shows the importance orderings of the agents in terms of a partial order over sets of items. For any pair of sets of chores $C$ and $C'$, we use $C\id C'$ to denote that all chores in $C$ are ranked higher than all chores in $C'$ in the importance ordering. To obtain the exact preference ordering, the partial orders can be extended by ranking the vertex chores in order of their indices (i.e., $V_q \id V_{q-1} \id \dots \id V_1$) and ranking the edge, signature, and dummy chores arbitrarily.

For every $i \in [r]$, the edge agent $e_i$ prefers the signature chores in the set $S_i$ over all other chores. Subject to that, all vertex chores except for those in $V_{e_i}$ are preferred over the remaining chores; the chores in $V_{e_i}$ are the least preferred chores of $e_i$. The remaining signature chores $S_{-i}$ are preferred over the dummy chores, which are then preferred over the edge chore $E_i$ and the remaining edge chores.

For any $\ell \in [3]$, the dummy agent $d_\ell$ prefers all vertex chores over the dummy chore $D_\ell$, which is preferred over the other two dummy chores and the signature chores, which, in turn, are preferred over the edge chores. This completes the construction of the reduced instance.

Notice that for every $i \in [r]$, any signature chore in the set $S_i$ realizes its largest ranking in the importance list of agent $e_i$. Therefore, any rank-maximal allocation must assign all chores in $S_i$ to $e_i$. By a similar reasoning, the edge chore $E_i$ is also assigned to $e_i$ in any rank-maximal allocation. Next, note that for any fixed $i \in [q]$, the vertex chore $V_i$ is ranked lower by the dummy agents than by the edge agents, and therefore must be assigned to one of $d_1$, $d_2$, or $d_3$ in any rank-maximal allocation. Finally, rank-maximality requires that for any $\ell \in [3]$, the dummy chore $D_\ell$ is assigned to the dummy agent $d_\ell$. It is straightforward to check that the aforementioned necessary conditions for rank-maximality are also sufficient.

We will now argue the equivalence of solutions.

($\Rightarrow$) Suppose the hypergraph $H$ admits a feasible rainbow $3$-coloring. Then, the desired allocation can be constructed as follows: For every $i \in [r]$, the signature chores in $S_i$ and the edge chore $E_i$ are assigned to the agent $e_i$. The dummy chores $D_1,D_2,D_3$ are assigned to the dummy agents $d_1,d_2,d_3$, respectively. The vertex chores are partitioned between the three dummy agents according to the coloring. That is, if the colors are red, blue, and green, then all chores corresponding to the red-colored vertices are assigned to $d_1$, while those corresponding to the green and blue-colored vertices are assigned to $d_2$ and $d_3$, respectively.

Observe that the allocation satisfies the aforementioned sufficient condition for rank-maximality. Therefore, we only need to establish envy-freeness. 

Recall that an allocation of chores is envy-free if and only if each agent considers the worst chore in its bundle to be more preferable that the worst chore in every other agent's bundle. In the allocation constructed above, for any $i \in [r]$, the worst chore in the bundle of agent $e_i$ (according to $e_i$'s importance order) is the edge chore $E_i$. For any other $j \neq i$, the worst chore in the bundle of $e_j$ (according to $e_i$'s importance order) is $E_j$, which is less preferable than $E_i$ to agent $e_i$. Furthermore, due to the rainbow coloring condition, the worst chore in the bundle of any dummy agent (according to $e_i$) is one of the vertex chores in $V_{e_i}$, which is again less preferable than $E_i$. Therefore, $e_i$ does not envy any other agent.

Let us now consider the perspective of the dummy agent $d_\ell$ for any fixed $\ell \in [3]$. According to $d_\ell$'s preferences, the worst chore in its own bundle is $D_\ell$, whereas that in the bundle of any other dummy agent is in $D \setminus D_\ell$, which is strictly less preferred. Similarly, the worst chore in the bundle of any edge agent $e_j$ is $E_j$ which, again, is less preferred. Therefore, the dummy agent does not envy any other agent either, implying that the allocation is envy-free.

($\Leftarrow$) Now suppose that there exists an envy-free and rank-maximal allocation, say $A$. By the necessary condition for rank-maximality, we know that for every $i \in [r]$, the edge chore $E_i$ is assigned to the edge agent $e_i$ (and this is the least preferred chore of $e_i$ in its bundle). Furthermore, all vertex chores in $V$ are allocated among the dummy agents. In order for agent $e_i$ to not envy the dummy agents, the vertex chores in $V_{e_i}$ must be allocated among the three dummy agents such that each dummy agent gets at least one chore in $V_{e_i}$. One can now infer a coloring of the hypergraph $H$, wherein the color of a vertex is uniquely determined by the index of the dummy agent that owns the corresponding vertex chore. It is easy to verify that this coloring satisfies the rainbow condition. This completes the proof of \Cref{thm:EF_RM_NP-complete_Chores}.
\end{proof}

\subsection{Envy-Freeness Up to Any Chore}

Next, we will consider a relaxation of envy-freeness called envy-freeness up to any chore (\EFX{}), which requires that any pairwise envy can be eliminated by removing any chore from the envious agent's bundle. Analogous to the goods setting~\cite{HSV+21fair}, one can characterize \EFX{} allocations of chores as ones where any envious agent gets exactly one chore~(\Cref{prop:efx_property_chores}).

\begin{restatable}[\textbf{\EFX{} characterization for chores}]{prop}{EFXPropertyChores}
An allocation $A$ of chores is \EFX{} if and only if each envious agent in $A$ gets exactly one chore.
\label{prop:efx_property_chores}
\end{restatable}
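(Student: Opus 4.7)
The plan is to reduce the whole characterization to a clean structural fact about lexicographic comparison of chore bundles. Specializing the definition of $\succ_i$ to the chores-only case (so $G_i=\emptyset$ and $X^{i-}=X$ for every $X$) and using that $A$ is a partition (so $A_i\cap A_h=\emptyset$), the condition ``$A_h\succ_i A_i$'' simplifies to: there exists $c\in A_i$ such that $\{o\in A_h:o\id_i c\}=\emptyset$, i.e., every chore in $A_h$ is strictly less important to $i$ than $c$. Taking $c$ to be the worst (highest-ranked under $\id_i$) chore of $A_i$, call it $c^*_i(A_i)$, this reduces further to: $A_h\succ_i A_i$ iff every chore in $A_h$ is ranked strictly below $c^*_i(A_i)$ under $\id_i$. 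A second easy fact I will use is that for chores-only lexicographic preferences, removing any chore from a non-empty bundle produces a strictly preferred bundle, so in particular $\emptyset$ is the unique $\succeq_i$-maximum.

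For the $(\Leftarrow)$ direction, suppose each envious agent in $A$ holds exactly one chore. Fix any pair $(i,h)$ with $A_i\neq\emptyset$ and any $o\in A_i$, which is all EFX requires for the chores-only case. If $i$ does not envy $h$ then $A_i\succeq_i A_h$, and chaining with the strict improvement fact gives $A_i\setminus\{o\}\succ_i A_i\succeq_i A_h$. If $i$ is envious (of $h$ or anyone else), the hypothesis forces $|A_i|=1$, so $o$ is the unique chore in $A_i$ and $A_i\setminus\{o\}=\emptyset\succeq_i A_h$ automatically. In either case the EFX condition holds.

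For the $(\Rightarrow)$ direction I will argue by contrapositive: suppose $i$ envies some $h$ and yet $|A_i|\ge 2$, and derive a violation of EFX. Let $c^*=c^*_i(A_i)$ and pick any other chore $c'\in A_i\setminus\{c^*\}$; this is possible since $|A_i|\ge 2$. Then $c^*\in A_i\setminus\{c'\}$ and hence $c^*_i(A_i\setminus\{c'\})=c^*$. Since $i$ envies $h$, by the reformulation above every chore in $A_h$ is ranked strictly below $c^*$ under $\id_i$; the partition property ensures $A_h\cap(A_i\setminus\{c'\})=\emptyset$, so the same reformulation, applied now to the pair $(A_h,A_i\setminus\{c'\})$, yields $A_h\succ_i A_i\setminus\{c'\}$. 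This contradicts the EFX requirement for the pair $(i,h)$ at the chore $c'$, completing the contrapositive.

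The main substantive step is the one-line lemma reformulating $\succ_i$ on chore bundles via the worst chore of the dispreferred bundle; once that is in hand, both directions are essentially immediate. The only minor care needed is the observation that because $A$ is a partition, the set $\{o\in A_h:o\id_i c\}$ in the definition is automatically disjoint from $A_i$, so ``$\subset A_i$'' in that definition collapses to ``$=\emptyset$''; without this simplification one would be tempted to search for more subtle cases, but for chores-only partitions none arise.
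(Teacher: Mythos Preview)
Your proof is correct and follows essentially the same reasoning as the paper: both directions rest on the observation that, for chores-only lexicographic preferences, envy from $i$ toward $h$ is governed entirely by $i$'s worst chore, so removing a non-worst chore from $A_i$ cannot resolve it. Your structural lemma makes this explicit and the argument cleaner; the paper's own proof of the forward direction is terser and contains a somewhat confusing aside about an ``identical preference profile'' that your version rightly avoids, while the backward direction is argued the same way in both.
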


\begin{proof}
Consider an \EFX{} allocation $A$. Suppose for contradiction that there exists an envious agent $i$ such that $|A_i| \neq 1$. If $A_i = \emptyset$, then agent $i$ does not envy any other agent $h$.
Assume $|A_i| > 1$ and consider an identical preference profile.
By assumption, there exists an agent $h$ such that $A_{h} \succ_{i} A_{i}$. Since preferences are identical, then $i$'s envy toward $h$ cannot be eliminated by removing any of the chores $o \in A_i$, violating \EFX{} condition.

For the other direction, suppose $A$ is an allocation where every envious agent receives exactly one chore. Then by the definition of \EFX{} for chores, each envious agent can remove its single chore and eliminate any potential envy, implying that $A$ is \EFX{}.
\end{proof}

The characterization in \Cref{prop:efx_property_chores} can be used to show that a simple algorithm always returns an \EFX{} and Pareto optimal allocation of chores: Fix a priority ordering $\sigma$ over agents. Let the first agent in $\sigma$ pick its most preferred $m -n$ chores. Then, all agents (including the first agent) pick one chore each according to $\sigma$ from the remaining items.

\begin{restatable}[\textbf{\EFX{} + \PO{} for chores}]{prop}{EFXChores}
Given a chores instance, an \EFX{} and Pareto optimal allocation always exists and can be computed in polynomial time.
\label{prop:EFX+PO_Chores}
\end{restatable}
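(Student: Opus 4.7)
The plan is to verify the two properties of the allocation $A$ produced by the described picking-style algorithm, leveraging the two characterization/equivalence results that appear earlier in the paper: \Cref{prop:efx_property_chores} (\EFX{} for chores is equivalent to every envious agent holding exactly one chore) and \Cref{prop:po_seqencible_chores} (\PO{} for chores is equivalent to sequencibility). Both are already available, so the proof reduces to structural observations about $A$.

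\textbf{Establishing \EFX{}.} First I would observe that every agent other than the first picks exactly one chore (in the round-robin phase), so by \Cref{prop:efx_property_chores} the \EFX{} condition is automatically satisfied from their side (they either have $0$ or $1$ chore, and an agent with $0$ chores cannot envy anyone under lexicographic chores preferences). The only thing left is to show that the first agent $\sigma_1$, which may hold $m-n+1$ chores when $m \geq n$, is \emph{not} envious of anyone. The key observation is that $\sigma_1$'s first-phase picks are precisely the $m-n$ chores it considers best (lowest in its importance ordering), and its second-phase pick is its favorite among the remaining $n$; hence $A_{\sigma_1}$ consists of the $m-n+1$ chores lowest in $\vartriangleright_{\sigma_1}$. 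The $n-1$ chores distributed to the other agents are the $n-1$ highest-ranked (worst) chores in $\vartriangleright_{\sigma_1}$, so for any $j \neq \sigma_1$, the singleton $A_j$ contains a chore that is strictly more important for $\sigma_1$ than the worst chore in $A_{\sigma_1}$. By the lexicographic preference rule for chores, $A_{\sigma_1} \succ_{\sigma_1} A_j$, so $\sigma_1$ is not envious. Edge cases $m \leq n$ are even easier: $\sigma_1$ gets at most one chore, so the characterization applies directly.

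\textbf{Establishing \PO{}.} By \Cref{prop:po_seqencible_chores}, it suffices to exhibit $A$ as the outcome of a picking sequence. The algorithm is already described as a picking protocol: in a chores-only instance the rule ``pick the top-ranked remaining good if any, else the bottom-ranked remaining chore as per $\vartriangleright_i$'' reduces to ``pick the favorite remaining chore.'' Thus the sequence $\tau = \langle \sigma_1, \sigma_1, \dots, \sigma_1, \sigma_1, \sigma_2, \dots, \sigma_n \rangle$, with $\sigma_1$ appearing $m-n+1$ times in total, produces exactly $A$. Hence $A$ is sequencible and therefore \PO{}. The polynomial runtime is immediate since the algorithm performs $m$ picks, each requiring a single linear scan over the remaining chores.

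\textbf{Main obstacle.} There is no serious technical obstacle once the two characterization results are in hand; the only subtlety is the correct handling of boundary cases (such as $m < n$, when the first phase picks zero chores and some agents in the round-robin receive nothing) and the careful invocation of the lexicographic comparison from $\sigma_1$'s viewpoint to rule out envy. I expect the write-up to be short, with the bulk of the argument being the verification that $\sigma_1$'s bundle lexicographically dominates each singleton $A_j$.
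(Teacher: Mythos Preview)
Your proposal is correct and follows essentially the same approach as the paper's own proof: both invoke \Cref{prop:efx_property_chores} to handle all agents except $\sigma_1$ (who each receive at most one chore) and then argue directly that $\sigma_1$, holding its $m-n+1$ most-preferred chores, cannot envy any singleton bundle; both establish \PO{} by exhibiting the algorithm as a picking sequence and invoking \Cref{prop:po_seqencible_chores}. Your treatment of the boundary case $m \le n$ is slightly more explicit than the paper's, but otherwise the arguments coincide.
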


We show that the family of algorithms, specified by a priority ordering over agents $\sigma$ (Algorithm \ref{alg:EFXPO_Chores}) always returns an \EFX{} + \PO{} allocation for allocating chores.

\begin{algorithm}[t]
\DontPrintSemicolon
 \linespread{1.2}
\KwIn{A lexicographic chores instance $\langle N,M,\id \rangle$}
\KwOut{An \EFX{}+\PO{} allocation $A$}
\Parameters{A permutation $\sigma: N \rightarrow N$ of the agents}
$A \leftarrow (\emptyset,\dots,\emptyset)$\;
\If{$m > n$}{
Assign to $\sigma_1$, its most preferred $m-n$ chores according to $\id_{\sigma_1}$\;
}
Execute one round of serial dictatorship according to $\sigma$ on the remaining chores (a picking sequence with $\langle 1,1,\ldots, 1\rangle$).\;
\KwRet{$A$}
\caption {Algorithm for \EFX{}+\PO{} allocation of chores}
\label{alg:EFXPO_Chores}
\end{algorithm}

\begin{proof}
We start by showing that the allocation $A$ returned by Algorithm \ref{alg:EFXPO_Chores} is \EFX{}.
By \cref{prop:efx_property_chores}, for all agents who receive exactly one chore, i.e., $i\in [2, n]$ envy can be eliminated by removing that chore. Thus, we can focus on the case where $m\geq n$.
In this case, agent $1$ receives its most preferred $m- n+1$ chores, that is, $M\setminus\id_1([n-1], M)$. 
Since preferences are lexicographic, for any $i\in [2, n]$,  $\forall c\in A_1, c\succ_1 A_{i}$. Therefore, agent $1$ does not envy any other agent's chore. 
The rest of agents only receive one chore, and thus, by Proposition~\ref{prop:efx_property_chores} their potential envy can be eliminated by removing that chore, implying that the allocation is \EFX{}.

The proof of Pareto optimality is based on sequenciblity induced by Algorithm \ref{alg:EFXPO_Chores}. By \cref{prop:po_seqencible_chores}, an allocation of chores $A$ is \PO{} if and only if it is sequencible. Therefore, it suffices to show that Algorithm \ref{alg:EFXPO_Chores} can be implemented as a picking sequence, and therefore, it is \PO{}. By construction, Algorithm \ref{alg:EFXPO_Chores} is specified by a picking sequence $\sigma$, and therefore, its outcome is guaranteed to be \PO{} under lexicographic preferences.
\end{proof}

\Cref{prop:EFX+PO_Chores} is encouraging since, under additive valuations, an \EFX{} and \PO{} allocation is not guaranteed to exist; this follows from a straightforward adaption of a counterexample of \citeauthor{PR20almost} from the goods-only setting.\footnote{Indeed, consider an instance with two agents and three chores. The valuation profiles of the two agents are $(-2,-1,0)$ and $(-2,0,-1)$, respectively. Pareto optimality requires that the second chore be assigned to the second agent, while the third chore should be given to the first agent. Now, whichever agent gets the first chore will envy the other agent even after removing the zero-valued chore from its own bundle.} It is relevant to note that the goods-only counterexample of \citeauthor{PR20almost} as well as its adaptation to chores crucially use \emph{zero marginal} valuations, e.g. goods with zero values for some agents who have additive preferences. If, however, one insists on additive valuations with \emph{non-zero} marginals (notice that lexicographic preferences are a special case of this class), then, to our knowledge, showing the non-existence of an \EFX{}+\PO{} allocation remains an open problem for both the goods-only and chores-only problems. Thus, our result in \Cref{prop:EFX+PO_Chores} can be seen as progress towards this open question.

The positive result for lexicographic preferences in \Cref{prop:EFX+PO_Chores} motivates us to strengthen the efficiency notion to rank-maximality (\RM{}). Unfortunately, an \EFX{}+\RM{} allocation could fail to exist. Further, the intractability associated with exact envy-free and rank-maximal allocations~(\Cref{thm:EF_RM_NP-complete_Chores}) persists even when the fairness requirement is relaxed to \EFX{}.

\begin{restatable}[\textbf{\EFX{}+\RM{} for chores}]{thm}{EFXRMChores}
Determining whether a chores instance admits an \EFX{} and rank-maximal allocation is \NPC{}.
\label{thm:EFX_RM_NP-complete_Chores}
\end{restatable}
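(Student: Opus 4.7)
The plan is to adapt the reduction used for \Cref{thm:EF_RM_NP-complete_Chores}, leveraging the \EFX{} characterization in \Cref{prop:efx_property_chores} to convert envy-freeness obligations into \EFX{} obligations. Membership in \NP{} is straightforward since both rank-maximality (by computing the signature of the input allocation and comparing against a \RM{} one) and \EFX{} (by a direct pairwise check) are polynomial-time verifiable. For hardness, I reuse verbatim the construction from \Cref{thm:EF_RM_NP-complete_Chores}: the $r$ edge agents, three dummy agents, the signature/edge/vertex/dummy chores, and the importance orderings in \Cref{tab:EF_RM_NP-complete_Chores}. The claim is that the resulting instance admits an \EFX{}+\RM{} allocation iff the input hypergraph admits a rainbow $3$-coloring.

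The forward direction is essentially free. If the hypergraph admits a rainbow $3$-coloring, then the allocation constructed in the forward direction of the proof of \Cref{thm:EF_RM_NP-complete_Chores} is already shown to be envy-free and rank-maximal, and envy-freeness trivially implies \EFX{} since no agent is envious at all.

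For the reverse direction, let $A$ be any \EFX{}+\RM{} allocation. The same necessary conditions for rank-maximality continue to apply: for every $i \in [r]$, the signature chores in $S_i$ together with the edge chore $E_i$ are assigned to $e_i$; each dummy chore $D_\ell$ is assigned to $d_\ell$; and the vertex chores are partitioned among the three dummy agents. In particular, $|A_{e_i}| = |S_i| + 1 = q + 1 \ge 2$, so by \Cref{prop:efx_property_chores}, no edge agent can be envious in $A$. From $e_i$'s perspective, the worst chore in its own bundle is $E_i$ (every signature chore in $S_i$ is ranked strictly below $E_i$ by $e_i$), while in any dummy agent $d_\ell$'s bundle, the most important chores for $e_i$ are those in $V_{e_i}$, then $D_\ell$, then the chores in $V \setminus V_{e_i}$. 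Hence, if $d_\ell$ holds no chore from $V_{e_i}$, then the worst chore in $A_{d_\ell}$ from $e_i$'s viewpoint is $D_\ell$ (or a chore of $V \setminus V_{e_i}$, both ranked below $E_i$ by $e_i$), and so $e_i$ envies $d_\ell$, contradicting \EFX{}. Therefore, every dummy agent must receive at least one chore from $V_{e_i}$ for each $i \in [r]$, and coloring a vertex by the index of the dummy agent that owns the corresponding vertex chore yields a valid rainbow $3$-coloring.

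The main obstacle is the possibility, a priori permitted by \Cref{prop:efx_property_chores}, that a dummy agent holds just the single chore $D_\ell$ and is therefore ``allowed'' to be envious under \EFX{}. My argument sidesteps this subtlety by pushing the requirement back to the edge agents: because \RM{} forces $|A_{e_i}| \ge 2$, the edge agents themselves cannot be envious in any \EFX{} allocation, and their non-envy of every dummy agent is exactly what forces the rainbow coverage condition. I do not need to rule out envy among the dummy agents at all, which is what makes the reduction of \Cref{thm:EF_RM_NP-complete_Chores} transfer cleanly to the \EFX{} setting.
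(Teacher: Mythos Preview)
Your proposal is correct, but it takes a genuinely different route from the paper. The paper does \emph{not} reuse the \HRTC{} reduction of \Cref{thm:EF_RM_NP-complete_Chores}; instead it builds a fresh reduction from \TTTSAT{}, mirroring the goods-only argument of \citeauthor{HSV+21fair}: $2r$ literal agents and $s$ dummy agents, with signature chores $S_i,\overline{S}_i$, clause chores $C_j$, dummy chores $D_\ell,\overline{D}_\ell$, and top chores $T_i$. There, rank-maximality pins the signature pair $\{S_i,\overline{S}_i\}$ between $x_i$ and $\overline{x}_i$, and the \EFX{} characterization (\Cref{prop:efx_property_chores}) forces whichever literal agent receives the worse signature chore $\overline{S}_i$ to receive \emph{only} that chore, which is what encodes a consistent truth assignment.

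Your approach is arguably more economical: you observe that rank-maximality alone already forces $|A_{e_i}|=q+1\ge 2$ for every edge agent, so by \Cref{prop:efx_property_chores} no edge agent can be envious in an \EFX{} allocation; hence the edge-agent non-envy constraint from the \EF{}+\RM{} proof survives unchanged, and the rainbow coverage conclusion follows exactly as before. This sidesteps any need to analyze dummy-agent envy and shows that the \emph{same} instance witnesses hardness for both \EF{}+\RM{} and \EFX{}+\RM{}. The paper's \TTTSAT{} construction, by contrast, exploits the \EFX{} characterization in a more intrinsic way (the single-chore condition is what drives the variable assignment), and keeps the chores result parallel to the known goods result. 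Both arguments are valid; yours gives a shorter proof, while the paper's highlights a structural analogy across settings.
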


\begin{proof}
Membership in \NP{} follows from the fact that both \EFX{} and rank-maximality can be checked in polynomial time. To prove \NPH{}ness, we will follow the proof technique from the goods case (Theorem~3 in \citeauthor{HSV+21fair}, and show a reduction from a restricted version of \ThreeSAT{} called \TTTSAT{}, which is known to be \NPC{}~\cite{AD19sat}. An instance of \TTTSAT{} consists of a collection of $r$ variables $X_1,\dots,X_r$ and $s$ clauses $C_1,\dots,C_s$, where each clause is specified as a disjunction of three literals, and each variable occurs in exactly four clauses, twice negated and twice non-negated. The goal is to determine if there is a truth assignment that satisfies all clauses.

\emph{Construction of the reduced instance}: We will construct a fair division instance with $n = 2r+s$ agents and $m=3r+3s$ chores. The set of agents consists of $2r$ \emph{literal} agents $\{x_i,\overline{x}_i\}_{i \in [r]}$ and $s$ dummy agents $\{d_i\}_{i \in [s]}$. The set of chores consists of $2r$ \emph{signature} chores $\{S_i,\overline{S}_i\}_{i \in [r]}$, $s$ \emph{clause} chores $\{C_j\}_{j \in [s]}$, $2s$ \emph{dummy} chores $\{D_i,\overline{D}_i\}_{i \in [s]}$, and $r$ \emph{top} chores $\{T_i\}_{i \in [r]}$. 

\begin{table}[ht]
\footnotesize
\renewcommand{\arraystretch}{1.35}
\centering
    \begin{tabular}{|rl|}
    \hline
    $\vartriangleright^\dagger$: & $\overline{D}_s \id \dots \id \overline{D}_1 \id C_s \id \dots \id C_1 \id T_1 \id \overline{S}_r \id S_r \id \dots \id \overline{S}_1 \id S_1 \id D_s \id \dots \id D_1$\\
    \hline
    $x_i$: & $* \id \overline{S}_i \id S_i \id \vartriangleright^\dagger_{(s-k)} \id C_k \id \vartriangleright^\dagger_{(k-j-1)} \id C_j \id \vartriangleright^\dagger_{(j-1)} \id T_i$\\
    \hline
    $\overline{x}_i$: & $* \id \, \overline{S}_i \id S_i \id \vartriangleright^\dagger_{(s-q)} \, \id C_q \id \, \vartriangleright^\dagger_{(q-p-1)} \, \id C_p \id \, \vartriangleright^\dagger_{(p-1)} \, \id T_i$\\
    \hline
    $d_\ell$: & $* \, \id \overline{D}_\ell \, \id D_\ell$ \\
    \hline
    \end{tabular}
    \caption{Importance orderings of agents in the proof of \Cref{thm:EFX_RM_NP-complete_Chores}.}
    \label{tab:EFX_RM_NP-complete_chores}
\end{table}

\emph{Preferences}: \Cref{tab:EFX_RM_NP-complete_chores} shows the importance orderings of the agents. Let $\vartriangleright^\dagger$ define a \emph{reference} ordering on the set of chores. For every $i \in [r]$, if $C_{j}$ and $C_{k}$ denote the two clauses containing the positive literal $x_i$, then the literal agent $x_i$ ranks $T_i$ at the bottom, and the clause chores $C_j$ and $C_k$ at ranks $m-(j+1)$ and $m-(k+1)$, respectively. The missing positions consist of the remaining chores ranked according to $\vartriangleright^\dagger$ (we write $\vartriangleright^\dagger_{\ell}$ to denote the top $\ell$ chores in $\vartriangleright^\dagger$ that have not been ranked so far). The symbol $*$ indicates rest of the chores ordered according to $\vartriangleright^\dagger$. The preferences of the (negative) literal agent $\overline{x}_i$ and the dummy agents $d_1,\dots,d_s$ are defined similarly as shown in \Cref{tab:EFX_RM_NP-complete_chores}. This completes the construction of the reduced instance.

Note that for any fixed $i \in [r]$, the bottom (favorite) chore $T_i$ is ranked at the bottom position by the literal agents $x_i$ and $\overline{x}_i$, and at a higher position by all other agents. Therefore, any rank-maximal allocation must assign $T_i$ to either $x_i$ or $\overline{x}_i$. 
Similarly, the clause chore $C_j$ must be assigned to a literal agent corresponding to a literal contained in the clause $C_j$ in any rank-maximal allocation. Also note that for a similar reason, the signature chores $S_i,\overline{S}_i$ must be allocated between $x_i$ and $\overline{x}_i$, and, for any fixed $\ell \in [s]$, the dummy chores $D_\ell$ and $\overline{D}_\ell$ must be assigned to the dummy agent $d_\ell$ in any rank-maximal allocation.

The aforementioned \emph{necessary} conditions for rank-maximality are also \emph{sufficient} since each clause chore $C_j$ is ranked at the same position by all literal agents corresponding to the literals contained in clause $C_j$. A similar reasoning applies to the top ranked chores, the signature chores, and the dummy chores.

We will now argue the equivalence of solutions.

($\Rightarrow$) Given a satisfying truth assignment, the desired allocation, say $A$, can be constructed as follows: For any $i \in [r]$, if $x_i = 0$, then assign $\overline{S}_i$ to $x_i$, and assign $T_i$ and $S_i$ to $\overline{x}_i$. Otherwise, if $x_i = 1$, then assign $\overline{S}_i$ to $\overline{x}_i$, and assign $T_i$ and $S_i$ to $x_i$. Next, for every $\ell \in [s]$, assign the dummy chores $D_\ell,\overline{D}_\ell$ to the dummy agent $d_\ell$. Finally, each clause chore $C_j$ to assigned to a literal agent $x_i$ (or $\overline{x}_i$) if the literal $x_i$ (or $\overline{x}_i$) is contained in the clause $C_j$ and the clause is satisfied by the literal, i.e., $x_i = 1$ (or $\overline{x}_i = 1$). Note that under a satisfying assignment, each clause must have at least one such literal.

It is easy to verify that allocation $A$ satisfies the aforementioned sufficient condition for rank-maximality. To see why $A$ is \EFX{}, observe that for every $\ell \in [s]$, the dummy agent $d_\ell$ does not envy any other agent. Additionally, for any $i \in [r]$, the signature chores $S_i$ and $\overline{S}_i$ are the least-preferred chores of $x_i$ or $\overline{x}_i$ that they receive under $A$. As a result, the literal agents do not envy the dummy agents. Furthermore, if $x_i$ receives the signature chore $\overline{S}_i$, then it is not assigned any other chore, and therefore, its envy towards $\overline{x}_i$ can be eliminated by the removal of $\overline{S}_i$, implying that $A$ is \EFX{}.

($\Leftarrow$) Now suppose there exists an \EFX{} and rank-maximal allocation $A$. Then, $A$ must satisfy the aforementioned necessary condition for rank-maximality. That is, for every $\ell \in [s]$, the dummy chores $D_\ell,\overline{D}_\ell$ are assigned to the dummy agent $d_\ell$. In addition, for every $i \in [r]$, the top chore $T_i$ and the signature chores $S_i,\overline{S}_i$ are assigned between $x_i$ and $\overline{x}_i$. Furthermore, each clause chore $C_j$ is assigned to a literal agent $x_i$ (or $\overline{x}_i$) such that the literal $x_i$ (or $\overline{x}_i$) is contained in the clause $C_j$. 

Since $A$ is \EFX{}, we have from \Cref{prop:efx_property_chores} that if $x_i$ (or $\overline{x}_i$) is assigned the signature chore $\overline{S}_i$, then it is not assigned any other chore. Thus, in particular, the top chore $T_i$ and the other signature chore $S_i$ must be assigned to the other literal agent.

We will construct a truth assignment for the \TTTSAT{} instance as follows: For every $i \in [r]$, if $\overline{S}_i \in A_{x_i}$, then set $x_i = 0$, otherwise set $x_i = 1$. Note that the assignment is feasible as no literal is assigned conflicting values. To see why this is a satisfying assignment, consider any clause $C_j$. Suppose the clause chore $C_j$ is assigned to a literal agent $x_i$ (an analogous argument works when $\overline{x}_i$ gets $C_j$). Then, due to rank-maximality, we know that the literal $x_i$ must be contained in the clause $C_j$. Furthermore, from the aforementioned observation, it follows that $\overline{S}_i \in A_{\overline{x}_i}$, implying that $x_i = 1$. Thus, the clause $C_j$ is satisfied, as desired.
\end{proof}

\subsection{Envy-Freeness Up to One Chore}

In the search for positive results alongside rank-maximality, we further relax the \EFX{} notion to envy-freeness up to one chore (\EF{1}). However, we show that the intractability persists even under this relaxation.

\begin{restatable}[\textbf{\EF{1}+\RM{} for chores}]{thm}{EFoneRMChores}
Determining whether a given chores instance admits an \EF{1} and rank-maximal allocation is \NPC{}.
\label{thm:EF1_RM_NP-complete_Chores}
\end{restatable}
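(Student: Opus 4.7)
The plan is to prove NP-completeness by reducing from \TTTSAT{}, extending the construction in Theorem~\ref{thm:EFX_RM_NP-complete_Chores}. Membership in \NP{} is immediate since both \EF{1} and rank-maximality can be verified in polynomial time on a candidate allocation. The starting point for hardness would be the gadget from the \EFX{}+\RM{} reduction---literal agents $\{x_i,\overline{x}_i\}_{i\in[r]}$, dummy agents $\{d_\ell\}_{\ell\in[s]}$, and the top, signature, clause, and dummy chores---whose rank-maximality analysis already forces the top, signature, and dummy chores into the correct literal and dummy pairs, and routes each clause chore onto a literal agent of a literal it contains.

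The conceptual gap to close is that \EF{1} is strictly weaker than \EFX{}. In the \EFX{}+\RM{} proof, \EFX{} forced any literal agent holding its ``bad'' signature chore $\overline{S}_i$ to hold no other chore, which is what allowed allocations to be decoded into truth assignments. Under \EF{1} this is no longer true: if $x_i$ receives $\{\overline{S}_i,C_j\}$, then removing its worst chore $\overline{S}_i$ leaves $\{C_j\}$, and a direct lexicographic check shows $\{C_j\}\succeq_{x_i}\{T_i,S_i\}$, because $x_i$ ranks $S_i$---the worst item in $\overline{x}_i$'s bundle from $x_i$'s viewpoint---strictly above $C_j$ in its importance order. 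Hence \EF{1} holds in configurations that would violate \EFX{}, and the ``exactly one chore'' bottleneck from Theorem~\ref{thm:EFX_RM_NP-complete_Chores} is lost.

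To restore the bottleneck, I would enrich each variable's gadget with auxiliary ``penalty'' chores, for instance a pair $P_i,\overline{P}_i$ per variable, each slotted strictly between $\overline{S}_i$ and $S_i$ in the corresponding literal agent's importance order and accompanied by enough auxiliary padding agents so that rank-maximality forces every penalty chore to land on the literal agent that also receives the associated signature chore. With this in place, whenever a literal agent holds $\overline{S}_i$ together with any additional chore, the post-removal worst chore is a penalty chore ranked strictly above $S_i$, so envy toward the opposite literal agent's bundle $\{T_i,S_i\}$ persists under any single removal and \EF{1} fails. This recovers exactly the one-chore bottleneck on which the reverse direction of the reduction depends.

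The main obstacle will be the joint calibration of importance rankings of the penalty chores across all agents so that three conditions hold simultaneously: (a) rank-maximality still pins down the block structure inherited from Theorem~\ref{thm:EFX_RM_NP-complete_Chores} without admitting spurious rank-maximal alternatives; (b) the penalty chores are forced to the literal agent that holds $\overline{S}_i$ in every rank-maximal completion, rather than to the opposite literal agent or to a dummy; and (c) the forward direction still goes through, so that a satisfying truth assignment yields an \EF{1}+\RM{} allocation in which the penalty chores themselves do not trigger envy on the winning literal agent. Because rank-maximality tends to reward placing a chore with an agent who ranks it low, while \EF{1} failure requires that chore to be ranked high by its recipient, reconciling (a)--(c) is likely to require padding agents or additional signature copies engineered to make the penalty chores simultaneously ``cheap'' for literal agents under rank-maximality and ``expensive'' for them under envy. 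Once this bookkeeping is settled, both directions of the reduction transfer from Theorem~\ref{thm:EFX_RM_NP-complete_Chores} by the same literal-to-signature reading of any feasible allocation.
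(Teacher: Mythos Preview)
Your proposal is a plan with an acknowledged gap, and the paper takes a different route. The paper reduces from \HRTC{} rather than from \TTTSAT{}: it uses $r$ edge agents and three dummy agents, together with edge, vertex, signature (two types), and dummy chores. Rank-maximality pins the edge chore $E_i$ and all signature chores $S_i\cup\{S'_i\}$ to edge agent $e_i$, and all vertex and dummy chores to the three dummies. The key point is that \EF{1} then \emph{directly} encodes the rainbow condition: edge agent $e_i$ necessarily envies every dummy, and after removing its worst chore $E_i$ the residual worst chore is the Type~II signature $S'_i$, which is placed in $e_i$'s importance order just above the vertex chores $V_{e_i}$; hence $e_i$'s post-removal envy toward dummy $d_\ell$ vanishes iff $d_\ell$ holds at least one chore from $V_{e_i}$---exactly the statement that hyperedge $e_i$ meets color $\ell$. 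No ``one-chore bottleneck'' is needed.

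Your approach faces a structural obstacle that you flag but do not resolve. Condition~(b) asks rank-maximality to force each penalty chore onto whichever literal agent ends up holding $\overline{S}_i$. But rank-maximality fixes placements from the static importance rankings, independently of how other chores land; in the \EFX{}+\RM{} gadget, which of $x_i,\overline{x}_i$ receives $\overline{S}_i$ is itself \emph{not} determined by rank-maximality (they rank $S_i,\overline{S}_i$ identically), so no static ranking can make a penalty chore ``follow'' $\overline{S}_i$. If instead you pin $P_i$ to $x_i$ and $\overline{P}_i$ to $\overline{x}_i$ and slot both strictly between $\overline{S}_i$ and $S_i$, a direct check shows that for either relative order of $P_i,\overline{P}_i$ in $x_i$'s list one of your conditions fails: either the forward direction breaks (the ``losing'' literal with $\{\overline{S}_i,P_i\}$ still envies the winner after removing $\overline{S}_i$, because its residual worst $P_i$ is ranked above the winner's worst $\overline{P}_i$), or the backward direction breaks (the loser can absorb a clause chore and still satisfy \EF{1}, because its residual worst $P_i$ is ranked below the winner's worst $\overline{P}_i$). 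Padding agents do not help, since they can only influence where rank-maximality sends individual chores, not correlate two placements. The paper sidesteps this tension entirely by choosing a source problem whose combinatorial constraint---every color class meets every hyperedge---is a native fit for \EF{1}'s single-removal semantics.
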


\begin{proof}
The proof is similar to that of \Cref{thm:EF_RM_NP-complete_Chores} but uses a slightly different construction.

We will once again show a reduction from \HRTC{}~\cite{DRS05hardness,GS17hardness}. Recall that under this problem, we are given a hypergraph $H = (V,E)$ with vertex set $V$ and edge set $E \subseteq 2^V$, and the goal is to determine if there exists a way of assigning each vertex exactly one of three given colors such that each hyperedge receives all three colors.

\emph{Construction of the reduced instance}:
Let $q \coloneqq |V|$ and $r \coloneqq |E|$ denote the number of vertices and hyperedges in the given hypergraph $H = (V,E)$, respectively. Let $\Delta \coloneqq \max_{e \in E} |V \cap e|$ denote the maximum number of vertices in any hyperedge in $E$.

We will construct a fair division instance with $n = r + 3$ agents and $m = q + (\Delta+2) \cdot r + 3$ chores. The set of agents consists of $r$ \emph{edge} agents $e_1,\dots,e_r$, and three \emph{dummy} agents $d_1,d_2,d_3$. The set of chores consists of $\Delta \cdot r$ \emph{Type I signature} chores $\{S_i^1,\dots,S_i^\Delta\}_{i \in [r]}$, $r$ \emph{Type II signature} chores $\{S'_1,\dots,S'_r\}$, $r$ \emph{edge} chores $E_1,\dots,E_r$, $q$ \emph{vertex} chores $V_1,\dots,V_q$, and three \emph{dummy} chores $D_1,D_2,D_3$.

With some notational overloading, we will write $E \coloneqq \{E_1,\dots,E_r\}$, $V \coloneqq \{V_1,\dots,V_q\}$, $S_i \coloneqq \{S_i^1,\dots,S_i^\Delta\}$, and $D \coloneqq \{D_1,D_2,D_3\}$ to denote the unordered sets of edge, vertex, Type I signature, and dummy chores, respectively. Further, we will write $S_{-i} \coloneqq \{S_1,S_2,\dots,S_{i-1},S_{i+1},\dots,S_r\}$ to denote the unordered set of Type I signature chores after removing those in $S_i$, $S'_{-i} \coloneqq \{S'_1,\dots,S'_{i-1},S'_{i+1},\dots,S'_r\} \setminus S'_i$ to denote the unordered set of Type II signature chores after removing the chore $S'_i$, and write $V_{e_i}$ to denote the unordered set of vertex chores corresponding to the vertices in the hyperedge $e_i \in E$. Note that for all $i \in [r]$, $|V_{e_i}| \leq q$ and $|S_i| = \Delta$.

\begin{table}[ht]
\footnotesize
\renewcommand{\arraystretch}{1.35}
\centering
    \begin{tabular}{|rl|}
    \hline
    $e_i$: & $S'_{-i} \id S_{-i} \id E \setminus \{E_i\} \id E_i \id V_{e_i}\id S'_i \id V \setminus V_{e_i} \id D \id S_i$\\
    \hline
    $d_\ell$: & $E_r \id \dots \id E_1 \id S'_{r} \id \dots \id S'_1 \id S_r \id \dots \id S_1 \id D \setminus D_\ell \id V \id D_\ell$\\
    \hline
    \end{tabular}
    \caption{Importance orderings of agents in the proof of \Cref{thm:EF1_RM_NP-complete_Chores}.}
    \label{tab:EF1_RM_NP-complete_Chores}
\end{table}

\emph{Preferences}: \Cref{tab:EF1_RM_NP-complete_Chores} shows the importance orderings of the agents in terms of a partial order over sets of items. To obtain the exact ordering, the partial orders can be extended by ranking the vertex chores in order of their indices (i.e., $V_q \id V_{q-1} \id \dots \id V_1$) and ranking the edge, signature (Type I and Type II), and dummy chores arbitrarily.

Notice that for every $i \in [r]$, any Type I signature chore in the set $S_i$, as well as the Type II signature chore $S'_i$, realizes its largest ranking in the preference list of agent $e_i$. Therefore, any rank-maximal allocation must assign all chores in $S_i$ as well as the chore $S'_i$ to $e_i$. By a similar reasoning, the edge chore $E_i$ is also assigned to $e_i$ in any rank-maximal allocation. Next, note that for any fixed $i \in [q]$, the vertex chore $V_i$ is ranked lower by the dummy agents than by the edge agents, and therefore must be assigned to one of $d_1$, $d_2$, or $d_3$ in any rank-maximal allocation. Finally, rank-maximality requires that for any $\ell \in [3]$, the dummy chore $D_\ell$ is assigned to the dummy agent $d_\ell$. It is straightforward to check that the aforementioned necessary conditions for rank-maximality are also sufficient.

We will now argue the equivalence of solutions.

($\Rightarrow$) Suppose the hypergraph $H$ admits a feasible rainbow $3$-coloring. Then, the desired allocation can be constructed just like in the proof of \Cref{thm:EF_RM_NP-complete_Chores}, as follows: For every $i \in [r]$, the Type I signature chores in $S_i$, the Type II signature chore $S'_i$, and the edge chore $E_i$ are assigned to the agent $e_i$. The dummy chores $D_1,D_2,D_3$ are assigned to the dummy agents $d_1,d_2,d_3$, respectively. The vertex chores are partitioned between the three dummy agents according to the coloring. That is, if the colors are red, blue, and green, then all chores corresponding to the red-colored vertices are assigned to $d_1$, while those corresponding to the green and blue-colored vertices are assigned to $d_2$ and $d_3$, respectively.

Rank-maximality is easy to check, so we will proceed to showing envy-freeness up to one chore (\EF{1}). Notice that for any $i \in [r]$, the worst chore in the bundle of agent $e_i$ (according to $e_i$'s importance order) is the edge chore $E_i$. For any other $j \neq i$, the worst chore in the bundle of $e_j$ (according to $e_i$'s importance order) is $E_j$, which is less preferable than $E_i$ to agent $e_i$. Therefore $e_i$ does not envy $e_j$. 

Next, note that due to the rainbow coloring condition, the chores in $V_{e_i}$ are assigned such that each dummy agent gets at least one chore in $V_{e_i}$. Therefore, the worst chore in the bundle of any dummy agent (according to $e_i$) is one of the vertex chores in $V_{e_i}$, implying that $e_i$ certainly envies each dummy agent. However, if the chore $E_i$ is removed from $e_i$'s bundle, then the worst chore in its residual bundle is the Type II signature chore $S'_i$, which it prefers over the chores in $V_{e_i}$. Thus, the allocation is \EF{1} as far as the envy from the edge agents to the dummy agents is concerned.

Let us now consider the perspective of the dummy agent $d_\ell$ for any fixed $\ell \in [3]$. According to $d_\ell$'s importance ordering, the worst chore in its own bundle is one of the vertex chores, whereas that in the bundle of any other dummy agent (according to $d_\ell$) is in $D \setminus D_\ell$, which is strictly less preferred. Similarly, the worst chore in the bundle of any edge agent $e_j$ (according to $d_\ell$) is $E_j$ which, again, is less preferred. Therefore, the dummy agent does not envy any other agent either, implying that overall the allocation is \EF{1}.

($\Leftarrow$) Now suppose that there exists an \EF{1} and rank-maximal allocation, say $A$. By the necessary condition for rank-maximality, we know that for every $i \in [r]$, the edge chore $E_i$ is assigned to the edge agent $e_i$ (and this is the least preferred chore of $e_i$ in its bundle). While $e_i$ does not envy any other edge agent, it certainly envies the dummy agents who only receive the dummy and the vertex chores. In order for $A$ to be \EF{1}, it must be that upon the removal of $E_i$, $e_i$'s envy towards each of the dummy agents should be eliminated. This, in particular, requires that each dummy agent must be assigned at least one of the chores in $V_{e_i}$. A rainbow coloring of the hypergraph $H$ can be easily inferred from this assignment. This completes the proof of \Cref{thm:EF1_RM_NP-complete_Chores}.
\end{proof}

\subsection{Maximin Share for Chores}

The intractability results in \Cref{thm:EFX_RM_NP-complete_Chores,thm:EF1_RM_NP-complete_Chores} prompt us to consider a different relaxation of \EFX{}, namely, maximin share (\MMS{}).
Before discussing the result for \MMS{} and rank-maximality, let us take a closer look at the structure of \MMS{} allocations under lexicographic preferences.

For lexicographic preferences, the \MMS{} partition of an agent for a chores-only instance is uniquely defined and can be efficiently computed. By contrast, for additive valuations, there can be multiple \MMS{} partitions, and unless $\textup{P=NP}$, there cannot be a polynomial-time algorithm for computing such a partition (since that would constitute an efficient procedure for solving the PARTITION problem). Given a chores instance and an agent $i\in N$ with importance ordering $\id_i$, agent $i$'s \MMS{} partition is uniquely given by $\{\{\id_{i}(1)\}, \{\id_{i}(2)\}, \ldots, \{\id_{i}(n-1)\}, \{\id_{i}(n), \ldots, \id_{i}(m)\} \}$.

\begin{restatable}[\textbf{\MMS{} characterization for chores}]{prop}{MMSPropertyChores}
An allocation of chores is \MMS{} if and only if any agent who receives its worst chore does not receive another chore.
\label{prop:mms_property_chores}
\end{restatable}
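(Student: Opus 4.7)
The plan is to derive the claim as a direct corollary of the maximin-share characterization already established. By \Cref{prop:MMS_mixed_char}, in a chores-only instance every agent $i$ has $\id_i(1) \in C_i$ and $G_i = \emptyset$, so $\MMS_i = \{\id_i(1)\}$: agent $i$'s maximin share is the singleton containing its highest-ranked (i.e., worst) chore. The statement then reduces to verifying that $A_i \succeq_i \{\id_i(1)\}$ for every $i$ if and only if, whenever $\id_i(1) \in A_i$, the bundle $A_i$ equals $\{\id_i(1)\}$.

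I would split the argument according to whether $\id_i(1) \in A_i$, working directly from clause (ii) of the lexicographic preference definition. For the reverse direction, the two admissible cases are straightforward: if $\id_i(1) \notin A_i$, take $c = \id_i(1)$ as the witness chore in $\{\id_i(1)\} \setminus A_i$; since no chore is ranked above $\id_i(1)$ in agent $i$'s importance order, the set $\{o^- \in A_i \cap C_i : o^- \id_i c\}$ is empty and vacuously contained in $\{\id_i(1)\}$, yielding $A_i \succ_i \{\id_i(1)\}$. If instead $A_i = \{\id_i(1)\}$ the two bundles coincide and $A_i \succeq_i \{\id_i(1)\}$ holds by reflexivity.

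For the forward direction I would argue contrapositively: assume some agent $i$ satisfies $\id_i(1) \in A_i$ together with a second chore $c \in A_i \setminus \{\id_i(1)\}$, and use $c$ as the witness for $\{\id_i(1)\} \succ_i A_i$. Because $\id_i(1)$ is $i$'s top-ranked chore, the set $\{o^- \in \{\id_i(1)\} \cap C_i : o^- \id_i c\}$ equals $\{\id_i(1)\}$, which lies in $A_i$ by hypothesis. Clause (ii) therefore gives $\{\id_i(1)\} \succ_i A_i$, i.e., $A_i \prec_i \MMS_i$, contradicting the \MMS{} guarantee.

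The only subtlety to navigate is the non-monotonicity of chore bundles under lexicographic preferences: adjoining any further chore to the singleton $\{\id_i(1)\}$ strictly degrades the bundle for $i$, and this is precisely what forces the ``exactly one chore'' shape on the right-hand side of the equivalence. Beyond this conceptual point the argument is routine bookkeeping against the preference definition, with no real obstacle.
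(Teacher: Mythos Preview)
Your proof is correct and follows essentially the same approach as the paper: identify $\MMS_i=\{\id_i(1)\}$ and then compare an arbitrary bundle $A_i$ against this singleton under the lexicographic order. The only cosmetic difference is that you obtain $\MMS_i=\{\id_i(1)\}$ by specializing \Cref{prop:MMS_mixed_char} to the chores-only case, whereas the paper re-derives the chores \MMS{} partition directly; if anything, your write-up is more explicit about the forward direction than the paper's own proof.
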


\begin{proof}
Consider the agent $i \in N$. Let $\id_i = o_1^- \id o_2^- \id \dots \id o_m^-$. Then, the \MMS{} partition of agent $i$ is uniquely defined, and its \MMS{} value is given by  $\MMS_i = \min \left\{\{o_1\}, \{o_{2}\}, \ldots, \{o_{n-1}\},  \{o_n,\dots,o_{m}\}\right\}$.
Since preferences are lexicographic, any bundle $A_i$ that does not contain $o_1$ is preferred to $A'_i = \{o_1\}$. Moreover, $A'_{i}$ satisfies \MMS{} by definition. 
\end{proof}

\begin{restatable}[\textbf{\EFX{} vs \MMS{}}]{prop}{EFXimpliesMMS}
Given any chores instance with lexicographic preferences, any \EFX{} allocation satisfies \MMS{} but the converse is not always true.
\label{prop:EFX_implies_MMS}
\end{restatable}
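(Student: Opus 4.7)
The plan is to prove both parts by invoking the two structural characterizations already established: Proposition~\ref{prop:efx_property_chores} (an \EFX{} allocation of chores is one in which every envious agent receives exactly one chore) and Proposition~\ref{prop:mms_property_chores} (an \MMS{} allocation is one in which no agent receives its worst chore together with any other chore). Together with an explicit computation of $\MMS_i = \{\id_i(1)\}$ for each $i$, both directions reduce to bookkeeping.

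For the forward direction (\EFX{} $\Rightarrow$ \MMS{}), I would argue by contraposition. Suppose $A$ fails \MMS{}; then by Proposition~\ref{prop:mms_property_chores} there is some agent $i$ with $\id_i(1) \in A_i$ and $|A_i| \geq 2$ (the case $n=1$ is vacuous). Since $A$ is a partition and $n \geq 2$, there is another agent $j$ with $\id_i(1) \notin A_j$. I would then verify directly from the lexicographic preference definition that $A_j \succ_i A_i$: choosing the witness $c = \id_i(1) \in A_i \setminus A_j$, the set $\{c' \in A_j : c' \id_i \id_i(1)\}$ is empty because $\id_i(1)$ is the top-ranked chore in $\id_i$, and $\emptyset \subseteq A_i$ trivially. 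Hence agent $i$ envies $j$ while $|A_i| \geq 2$, which by Proposition~\ref{prop:efx_property_chores} contradicts \EFX{}.

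For the converse, I would exhibit a small counterexample. Take $n=3$ agents and three chores $c_1, c_2, c_3$ with identical importance orderings $c_1 \id c_2 \id c_3$, and consider $A_1 = \{c_2, c_3\}$, $A_2 = \emptyset$, $A_3 = \{c_1\}$. Since $\MMS_i = \{c_1\}$ for every $i$, Proposition~\ref{prop:mms_property_chores} certifies \MMS{}: agents $1$ and $2$ avoid their worst chore, while agent $3$ receives exactly its \MMS{} bundle. On the other hand, agent $1$ strictly prefers the empty bundle $A_2$ to $A_1$ (witnessed by $c_2 \in A_1 \setminus A_2$ with an empty higher-ranked prefix in $A_2$), so agent $1$ is envious despite $|A_1| = 2$, which violates the characterization of Proposition~\ref{prop:efx_property_chores}.

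I do not anticipate any real obstacle; the only point requiring minor care is unpacking the lexicographic comparison for chores to confirm that any bundle missing agent $i$'s worst chore is strictly preferred by $i$ to any bundle containing it. Once this single observation is recorded, the forward direction is a one-line consequence of the two characterizations, and the converse is verified by hand on a constant-size instance.
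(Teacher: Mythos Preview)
Your proposal is correct and follows essentially the same approach as the paper: both directions are reduced to the characterizations in Propositions~\ref{prop:efx_property_chores} and~\ref{prop:mms_property_chores}, with the forward direction showing that an agent holding its worst chore plus another must envy someone, and the converse handled by an explicit instance. Your counterexample is smaller (three agents and three chores with an empty bundle, versus the paper's four agents and five chores), and your forward argument is slightly more explicit in identifying the envied agent $j$, but the underlying strategy is identical.
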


\begin{proof}
Suppose for contradiction that there exists an \EFX{} allocation $A$ that is not \MMS{}. Then, by \cref{prop:efx_property_chores} no envious agent receives more than one chore in $A$. Since $A$ is not \MMS{}, then there must exist an agent, say $i$, that receives its least preferred chore (first chore in its importance ordering) along with some other chore. When all agents have identical preferences, agent $i$ must be envious of some other agent, say the agent that receives the best chore. 
However, this contradicts \cref{prop:efx_property_chores} since an envious agent (agent $i$) has received more than one chore.

To prove that \MMS{} does not always imply \EFX{}, consider the following instance with four agents and five chores:
\begin{align*}
    1: {o_5^-} \, \id \, o_4^- \, \id \, o_3^- \, \id \, o_2^- \, \id \, \underline{o_1^-}\\ \nonumber
    2: o_5^- \, \id \, {o_4^-} \, \id \, \underline{o_3^-} \, \id \, \underline{o_2^-} \, \id \, o_1^-\\ \nonumber
    3: {o_5^-} \, \id \, o_3^- \, \id \, o_2^- \, \id \, o_1^- \, \id \, \underline{o_4^-}\\ \nonumber
    4: {o_4^-} \, \id \, o_3^- \, \id \, o_2^- \, \id \, o_1^- \, \id \, \underline{o_5^-} \nonumber
\end{align*}
The allocation $\{\{o_1\},\{o_2,o_3\},\{o_4\},\{o_5\}\}$ satisfies \MMS{} since all agents receive a bundle excluding their worst chore. However, by \cref{prop:efx_property_chores} this allocation is not $\EFX{}$ because an envious agent (agent $2$) receives more than one chore.
\end{proof}

Note that \EF{1} and \MMS{} are incomparable even under lexicographic preferences.

\begin{example}[\textbf{\MMS{} vs \EF{1}}] \label{rem:EF1vMMS}
It can also be shown that \EF{1} and \MMS{} are incomparable notions in that one does not always imply the other. The fact that \MMS{} does not imply \EF{1} follows from the example in the proof of \Cref{prop:EFX_implies_MMS}: Indeed, agent $2$ continues to envy agent $1$ even after the removal of any good from the latter's bundle. To prove that \EF{1} does not imply \MMS{}, consider the following instance with identical preferences:
\begin{align*}
    1: \underline{o_5^-} \, \id \, o_4^- \, \id \, o_3^- \, \id \, o_2^- \, \id \, \underline{o_1^-}\\ \nonumber
    2: {o_5^-} \, \id \, o_4^- \, \id \, \underline{o_3^-} \, \id \, o_2^- \, \id \, {o_1^-}\\ \nonumber
    3: {o_5^-} \, \id \, o_4^- \, \id \, o_3^- \, \id \, \underline{o_2^-} \, \id \, {o_1^-}\\ \nonumber
    4: {o_5^-} \, \id \, \underline{o_4^-} \, \id \, o_3^- \, \id \, o_2^- \, \id \, {o_1^-}\\ \nonumber
\end{align*}
The underlined allocation $\{\{o_1, o_5\},\{o_3\},\{o_2\},\{o_4\}\}$ is \EF{1}; in particular, agent $1$'s envy towards all other agents can be eliminated by removing chore $o_5$ from its bundle.
However, this allocation fails to satisfy \MMS{} since agent $1$ receives a strict superset of its least preferred chore, which contradicts \Cref{prop:mms_property_chores}.
\end{example}

\Cref{prop:EFX_implies_MMS} and \Cref{prop:EFX+PO_Chores} together give the following implication.

\begin{restatable}[\textbf{\MMS{}+\PO{} for chores}]{cor}{MMSChores}
Given a chores instance, an \MMS{} and Pareto optimal allocation always exists and can be computed in polynomial time.
\label{prop:MMS+PO_Chores}
\end{restatable}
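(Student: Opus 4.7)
The plan is to observe that this corollary is essentially a direct composition of two results already established in the appendix: \Cref{prop:EFX+PO_Chores}, which provides a polynomial-time algorithm (Algorithm~\ref{alg:EFXPO_Chores}) returning an \EFX{} and \PO{} allocation for any chores-only instance under lexicographic preferences, and \Cref{prop:EFX_implies_MMS}, which shows that any \EFX{} allocation on a chores-only instance is automatically \MMS{}. Thus the existence and efficient computability of an \MMS{}+\PO{} allocation will be inherited ``for free'' from the stronger \EFX{}+\PO{} guarantee.

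Concretely, I would proceed as follows. First, given an arbitrary chores-only instance $\langle N,M,\id\rangle$ with lexicographic preferences, fix any priority ordering $\sigma$ over the $n$ agents and run Algorithm~\ref{alg:EFXPO_Chores}: the first agent $\sigma_1$ picks its $m-n$ most preferred chores (i.e., its bottom-ranked chores in $\id_{\sigma_1}$), and then one round of serial dictatorship along $\sigma$ allocates the remaining $n$ chores, one per agent. By \Cref{prop:EFX+PO_Chores}, the returned allocation $A$ is \EFX{} and \PO{}, and the procedure clearly runs in polynomial time. Second, applying \Cref{prop:EFX_implies_MMS} to $A$ yields that $A$ also satisfies \MMS{}. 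Combining these two facts, $A$ is simultaneously \MMS{} and \PO{}, and we are done.

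The proof is thus a one-line corollary, and there is no real technical obstacle to overcome: the only thing to verify is that the appeals to \Cref{prop:EFX+PO_Chores} and \Cref{prop:EFX_implies_MMS} are applicable in the chores-only lexicographic setting (which they are, by hypothesis). If desired, one may additionally remark that this route gives \emph{more} than \MMS{}+\PO{} (namely \EFX{}+\PO{}), so the corollary is strict and no alternative construction is needed. The only mild subtlety worth flagging in the write-up is that the converse direction fails, i.e., \MMS{} does not imply \EFX{} (as shown in the proof of \Cref{prop:EFX_implies_MMS}), so the implication chain \EFX{}$\Rightarrow$\MMS{} is genuinely being used here and cannot be reversed.
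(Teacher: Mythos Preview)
Your proposal is correct and matches the paper's own argument exactly: the paper introduces this corollary with the sentence ``\Cref{prop:EFX_implies_MMS} and \Cref{prop:EFX+PO_Chores} together give the following implication,'' which is precisely the composition you describe. No further justification is needed.
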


Note that an \MMS{} (or \MMS{}+\PO{}) allocation could fail to exist under additive valuations~\cite{KPW18fair}. \Cref{prop:EFX+PO_Chores} provides a characterization of an agent's \MMS{} partition which enables us to compute an \MMS{}+\PO{} allocation. However, \MMS{} may no longer exist when rank-maximality is additionally required, even under lexicographic preferences~(\Cref{eg:MMS_RM_NonExistence}).
Nevertheless, an \MMS{} and rank-maximal allocation can be computed efficiently whenever such an allocation exist~(\Cref{thm:MMS_RM_Polytime_Chores}).

\begin{example}[\textbf{\MMS{}+\RM{} may not exist}]
Consider two agents with the following preferences:
\begin{align*}
    1: {o}^-_1 \, \id \, {o}^-_2 \, \id \, {o}^-_3 \\ \nonumber
    2: {o}^-_1 \, \id \, {o}^-_3 \, \id \, {o}^-_2 \nonumber
\end{align*}
A rank-maximal allocation will assign $o_2$ to $2$ and $o_3$ to $1$, but then \MMS{} is violated for whichever agent gets $o_1$.
\label{eg:MMS_RM_NonExistence}
\end{example}

\begin{restatable}[\textbf{\MMS{}+\RM{} for chores}]{thm}{MMSRMChores}
There is a polynomial-time algorithm that, given as input a lexicographic chores instance, computes an \MMS{} and rank-maximal allocation, whenever one exists.
\label{thm:MMS_RM_Polytime_Chores}
\end{restatable}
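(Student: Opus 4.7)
The plan is to combine the \MMS{} characterization for chores-only instances (\Cref{prop:mms_property_chores}) with the classical theory of rank-maximal matchings. Recall that an allocation of chores is \MMS{} if and only if every agent who receives their worst chore $\id_i(1)$ receives no other chore. The algorithm will first compute \emph{some} rank-maximal allocation, then modify it (if needed) to satisfy \MMS{} via a recursive procedure.

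First, I would compute the rank-maximal signature $\mathbf{s}^\ast$ and some \RM{} allocation $A$ in polynomial time by casting the problem as a weighted bipartite assignment (each chore goes to exactly one agent, each agent may receive multiple chores; weights are exponentially scaled by position so that maximizing total weight coincides with lexicographically maximizing $(n^-_1,\ldots,n^-_m)$). If $A$ already satisfies \MMS{}, we return it. Next, I would prove a structural lemma: in any rank-maximal allocation of chores, if an agent $i$ receives $\id_i(1)$ together with at least one other chore, then $\id_i(1)$ must be the worst chore of \emph{every} agent in $N$. The proof is a single-exchange argument---transferring $\id_i(1)$ from $i$ to any agent $j$ with $\id_j(1)\neq \id_i(1)$ moves a signature contribution from the lowest-priority entry $n^-_m$ to a strictly higher-priority entry, improving the signature and contradicting rank-maximality.

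The structural lemma drives a recursive algorithm. When $A$ violates \MMS{}, the lemma gives a common worst chore $o^\ast$ shared by all agents. Any \MMS{}+\RM{} allocation must therefore assign $o^\ast$ as a singleton to some agent $j^\ast$, and the remaining chores must form an \MMS{}+\RM{} allocation on the reduced instance $(N\setminus \{j^\ast\}, M\setminus \{o^\ast\})$ whose signature equals the first $m-1$ entries of $\mathbf{s}^\ast$ (the contribution of $o^\ast$ to $n^-_m$ accounts for the last entry). For each candidate $j^\ast \in N$, the algorithm computes the reduced instance's rank-maximal signature via the same assignment formulation, checks the matching condition, and if it holds, recurses on the smaller instance. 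Since each recursive call strictly decreases both $n$ and $m$ by one, the recursion depth is at most $\min(n,m)$.

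The main obstacle is avoiding exponential branching over the choice of $j^\ast$ at each level. The needed claim is that among candidates $j^\ast$ satisfying the signature-matching condition, the induced sub-instances are interchangeable with respect to \MMS{}+\RM{} feasibility---so it suffices to recurse on any single such candidate rather than exploring all of them. I expect to establish this by an exchange argument exploiting the symmetric role that the common-worst agents play in $\mathbf{s}^\ast$ (specifically, their preferences over $M\setminus \{o^\ast\}$ govern feasibility in a way that any candidate achieving the signature match also admits a feasible sub-allocation). Together with the polynomial-time sub-signature computations at each of the $O(n)$ levels, this yields an overall polynomial-time algorithm that either returns an \MMS{}+\RM{} allocation or correctly reports non-existence.
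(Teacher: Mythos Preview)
Your structural lemma is correct and matches the paper's key observation: if a rank-maximal allocation gives some agent its worst chore together with another chore, that chore must be the worst chore of \emph{every} agent. However, the recursion you build on top of it is both unnecessary and incorrect.

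The gap is the claim that ``the remaining chores must form an \MMS{}+\RM{} allocation on the reduced instance.'' The sub-allocation need only be \RM{} (with matching signature), not \MMS{} in the reduced instance. Once the common worst chore $o^\ast$ is assigned as a singleton to $j^\ast$, every remaining agent automatically satisfies \MMS{} in the \emph{original} instance, because none of them holds their original worst chore $o^\ast$ (\Cref{prop:mms_property_chores}). Requiring \MMS{} in the reduced instance---where each remaining agent's worst chore is now their \emph{second}-worst from the original---is strictly stronger and causes false negatives. Concretely, take three agents with importance orders $o_0\id o_1\id o_2\id o_3$ for agents $1$ and $3$, and $o_0\id o_1\id o_3\id o_2$ for agent $2$. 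The original admits an \MMS{}+\RM{} allocation (e.g., $A_3=\{o_0\}$, $A_1=\{o_1,o_3\}$, $A_2=\{o_2\}$), yet for every $j^\ast\in\{1,3\}$ whose removal preserves the signature, the reduced two-agent instance is precisely \Cref{eg:MMS_RM_NonExistence}, which has no \MMS{}+\RM{} allocation; and $j^\ast=2$ fails the signature check. Your recursion would therefore report ``no'' on a yes-instance, and no amount of branching avoidance repairs this.

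The fix---and what the paper does---is to stop after one level: for each candidate $j^\ast$, compute any rank-maximal allocation $A'$ on the reduced instance, and if its signature matches the first $m-1$ entries of $\mathbf{s}^\ast$, output the combined allocation directly (it is automatically \MMS{} in the original). No recursion, hence no branching obstacle to resolve.
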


\begin{proof}
Fix any agent $i \in N$, and suppose its preference is given by $\id_i \coloneqq o_1^- \, \id \, o_2^- \, \id \dots \id \, o_m^-$. Under lexicographic preferences, the \MMS{} partition of agent $i \in N$ is uniquely defined as
$$\left\{\{o_1\}, \{o_{2}\}, \ldots, \{o_{n-1}\},  \{o_n,\dots,o_{m}\}\right\}.$$

The definition of \MMS{} immediately gives a characterization for \MMS{} allocations: An allocation is \MMS{} if and only if no agent receives a bundle that is a strict superset of its least preferred chore (first chore in the importance ordering).

The key observation is that any arbitrary rank-maximal allocation that does not assign an agent its top ranked chore satisfies \MMS{}. Given this observation, we construct a polynomial-time algorithm to compute an \MMS{} and rank-maximal allocation, whenever one exists:
Compute a rank-maximal allocation $A$ given a preference profile $\id$. If no agent receives its top ranked chore under $A$ (chore in first position of the importance ordering, i.e., $\id_i(1)$), then by \cref{prop:mms_property_chores} allocation $A$ must also be \MMS{}, and we obtain an \MMS{} and \RM{} allocation.

On the other hand, let us assume that the rank-maximal allocation $A$ assigns a chore $o$ to an agent $i\in N$ such that for all $o'\in M$, $o \, \id_{i} \, o'$, that is, $o$ is the top-ranked chore for $i$. We show that  $o$ is a unique chore that is ranked first by every other agent, that is, for all $h\in N$ we have $\id_h(1) = o$.

By rank-maximality of $A$, it must be the case that all other agents also ranked chore $o$ as their top ranked chore, that is, for all agents $i'\in N$, for all chores $o'\in M$, $o \, \id_{i'}\, o'$.
To see why this is correct, suppose that there exists an agent $h$ who ranked $o$ lower in its importance ordering. Then, chore $o$ can be assigned to $j$. This improves the signature of the allocation, contradicting the rank-maximality of $A$. Recall that rank-maximality for the chores-only case is defined as an allocation that maximizes the number of agents who receive their lowest ranked chore, subject to that, maximizes the number of agents who receive their second lowest chore, and so on.
Therefore, we can repeatedly check for each agent that can potentially be assigned its top ranked chore, whether it receives any other chore under an $\RM{}$ allocation. 

Let $o$ be such chore that is ranked first by all agents.
In each iteration, fix an agent $i$ and assign chore $o$ to it. Remove agent $i$ and chore $o$ and compute a rank-maximal allocation, say $A'$, on the reduced instance.
Now compare the signature of $A$ with $A'$. If the first $m-1$ elements in the signature of $A$ and $A'$ are equal (i.e., signature associated with the best $m-1$ chores), then the instance admits an \MMS{} and rank-maximal allocation. Otherwise, go to the next iteration.
Return NO if all agents are checked above without finding an \MMS{} and rank-maximal allocation.
All steps above including computing a rank-maximal allocation can be done in polynomial time.
\end{proof}

\section{Mixed Items: Additional Results}
\label{sec:Mixed}

Recall that in the mixed-item setting, high-ranked goods in the importance ordering are desirable while high-ranked chores need to be avoided.

We have already seen in \cref{sec:Mixed} that an \EFX{} allocation may not always exist. Next, we introduce two relaxations of \EFX{} and discuss their existence and computation.

\subsection{\EFX{} Variations for Mixed Instances} \label{app:EFX_variation}

We define two relaxations of \EFX{} for mixed items. An allocation $A$ is
\begin{itemize}[wide,labelindent=0pt]
    \item {\bf Envy-free up to any good (\EFX{}-g)}, if for every pair of agents $i,h\in N$ such that $A_h^{i+} \neq \emptyset$, and for {\em every} good $g\in A_h^{i+}$, it holds that $A_i\succeq_i A_h\setminus\{g\}$, and
    \item {\bf Envy-free up to any chore (\EFX{}-c)}, if for every pair of agents $i,h\in N$ such that $A_i^{i-} \neq \emptyset$, for {\em every} chore $c\in A_i^{i-}$, it holds that $A_i\setminus\{c\}\succeq_i A_h$.
\end{itemize}
Notice that \EFX{} implies both \EFX{}-g and \EFX{}-c.

First, we show relaxing \EFX{} to \EFX{}-c is not sufficient to revive the non-existence result of \EFX{} (as shown in \cref{thm:EFX_counterexample}). In fact, the same counterexample introduced in the proof of \cref{thm:EFX_counterexample} can be used to illustrate that \EFX{}-c may not always exist even for objective instances. 

\begin{restatable}[\textbf{\EFX{}-c for mixed items}]{prop}{EFXcMixed}
There exists an instance with objective mixed items and lexicographic preferences that does not admit any \EFX{}-c allocation.
\label{prop:EFXc_Mixed}
\end{restatable}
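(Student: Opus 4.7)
The plan is to reuse the same four-agent, seven-item objective instance from the proof of Theorem~\ref{thm:EFX_counterexample} and rerun its three-case analysis, observing that each case already contradicts \EFX{}-c rather than merely the stronger \EFX{}. After assuming without loss of generality that $o_1^+\in A_1$, we split on whether $A_1$ contains chores only from $\{o_2^-,o_3^-,o_4^-\}$, only from $\{o_5^-,o_6^-,o_7^-\}$, or from both. The central observation is that the arguments in Theorem~\ref{thm:EFX_counterexample} already proceed by removing a \emph{chore} from the envier's own bundle rather than a good from the envied agent's bundle, so they transfer to \EFX{}-c almost verbatim.

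In Cases~1 and~2 the original proof explicitly rules out configurations by showing that envy persists ``even after some chore is removed from $A_i$''---exactly the \EFX{}-c chore-removal condition. This forces $A_2=\emptyset$ in Case~1 and $A_3=A_4=\emptyset$ in Case~2. Pigeonhole then puts at least two chores from $\{o_2^-,o_3^-,o_4^-\}$ into some other agent's bundle (Case~1) or the entire triple $\{o_5^-,o_6^-,o_7^-\}$ into $A_2$ (Case~2). That agent envies an empty bundle, and removing any one of its chores still leaves a chore witnessing the envy at a higher level of the importance order, violating \EFX{}-c.

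Case~3 is where \EFX{}-c aligns most cleanly with the original argument: pick $x\in A_1\cap\{o_2^-,o_3^-,o_4^-\}$ and $y\in A_1\cap\{o_5^-,o_6^-,o_7^-\}$; since $y$ is a chore, removing $y$ from $A_1$ is precisely the operation permitted to agent~$1$ by \EFX{}-c. The same pigeonhole then closes the case: at most two chores lie strictly above $x$ in agent~$1$'s importance order, yet each of the three remaining agents would need such a chore in its bundle to prevent agent~$1$ from strictly preferring that bundle to $A_1\setminus\{y\}$. The main obstacle, and the only sanity check worth flagging, is the vacuous clause in the \EFX{}-c definition when the envier has no chores: we must verify in every case that the designated envier actually satisfies $A_i^{i-}\neq\emptyset$, which is immediate from the bundle structures the case analysis produces.
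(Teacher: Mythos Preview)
Your proposal is correct and follows the paper's own approach, which is simply to reuse the four-agent, seven-item counterexample from \Cref{thm:EFX_counterexample}. In fact you supply more detail than the paper does: the paper merely asserts that the same instance works for \EFX{}-c, whereas you explicitly verify that every contradiction in the three-case analysis already proceeds by removing a chore from the envier's own bundle (and that the envier always has a chore to remove), so the violations are of \EFX{}-c and not just of the stronger \EFX{}.
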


\begin{remark}[\textbf{\EFX{}-c v. \EFX{}-g v. \MMS{}}]
Note that for mixed items, \EFX{}-c does not necessarily imply \EFX{}-g, \EFXg{} does not always imply \EFXc{}, and neither \EFXc{} nor \EFXg{} necessarily imply MMS.

Consider two agents with identical importance ordering as follows: $\id_1 = \id_2 =  \underline{o_1^{-}} \, \id \, \underline{o_2^{+}} \, \id \, o_3^{+} \, \id \, o_4^{+}$. If agent $1$ gets the underlined items (one chore and one good) and another agent gets the other two goods, then such an allocation is \EFX{}-c (up to the removal of chore $o_1^-$). However, this allocation is neither \EFX{}-g (removal of either good $o_3^+$ or $o_4^+$ does not eliminated $1$'s envy for $2$), nor \MMS{}.

Now, consider two agents with identical importance ordering as follows: $\id_1 = \id_2 =  \underline{o_1^{+}} \, \id \, \underline{o_2^{-}} \, \id \, o_3^{-} \, \id \, o_4^{-}$. Suppose agent $1$ gets the underlined items (one good and one chore) and agent $2$ gets the other two chores. Such an allocation is \EFX{}-g (up to the removal of good $o_1^+$), but is neither \EFX{}-c (removal of chore $o_3^-$ does not eliminated $2$'s envy for $1$), nor \MMS{} ($2$'s maximin share is $\emptyset$ by \Cref{prop:MMS_mixed_char}).
\end{remark}

\subsection{Envy-Freeness Up to One Item}

For objective mixed items, although an \EFX{} allocation may not exist, an \EF{1} allocation can be computed in polynomial time through the double round-robin algorithm~\cite{ACI+19fair}.
The algorithm proceeds in two phases. First, agents pick chores in a round robin fashion according to a fixed order. Second, the order is reversed, and agents pick goods by round-robin according to this reversed order.

Unfortunately, the double round-robin algorithm algorithm fails to guarantee \EFX{} even for two agents with objective goods and chores. 
Consider two identical agents with preferences: $o^+_1 \id {o}^-_2 \id o^+_3 \id {o}^-_4 \id o^+_5$. The double round-robin with priority ordering of $\sigma=(1,2)$ will assign $\{o^+_3, {o}^-_4\}$ to agent 1 and $\{o^+_1, o^+_5, {o}^-_2 \}$ to agent 2. This allocation is not \EFX{} because agent 1's envy cannot be eliminated by removing, say, the good $o^+_5$. 

The existence and computation of \EF{1} allocations along with \PO{} even for objective mixed items remains an open problem. Moreover, determining whether an instance admits an \EFX{} allocation (with or without \PO{}) remains open for (objective and subjective) mixed items.

\end{document}